\newcommand{\alg}[1]{\mathfrak{#1}}
\newcommand{\mc}[1]{\mathcal{#1}}
\newcommand{\cat}[1]{\textrm{\bfseries #1}}
\newcommand{\qd}[1]{\mathcal{D}(#1)}
\newcommand{\qdz}{\qd{\mathbb{Z}_2}}
\newcommand{\End}{\operatorname{End}}
\newcommand{\id}{\operatorname{id}}
\newcommand{\str}{\operatorname{star}}
\newcommand{\plaq}{\operatorname{plaq}}
\newcommand{\Ad}{\operatorname{Ad}}
\newcommand{\spec}{\operatorname{Sp}}
\newcommand{\dist}{\operatorname{dist}}
\newcommand{\supp}{\operatorname{supp}}
\newcommand{\norm}[1]{\lVert#1\rVert}
\newcommand{\Rep}{\cat{Rep}}
\newcommand{\bonds}{{\bf B}}
\newcommand{\set}[1]{\left\{ #1 \right\}}
\DeclareMathOperator*{\wlim}{w-lim}
\theoremstyle{plain}
\newtheorem{theorem}{Theorem}
\numberwithin{theorem}{section}
\newtheorem{lemma}[theorem]{Lemma}
\newtheorem{proposition}[theorem]{Proposition}
\newtheorem{corollary}[theorem]{Corollary}
\theoremstyle{definition}
\newtheorem{definition}[theorem]{Definition}
\theoremstyle{remark}
\newtheorem{remark}[theorem]{Remark}
\numberwithin{equation}{section}
\title{Localized endomorphisms in Kitaev's toric code on the plane}
\author{Pieter Naaijkens\footnote{E-mail: \texttt{p.naaijkens@math.ru.nl}}\\%
Institute for Mathematics, Astrophysics and Particle Physics\\ Radboud University Nijmegen, The Netherlands}
\begin{document}
\maketitle
\begin{abstract}
We consider various aspects of Kitaev's toric code model on a plane in the $C^*$-algebraic approach to quantum spin systems on a lattice. In particular, we show that elementary excitations of the ground state can be described by localized endomorphisms of the observable algebra. The structure of these endomorphisms is analyzed in the spirit of the Doplicher-Haag-Roberts program (specifically, through its generalization to infinite regions as considered by Buchholz and Fredenhagen). Most notably, the statistics of excitations can be calculated in this way. The excitations can equivalently be described by the representation theory of $\qdz$, i.e., Drinfel'd's quantum double of the group algebra of $\mathbb{Z}_2$.
\end{abstract}

\section{Introduction}
Kitaev's quantum double model~\cite{MR1951039} has attracted much interest in recent years. One of its interesting features is that the model has anyonic excitations. Such models may be relevant to a new approach to quantum computing, where topological properties of a system are used to do computations (see~\cite{MR2443722,Wang} for reviews). Here we consider the simplest case of this model, corresponding to the group $\mathbb{Z}_2$. This model is often called the \emph{toric code}, although we will consider it on the plane instead of on a torus. This model is not powerful enough for applications to quantum computing, but it has interesting properties nonetheless. In particular, it has anyonic excitations (albeit \emph{abelian} anyons).

The toric code has been studied by many authors by now, for example~\cite{MR1951039,MR2345476,MR1924451}. We take a different viewpoint, namely that of local quantum physics. Indeed, the model can be discussed in the $C^*$-algebraic approach to quantum spin systems~\cite{MR887100,MR1441540}. We show that single excitations can be described by states that cannot be distinguished from the ground state when restricted to measurements outside a cone extending to infinity. This structure is familiar from the algebraic approach to quantum field theory~\cite{MR1405610}, in particular when massive particles are considered~\cite{MR660538}.

The states describing these single excitations lead, via the GNS construction, to inequivalent representations (superselection sectors) of the observable algebra. In fact, these states fulfill a certain selection criterion, pertaining to the fact that they are localized and transportable. The analysis of such representations is central to the Doplicher-Haag-Roberts (DHR) program in algebraic quantum field theory~\cite{MR0297259,MR0334742}. In particular, it turns out that these representations can equivalently be described by endomorphisms of the observable algebra. This description leads in a natural way to the notion of composition of excitations and to statistics of (quasi)particles from first principles. This analysis can be carried out completely for the toric code on the plane. 

A related approach is taken for example in~\cite{MR1463825,MR1234107}, where the authors consider $G$-spin (or, more generally, Hopf-$C^*$) chains. There, excitations localized in \emph{bounded} regions (satisfying the so-called DHR criterion) are considered. Since every injective endomorphism of a finite dimensional algebra is in fact an automorphism, the authors consider \emph{amplimorphisms} to obtain non-abelian charges. Here, we take a different approach, and look instead at endomorphisms localized in certain infinite ``cone'' regions. In our model the irreducible endomorphisms are all automorphisms, but since we consider excitations localized in \emph{infinite} regions, finite dimensionality of the algebras is not an obstruction any more. The idea of construction charged sectors localized in infinite regions is not new: it is used, for example, in the work of Fredenhagen and Marcu~\cite{MR728449}.

Discrete gauge theories in $d=2+1$ show similar algebraic features (i.e., fusion and braiding) of anyons~\cite{MR1163527}. Similar models have been studied in the constructive approach to quantum fields in lattice gauge theory, in particular for the gauge group $\mathbb{Z}_2$ in~\cite{MR728449,MR925923}. These results have been generalized to the group $\mathbb{Z}_N$ in~\cite{MR1341694,MR1604344}. Although the setting considered here is different, some of the methods used are similar. A field theoretic interpretation of the model discussed here can be found in Section 4 of~\cite{MR1951039}.

The paper is organized as follows. In Section~\ref{sec:model}, we recall the model and discuss the ground state in the $C^*$-algebraic setting. In Section~\ref{sec:auto} localized automorphisms describing excitations are described. Section~\ref{sec:fusion} is devoted to fusion and statistics of excitations. Then follows a discussion of operator-algebraic aspects of von Neumann algebras generated by observables localized in cones. Finally, in the last section we prove that the excitations are described by the representation theory of the quantum double $\qd{\mathbb{Z}_2}$.

\section{The model}
\label{sec:model}
We describe Kitaev's model in the $C^*$-algebraic framework for quantum lattice systems~\cite{MR2345476}. Consider a square $\mathbb{Z}^2$ lattice. On each bond of the lattice, i.e. an edge between two vertices of distance 1, there is a spin-1/2 particle. That is, at each bond $b$ the local state space is $\mc{H}_{\set{b}} = \mathbb{C}^2$, with observables $\alg{A}(\{b\}) = M_2(\mathbb{C})$. The set of bonds will be denoted by $\bonds$. If $\Lambda \subset \bonds$ is a finite set, $\alg{A}(\Lambda)$ is the algebra of observables living on the bonds of $\Lambda$. It is the tensor product of the observable algebras acting on the individual bonds of $\Lambda$. If $\Lambda_1 \subset \Lambda_2$ there is an obvious inclusion of corresponding algebras, by identifying $\mc{H}_{\Lambda_2} \cong \mc{H}_{\Lambda_1} \otimes \mc{H}_{\Lambda_2 \setminus \Lambda_1}$. This defines a local net of algebras, with respect to the inclusion $\alg{A}(\Lambda_1) \hookrightarrow \alg{A}(\Lambda_2)$ for $\Lambda_1 \subset \Lambda_2$. Define
\[
	\alg{A}_{loc} = \bigcup_{\Lambda_f \subset \bonds}  \alg{A}(\Lambda_f),
\]
the algebra of local observables. The union is over the finite subsets $\Lambda_f$ of $\bonds$. The algebra $\alg{A}$ of quasi-local observables is the completion of $\alg{A}_{loc}$ in the norm topology, turning it into a $C^*$-algebra. Alternatively, one can see it as the inductive limit of the net $\Lambda \mapsto \alg{A}(\Lambda)$ in the category of $C^*$-algebras. Note that $\alg{A}$ is a uniformly hyperfinite (UHF) algebra~\cite{MR887100}. The algebra of observables localized in an arbitrary subset $\Lambda$ of $\bonds$ is defined as
\[
\alg{A}(\Lambda) = \overline{\bigcup_{\Lambda_f \subset \Lambda} \alg{A}(\Lambda_f)}^{\norm{\cdot}},
\]
where the union is again over finite subsets. An operator $A$ is said to have support in $\Lambda$, or to be localized in $\Lambda$, if $A \in \alg{A}(\Lambda)$. The set $\supp(A) \subset \bonds$ is the smallest subset in which $A$ is localized. 

\begin{figure}
  \begin{center}
  \includegraphics{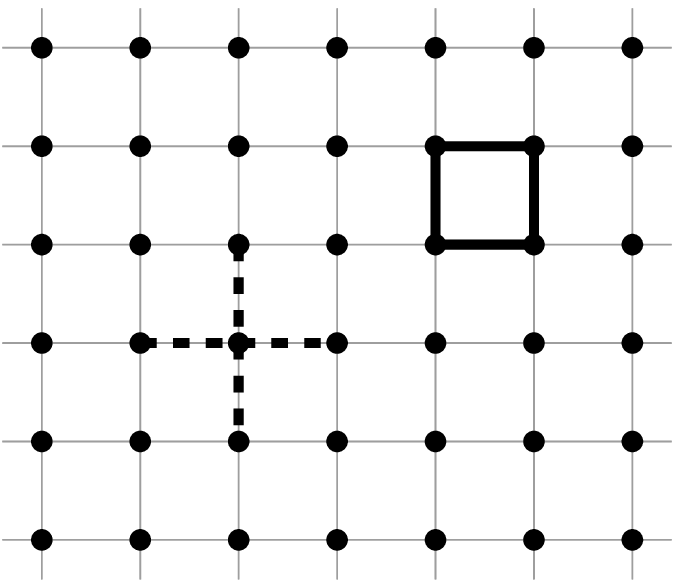}
\end{center}
\caption{The $\mathbb{Z}^2$ lattice. The gray bonds each carry a spin-1/2 particle. A star (dashed lines) and plaquette (thick lines) are shown.}
\label{fig:lattice}
\end{figure}
The Hamiltonian of Kitaev's model is defined in terms of plaquette and star operators, each supported on four bonds (see Figure~\ref{fig:lattice}). If $s$ is a point on the lattice, $\str(s)$ denotes the star based at $s$. Similarly, $\plaq(p)$ are the bonds enclosing a plaquette $p$. The corresponding star and plaquette operators are given by
\[
A_s = \bigotimes_{j \in \str(s)} \sigma_j^x, \qquad B_p = \bigotimes_{j \in \plaq(p)} \sigma_j^z,
\]
where the tensor product is understood as having Pauli matrices $\sigma^x$ (resp. $\sigma^z$) in places $j$, and unit operators in all other positions. It is then straightforward to check that for all stars $s$ and plaquettes $p$, we have
\[
	[A_s, B_p] = 0.
\]
These operators are used to define the local Hamiltonians. If $\Lambda_f \subset \bonds$ finite, the associated local Hamiltonian is
\[
H_{\Lambda_f} = -\sum_{\str(s) \subset \Lambda_f}  A_s -\sum_{\plaq(b) \subset \Lambda_f} B_p.
\]
There is a natural action of $\mathbb{Z}^2$ on the quasi-local algebra, acting by translations. Denote this action by $\tau_x$ for $x \in \mathbb{Z}^2$. Note that the interactions are of finite range, and moreover, they are translation invariant. Hence, there exists an action $\alpha_t$ of $\mathbb{R}$ on $\alg{A}$ describing the dynamics of the system~\cite{MR1441540}, as well as a derivation $\delta$ that is the generator of the dynamics. For observables localized in a finite set $\Lambda$, the action of this derivation is given by\footnote{To be a bit more precise: the derivation $\delta$ defined here is norm-closable and it is the closure $\overline{\delta}$ that generates the dynamics~\cite[Thm. 6.2.4]{MR1441540}. By a density argument, it is often enough to consider $\delta$ instead of its closure.}
\[
\delta(A) = i[H_\Lambda, A], \quad A \in \alg{A}(\Lambda).
\]
By definition, ground states for these dynamics are states $\omega$ of $\alg{A}$ such that $-i \omega(X^* \delta(X)) \geq 0$ for all $X \in \alg{A}_{loc}$.

In~\cite{MR2345476} it is shown that the model admits a unique ground state, which can be computed explicitly. Since we will need the argument later, for the convenience of the reader we summarize the results. The following lemma is crucial in the computation of the ground state. The proof is a straightforward application of the Cauchy-Schwartz inequality and the fact that for $A$ positive, $\omega(A) = 0$ implies that $\omega(A^2) = 0$.
\begin{lemma}
	\label{lem:state}
	Let $\omega$ be a state on a $C^*$-algebra $\alg{A}$, and suppose $X=X^*$ such that $X \leq I$ and $\omega(X) = 1$. Then $\omega(XY) = \omega(YX) = \omega(Y)$ for any $Y \in \alg{A}$.
\end{lemma}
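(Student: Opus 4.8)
The plan is to reduce everything to the single positive element $P := I - X$. Since $X \leq I$ we have $P \geq 0$, since $X = X^*$ we have $P = P^*$, and since $\omega$ is a state, $\omega(P) = \omega(I) - \omega(X) = 1 - 1 = 0$. Subtracting $\omega(Y)$ from both sides, the two desired identities $\omega(XY) = \omega(Y)$ and $\omega(YX) = \omega(Y)$ become equivalent to the vanishing statements $\omega(PY) = 0$ and $\omega(YP) = 0$ for every $Y \in \alg{A}$. It therefore suffices to prove these two.

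First I would record the auxiliary fact quoted in the statement: for a positive element $A$ one has $\omega(A) = 0 \Rightarrow \omega(A^2) = 0$. This follows from the operator inequality $A^2 \leq \norm{A}\, A$ (valid because $\norm{A} I - A \geq 0$ and conjugating by $A^{1/2}$ preserves positivity, giving $\norm{A} A - A^2 \geq 0$), combined with positivity and linearity of $\omega$: namely $0 \leq \omega(A^2) \leq \norm{A}\,\omega(A) = 0$. Applying this to $A = P$ yields $\omega(P^2) = 0$.

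Then I would invoke the Cauchy--Schwarz inequality for the positive semidefinite sesquilinear form $(B, C) \mapsto \omega(B^* C)$ induced by the state. Using $P = P^*$, this gives $|\omega(PY)|^2 = |\omega(P^* Y)|^2 \leq \omega(P^* P)\,\omega(Y^* Y) = \omega(P^2)\,\omega(Y^* Y) = 0$, hence $\omega(PY) = 0$. The second identity is symmetric: since $(P Y^*)^* = Y P$, we have $\omega(YP) = \overline{\omega(P Y^*)}$, and applying the first identity with $Y^*$ in place of $Y$ gives $\omega(YP) = 0$ as well. Undoing the substitution $P = I - X$ recovers $\omega(XY) = \omega(YX) = \omega(Y)$.

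The argument is essentially mechanical, so I do not expect a serious obstacle; the only point requiring care is the step $\omega(P) = 0 \Rightarrow \omega(P^2) = 0$, which is precisely where positivity of $P$ (rather than mere self-adjointness) enters, through the inequality $P^2 \leq \norm{P}\, P$. Everything else is a direct application of the state axioms and Cauchy--Schwarz.
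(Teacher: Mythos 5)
Your proof is correct and follows exactly the route the paper sketches: reduce to the positive element $P = I - X$ with $\omega(P)=0$, use the fact that $\omega(A)=0$ implies $\omega(A^2)=0$ for $A\geq 0$, and conclude via Cauchy--Schwarz. The paper gives only a one-line indication of this argument, and your write-up fills in the details (including the justification of $A^2 \leq \norm{A}A$) correctly.
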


Consider now the abelian algebra $\alg{A}_{XZ}$ generated by the star and plaquette operators. This algebra is in fact \emph{maximal abelian}: $\alg{A}_{XZ}' \cap \alg{A} = \alg{A}_{XZ}$~\cite{MR2345476}. Let $\omega$ be the state on $\alg{A}_{XZ}$ such that $\omega(A_s) = \omega(B_p) = 1$ for all plaquette and star operators.\footnote{That such a state exists can be seen by mapping the model to an Ising spin model.} With help of the lemma, this completely determines the state on $\alg{A}_{XZ}$. Moreover, it minimizes the local Hamiltonians, hence any ground state of the system must be equal to $\omega$ if restricted to $\alg{A}_{XZ}$. The goal is then to show that this state has a unique extension to $\alg{A}$.

Let $\omega_0$ be an extension of $\omega$ to the algebra $\alg{A}$.\footnote{By the Hahn-Banach theorem an extension $\omega_0$ of $\omega$ to $\alg{A}$ always exists.} Using the lemma one can show that for $X,Y \in \alg{A}_{loc}$,
\begin{equation}
  \label{eq:derivation}
  \begin{split}
  -i \omega_0(X^* \delta(Y)) = \sum_{s} &(\omega_0(X^*Y)  - \omega_0(X^* A_s Y))  \\ &+ \sum_{p} (\omega_0(X^*Y) - \omega_0(X^* B_p Y)),
  \end{split}
\end{equation}
where the variable $s$ runs over all stars in the lattice, and $p$ over all plaquettes. If one takes $X = Y$, an application of the Cauchy-Schwartz inequality shows that the right hand side is positive, hence $\omega_0$ is a ground state. 

As mentioned before, in the model at hand this extension is actually unique. In fact, let $X$ be a monomial in the Pauli matrices, say $X = \prod_{i \in \Lambda} \sigma_i^{k_i}$ where $\Lambda \subset \bonds$ is finite and $k_i = x,y$ or $z$. Then $\omega_0(X)$ is non-zero if and only if $X$ is a product of star and plaquette operators, in which case it is $1$. This completely determines the state $\omega_0$, since the value of $\omega_0(X)$ can be computed by a repeated application of Lemma~\ref{lem:state}. For example, to make plausible why $\omega_0$ is zero if $X$ is not a product of star and plaquette operators, consider an operator of the form $A = \sigma^x_j$ for some bond $j$. Then there is a plaquette $p$ such that $j \in \plaq(p)$. But then
\[
	\omega_0(A) = \omega_0(B_p \sigma_j^x B_p) = -\omega_0(A).
\]
In particular, for a local observable $A$ that is a monomial in the Pauli matrices, the set of bonds where $A$ has a $\sigma^x$ component should have the property that the intersection with each plaquette $\plaq(p)$ has an even number of elements. Continuing in this manner, one can show that indeed only products of star and plaquette operators lead to non-zero expectation values~\cite{MR2345476}.

\begin{proposition}
	\label{prop:gstate}
  There is a unique (hence pure) ground state $\omega_0$. This state is translation invariant. The self-adjoint $H_0$ generating the dynamics in the GNS representation $(\pi_0, \mc{H}_0, \Omega)$, when normalized such that $H_0 \Omega = 0$, satisfies $\spec(H_0) \subset \{0\} \cup [4, \infty)$.
\end{proposition}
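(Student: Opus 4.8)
The three claims all flow from a single \emph{rigidity} property: any state $\rho$ on $\alg{A}$ with $\rho(A_s) = \rho(B_p) = 1$ for every star $s$ and plaquette $p$ necessarily equals $\omega_0$. To see this I would note that each $A_s$, $B_p$ is self-adjoint with $A_s, B_p \leq I$ and value $1$ under $\rho$, so Lemma~\ref{lem:state} applies and determines $\rho$ on every monomial in the Pauli matrices, by the same manipulation used before the proposition (for instance $\rho(\sigma^x_j) = \rho(B_p \sigma^x_j B_p) = -\rho(\sigma^x_j) = 0$ whenever $j \in \plaq(p)$); since such monomials span $\alg{A}_{loc}$, which is dense in $\alg{A}$, this fixes $\rho = \omega_0$. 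Uniqueness is then immediate, because any ground state restricts to $\omega$ on the maximal abelian algebra $\alg{A}_{XZ}$ and hence takes the value $1$ on all stars and plaquettes. Purity follows in the same breath: if $\omega_0 = \lambda\rho_1 + (1-\lambda)\rho_2$ with $0 < \lambda < 1$, then $1 = \omega_0(A_s) = \lambda\rho_1(A_s) + (1-\lambda)\rho_2(A_s) \leq 1$ forces $\rho_i(A_s) = 1$, and likewise $\rho_i(B_p) = 1$, so $\rho_1 = \rho_2 = \omega_0$. Finally $\omega_0 \circ \tau_x$ is a state which, because $\tau_x$ merely permutes the stars and plaquettes, again takes the value $1$ on all of them; rigidity gives $\omega_0 \circ \tau_x = \omega_0$, i.e. translation invariance.

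The real content is the spectral estimate. In the GNS space set $Q_s = \tfrac12(I - \pi_0(A_s))$ and $R_p = \tfrac12(I - \pi_0(B_p))$; these are mutually commuting orthogonal projections (using $A_s^2 = B_p^2 = I$ and $[A_s, B_p] = [A_s, A_{s'}] = [B_p, B_{p'}] = 0$), and $\omega_0(A_s) = \omega_0(B_p) = 1$ gives $Q_s \Omega = R_p \Omega = 0$. The normalized local Hamiltonian $\hat H_\Lambda = \sum_{\str(s)\subset\Lambda}(I - A_s) + \sum_{\plaq(p)\subset\Lambda}(I - B_p)$ differs from $H_\Lambda$ by a constant, so it generates the same derivation $\delta$, while in addition $\pi_0(\hat H_\Lambda)\Omega = 0$ and $\pi_0(\hat H_\Lambda) = 2\sum_s Q_s + 2\sum_p R_p$. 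For a local monomial $A$ and $\Lambda$ large enough to contain $\supp(A)$ together with every star and plaquette meeting it, the defining relation between $H_0$ and $\delta$ yields $H_0\,\pi_0(A)\Omega = \pi_0([\hat H_\Lambda, A])\Omega = \pi_0(\hat H_\Lambda)\pi_0(A)\Omega$, the last step using $\pi_0(\hat H_\Lambda)\Omega = 0$. Since $\pi_0(A_s)\pi_0(A)\Omega = \pm\pi_0(A)\Omega$ according to whether $A$ commutes or anticommutes with $A_s$ (and likewise for $B_p$), only finitely many $Q_s, R_p$ act nontrivially and $\pi_0(A)\Omega$ is an eigenvector, $H_0\,\pi_0(A)\Omega = 2\,n(A)\,\pi_0(A)\Omega$, where $n(A)$ counts the stars and plaquettes anticommuting with $A$.

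It remains to observe that $n(A)$ is always even. Let $Z \subset \bonds$ be the bonds on which $A$ carries a $\sigma^z$ or $\sigma^y$ factor; then $A$ anticommutes with $A_s$ exactly when $|Z \cap \str(s)|$ is odd, and since every bond lies in precisely two stars, $\sum_s |Z \cap \str(s)| = 2|Z|$ is even, so the number of contributing stars is even. The identical double-counting, using that every bond borders exactly two plaquettes, handles the $B_p$. Hence every eigenvalue $2\,n(A)$ lies in $\{0, 4, 8, \dots\}$, with $n(A) = 0$ forcing $A \in \alg{A}_{XZ}$ (maximal abelian) and thus $\pi_0(A)\Omega \in \mathbb{C}\Omega$. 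The vectors $\pi_0(A)\Omega$ span $\mc{H}_0$ and are joint eigenvectors of the commuting family $\{Q_s, R_p\}$, so they constitute a total set of $H_0$-eigenvectors; as $H_0$ is self-adjoint this forces pure point spectrum equal to the closure of the eigenvalue set, whence $\spec(H_0) \subset \{0\} \cup [4, \infty)$.

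The step I expect to demand the most care is the rigorous identification of the abstract generator $H_0$ (produced by Stone's theorem from the GNS-implemented dynamics) with the commuting local terms: one must confirm that each $\pi_0(A)\Omega$ genuinely lies in the domain of $H_0$ and that the eigenvalue equation above holds there, which rests on the local observables being differentiable for the dynamics so that $H_0\,\pi_0(A)\Omega = -i\,\pi_0(\delta(A))\Omega$ is justified. Once this is secured, completeness of the eigenvectors closes off any possibility of continuous spectrum inside $(0,4)$, and the gap of $4$ is an immediate consequence of the parity count; the combinatorics itself is elementary.
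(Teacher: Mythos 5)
Your proof is correct, and for the spectral statement it takes a genuinely different route from the paper's. The paper first shows that $\spec(H_0)\subset\{0\}\cup[M,\infty)$ is equivalent to the quadratic-form inequality $-i\omega_0(X^*\delta(X))\geq M(\omega_0(X^*X)-|\omega_0(X)|^2)$ for $X\in\alg{A}_{loc}$ (using that $\pi_0(\alg{A}_{loc})\Omega$ is a core for $H_0$), and then verifies it for $M=4$ by expanding $X$ into Pauli monomials, grouping them by the number $n_i$ of anticommuting stabilizers, and using the orthogonality $\omega_0(X_i^*X_j)=0$ when $n_i\neq n_j$ together with $n_i\geq 2$. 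You instead diagonalize $H_0$ outright: the monomial vectors $\pi_0(A)\Omega$ are joint eigenvectors of the commuting projections $Q_s,R_p$ with $H_0\pi_0(A)\Omega=2n(A)\pi_0(A)\Omega$, and the double-counting argument (each bond lies in exactly two stars and two plaquettes) shows $n(A)$ is even. This is a sharper version of the paper's remark that violated stabilizers come in pairs, and it buys you more than the stated claim: $H_0$ has pure point spectrum contained in $4\mathbb{Z}_{\geq 0}$, whereas the paper's quadratic-form method only yields the gap. The identity $-i\omega_0(X_i^*\delta(X_j))=2n_i\,\omega_0(X_i^*X_j)$ in the paper is essentially your eigenvalue equation read in matrix elements, so the combinatorial core is shared; the difference is in how the operator-theoretic conclusion is extracted. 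The one point you should make explicit is the standard fact that a self-adjoint operator admitting a total family of eigenvectors is diagonal in that family (the restriction of $H_0$ to the algebraic span of the eigenvectors is essentially self-adjoint, since its closure is a self-adjoint operator contained in $H_0$ and self-adjoint operators admit no proper self-adjoint extensions); together with the domain fact $H_0\pi_0(A)\Omega=-i\pi_0(\delta(A))\Omega$ for $A\in\alg{A}_{loc}$, which you correctly flag and which the paper also uses, this closes the argument. Your treatment of uniqueness, purity, and translation invariance via the rigidity of states taking the value $1$ on all stars and plaquettes matches the paper's discussion preceding the proposition.
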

\begin{proof}
We have already discussed existence and uniqueness of $\omega_0$. Translations map star operators into star operators, and plaquette operators into plaquette operators, hence the ground state is translation invariant.

Since $\omega_0$ is a ground state, it is invariant under the dynamics and the time evolution can be implemented by a strongly continuous group $t \mapsto U_t$ of unitaries. We can choose $U_t$ such that $U_t \Omega = \Omega$. It follows that there is an (unbounded) self-adjoint $H_0$ such that $U_t = e^{it H_0}$ and $H_0 \Omega = 0$.

We claim that $\spec{H_0} \subset \{0\} \cup [M, \infty)$ is equivalent to
\begin{equation}
  \label{eq:spectrum}
  -i\omega_0(X^* \delta(X)) \geq M \left(\omega_0(X^*X) - |\omega_0(X)|^2\right),
\end{equation}
for all $X \in \alg{A}_{loc}$, because the ground state is non-degenerate. Indeed, since $H_0 \Omega = 0$ with $\Omega$ the GNS vector, the inequality can equivalently be written as $\langle X \Omega, H_0 X \Omega \rangle \geq M(\norm{X \Omega}^2 - |\langle \Omega, X \Omega\rangle|^2)$ because $\langle X \Omega , H_0 X \Omega \rangle = \omega_0(X^* \delta(X))$. Here we have identified $X$ with its image $\pi_0(X)$, which is possible since $\pi_0$ is a representation of a UHF (hence simple) algebra. On the other hand, the spectrum condition is equivalent to $H_0 + M P_\Omega \geq M I$, where $P_\Omega$ is the projection on the subspace spanned by $\Omega$ (by non-degeneracy, this is the spectral projection corresponding to $\{0\}$). This is equivalent to the condition
\[
	\langle \Psi, (H_0+MP_\Omega) \Psi \rangle = \langle \Psi, H_0 \Psi \rangle + M |\langle \Omega, \Psi \rangle|^2 \geq M \norm{\Psi}^2
\]
for all $\Psi$ in the domain $D(H_0)$ of $H_0$. But $\pi(\alg{A}_{loc}) \Omega$ is a core for $H_0$ (compare with the proof of~\cite[Prop. 5.3.19]{MR1441540}), hence it is enough to check the inequality for $\Psi = X \Omega$ with $X \in \alg{A}_{loc}$. This shows that inequality~\eqref{eq:spectrum} is equivalent to the assertion on the spectrum of $H_0$.

We now show that inequality~\eqref{eq:spectrum} indeed holds for $M=4$. As a first step, we claim that if either $X$ or $Y$ is a local operator in $\alg{A}_{XZ}$,
\begin{equation}
	-i\omega_0(X^* \delta(Y)) = 4 \left(\omega_0(X^*Y) - \overline{\omega_0(X)} \omega_0(Y) \right) = 0.
\end{equation}
Under these assumptions, the left-hand side can be seen to vanish by equation~\eqref{eq:derivation} and Lemma~\ref{lem:state}. As for the right hand side, consider the case where $X \in \alg{A}_{XZ}$ (the other case is proved similarly). In this case, $X = \sum_{i} \lambda_i X_i$ where each $X_i$ is a product of star and plaquette operators. Using Lemma~\ref{lem:state} again, it follows that $\omega_0(X^*Y) = \sum_i \overline{\lambda_i} \omega_0(Y) = \overline{\omega_0(X)} \omega_0(Y)$, proving the claim.

Now consider the general case, with a local operator $X = X_{XZ} + \sum_{i \in I} \lambda_i X_i$, where $X_0 \in \alg{A}_{XZ}$ and each $X_i$ (with $i$ in some finite set $I$) is a monomial in the Pauli matrices such that $X_i \notin \alg{A}_{XZ}$. Since $X_i \notin \alg{A}_{XZ}$, there is some $A_s$ or $B_p$ that does not commute with $X_i$. Suppose this is $A_s$. Since $X_i$ is a monomial in the Pauli matrices, this actually implies that $\{A_s, X_i\} = 0$, in other words, they anti-commute. Note that this implies that $\omega_0(X_i)$ is zero for each $i \neq 0$, since by the same trick as used before it follows that $\omega_0(X_i) = - \omega_0(X_i)$. By the remarks above, equation~\eqref{eq:spectrum} reduces to
\begin{equation}
	\label{eq:derivac}
	-i \sum_{i,j \in I} \omega_0(X_i^* \delta(X_j)) \geq 4 \sum_{i,j \in I} \omega_0(X^*_i X_j).
\end{equation}
Note that for each $X_i$, there is a finite number $n_i$ of plaquette and star operators that anti-commute with $X_i$. In fact, $n_i \geq 2$, since if there is for example one star operator that does not commute with $X_i$, there must necessarily be another one with this property.\footnote{This amounts to saying that excitations always exist in pairs in finite regions in Kitaev's model~\cite{MR1951039}.} Note that if $n_i \neq n_j$, there is a star or a plaquette operator that commutes with $X_i$ and anti-commutes with $X_j$ (or vice versa). Consequently, $\omega_0(X_i^*X_j) = 0$. 

Now define for each integer $k$ the finite set $I_k = \{ i \in I : n_i = k\}$ and the operators $\widetilde{X}_k = \sum_{i \in I_k} X_i$, with the understanding that $\widetilde{X}_k = 0$ if $I_k$ is the empty set. By the considerations above, it then follows that $\sum_{i,j \in I} \omega_0(X_i^* X_j) = \sum_{k \geq 2} \omega_0(\widetilde{X}_k^* \widetilde{X}_k)$, since $n_i \geq 2$ for each $i \in I$. On the other hand, from equation~\eqref{eq:derivation} it follows that $-i \omega_0(X_i^* \delta(X_j)) = 2 n_i \omega(X_i^* X_j)$. It then follows that the left hand side of the inequality~\eqref{eq:derivac} is equal to $2 \sum_{k \geq 2} k \omega_0(\widetilde{X}_k^* \widetilde{X}_k)$. From this it easily follows that inequality~\eqref{eq:derivac} holds.
\end{proof}

The spectrum condition has far-reaching consequences for the correlation functions; for example, it implies that ground state correlations decay exponentially~\cite{MR2217299}.

\section{Localized endomorphisms}
\label{sec:auto}
In this section we describe localized excitations of the system. In his model, Kitaev associates certain string operators to paths on the lattice (or the dual lattice). These string operators create excitations at the endpoints of the paths~\cite{MR1951039}. The idea is to consider a \emph{single} excitation by moving one of the excitations to infinity, as is done for example in Ref.~\cite{MR728449}. Before this construction is introduced, we give some preliminary definitions.

\begin{figure}
  \begin{center}
  \includegraphics{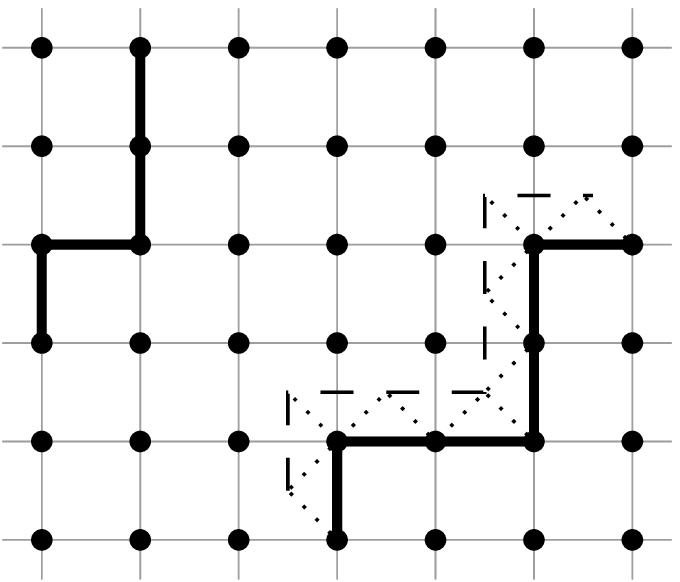}
\end{center}
\caption{A path on the lattice (left black line) and a ribbon. The dots on the ribbon indicate a combined site, i.e. a plaquette with one of its vertices.}
\label{fig:ribbon}
\end{figure}
By a \emph{site}, we mean either a point on the lattice, a plaquette, or a pair of a plaquette with one of its vertices (i.e., a \emph{combined site}). Sites can be seen as the places where excitations can be introduced. Between two sites of the same type, we can consider paths. A path between two points on the lattice is just a path consisting of bonds of the lattice. A path between plaquettes can be viewed as a path on the dual lattice. A path between combined sites is called a \emph{ribbon} (see Figure~\ref{fig:ribbon}). One can think of a ribbon as being composed by a path on the lattice and one on the dual lattice.

\begin{definition}
  Let $\gamma$  be a finite path between two sites. If $\gamma$ is a path on the lattice, define the corresponding \emph{string operator} as $\Gamma_Z^\gamma = \bigotimes_{i \in \gamma} \sigma^z_i$. If it is a path on the dual lattice, the string operator is defined as $\Gamma_X^\gamma = \bigotimes_{i \in \gamma} \sigma^x_i$. Here $i \in \gamma$ means that $i$ is a bond that intersects the path on the dual lattice. Finally, a string operator corresponding to a ribbon is a combination of these constructions. That is, $\Gamma^\gamma_Y = \Gamma^{\gamma_1}_X \Gamma^{\gamma_2}_Z$, where $\gamma_1$ is the path on the lattice and $\gamma_2$ the path on the dual lattice, corresponding to the ribbon.
\end{definition}
It should be clear from the context whether we consider paths on the lattice, paths on the dual lattice, or ribbons. We say that a path or the corresponding string operator is of type X,Y or Z, corresponding to the subscripts used in the definition. 

We first make some observations that will be used later. Consider a plaquette $p$. The corresponding plaquette operator $B_p$ is just the string operator $\Gamma^\gamma_Z$, where $\gamma$ is the closed path consisting of the edges of the plaquette. If $p'$ is, for example, a plaquette adjacent to $p$, $B_p B_{p'}$ is the string operator corresponding to the closed path on the outer edges of the two plaquettes. Continuing this way, it follows that the string operator corresponding to a closed path on the lattice is the product of plaquette operators corresponding to the plaquettes enclosed by the path. The reader will have no trouble checking that similarly a string operator corresponding to a closed path on the dual lattice is the product of all star operators corresponding to the stars enclosed by the path.

The idea now is to study ``elementary'' excitations by first considering a pair of excitations (created by a string operator), and then move one of the excitations to infinity. This technique is also used in, for instance, lattice gauge theory~\cite{MR1604344,MR728449}. We show that in Kitaev's model such excitations can be described by localized automorphisms of $\alg{A}$.
\begin{definition}
Let $\rho$ be a $*$-endomorphism of $\alg{A}$. Let $\Lambda \subset \bonds$ be arbitrary. Then $\rho$ is said to be \emph{localized in $\Lambda$} if $\rho(A) = A$ for all $A \in \alg{A}(\Lambda^c)$. Here $\Lambda^c$ denotes the complement of any subset $\Lambda$ of $\bonds$. 
\end{definition}

\begin{figure}
  \begin{center}
  \includegraphics{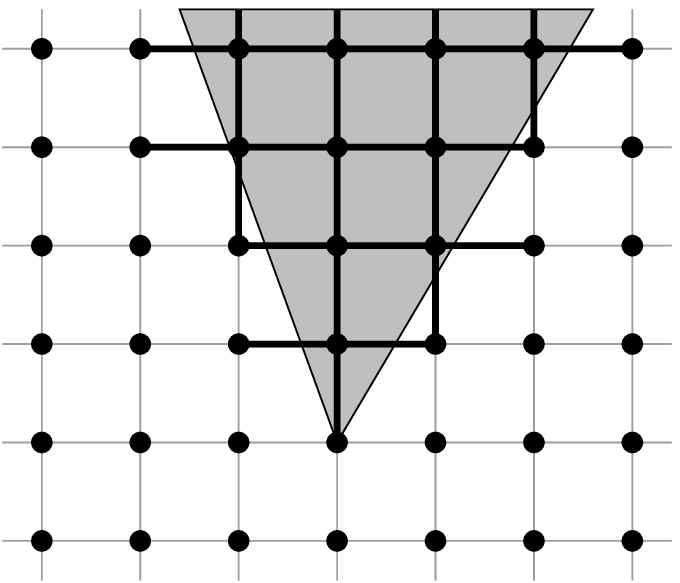}
\end{center}
\caption{Example of a cone (bold bonds). The shaded region is the area bounded by two lines emanating from a point.}
\label{fig:cone}
\end{figure}

We will primarily be interested in cone regions, although in fact the specific shape of the regions is not important (see also Remark~\ref{rem:cone} below).
\begin{definition}
	Consider a point on the lattice $\mathbb{Z}^2$, with two semi-infinite lines emanating from it, such that the angle between those lines is positive but smaller than $\pi$. A \emph{cone} $\Lambda \subset \bonds$ consists of all bonds that are in the area bounded by the two lines, or intersected by one of the lines. See Figure~\ref{fig:cone} for an example.
\end{definition}
Remark that for $x \in \mathbb{Z}^2$ there is a translated cone $\Lambda + x$. Furthermore, $\bigcup_{x \in \mathbb{Z}^2} (\Lambda + x)$ is the set of all bonds. Finally, $\tau_x(\alg{A}(\Lambda)) = \alg{A}(\Lambda+x)$ for any $\Lambda \subset \bonds$. These properties hold in fact for any subset $\Lambda$ of the bonds. 

The string operators induce localized endomorphisms (in fact, automorphisms) of $\alg{A}$. If $\gamma$ is a path starting at a site $x$ and extending to infinity, write $\gamma_n$ ($n \in \mathbb{N}$) for the finite path consisting of the first $n$ bonds of the path $\gamma$.
\begin{proposition}
	\label{prop:auto}
	Let $\Lambda$ be a cone and let $k=X,Y,Z$. Choose a path $\gamma^k$ of type $k$ in $\Lambda$ extending to infinity. Consider the corresponding string operators $\Gamma^{\gamma_n}_k$ for $n \in \mathbb{N}$. For any $A$ in $\alg{A}$, define
\begin{equation}
  \rho^k(A) = \lim_{n \to \infty} \Ad \Gamma_k^{\gamma_n}(A),
	\label{eq:automorphism}
\end{equation}
where the limit is taken in norm. Then for each $k$, $\rho^k$ defines an outer automorphism of the quasi-local algebra $\alg{A}$. These automorphisms are localized in $\Lambda$.
\end{proposition}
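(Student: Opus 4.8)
The plan is to verify in turn that the limit~\eqref{eq:automorphism} exists and gives a $*$-endomorphism, that this endomorphism is bijective and localized in $\Lambda$, and finally that it is not inner; only the last point is substantial.

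\emph{Existence and the homomorphism property.} First I would note that each $\Gamma_k^{\gamma_n}$ is a unitary, so every $\Ad\Gamma_k^{\gamma_n}$ is an isometric $*$-automorphism. For a local $A$, passing from $\gamma_n$ to $\gamma_{n+1}$ multiplies the string by a single Pauli matrix on a bond which, once $n$ is large, lies outside the finite set $\supp(A)$; such a factor commutes with $A$ and squares to a scalar, so $\Ad\Gamma_k^{\gamma_n}(A)$ is eventually constant. Hence the limit exists on $\alg{A}_{loc}$, and a routine $\varepsilon/3$ argument using $\norm{\Ad\Gamma_k^{\gamma_n}}=1$ extends it to all of $\alg{A}$. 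As a pointwise norm-limit of $*$-automorphisms, $\rho^k$ is an isometric $*$-homomorphism, in particular injective.

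\emph{Bijectivity and localization.} The clean way to get surjectivity is to observe that on $\alg{A}(\Lambda_f)$, with $\Lambda_f$ finite, the conjugations by the parts of the string outside $\Lambda_f$ cancel, so that $\rho^k$ restricts to $\Ad\Gamma_S$, where $\Gamma_S$ is the finite string operator on $S=\gamma^k\cap\Lambda_f$. Since $\Gamma_S^2$ is a scalar (equal to $I$ for $k=X,Z$, a phase for $k=Y$ that $\Ad$ discards), $\Ad\Gamma_S$ is an involution, giving $\rho^k\circ\rho^k=\id$ on $\alg{A}_{loc}$ and hence on $\alg{A}$ by continuity; thus $\rho^k$ is its own inverse and is an automorphism. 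For localization I would use that $\gamma^k\subset\Lambda$: every Pauli factor of each $\Gamma_k^{\gamma_n}$ then commutes with a local $A\in\alg{A}(\Lambda^c)$, so $\Ad\Gamma_k^{\gamma_n}(A)=A$ and $\rho^k(A)=A$, which passes to all of $\alg{A}(\Lambda^c)$ by continuity.

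\emph{Outerness.} Here the infinite length of the string must be exploited, and this is the step I expect to be the main obstacle, since on any finite region $\rho^k$ looks exactly inner. I would argue by contradiction: assume $\rho^k=\Ad U$ for a unitary $U\in\alg{A}$. From $\rho^k|_{\alg{A}(\Lambda_f)}=\Ad\Gamma_S$ one gets that $\Gamma_S^* U$ commutes with the matrix algebra $\alg{A}(\Lambda_f)$, hence lies in its relative commutant $\alg{A}(\Lambda_f^c)$; so $U=\Gamma_S Z$ with $Z\in\alg{A}(\Lambda_f^c)$ unitary. Choosing a local $W$ with $\norm{U-W}<\varepsilon$ supported in $\Lambda_f$ and applying the conditional expectation onto $\alg{A}(\Lambda_f^c)$ (the partial trace over $\Lambda_f$ against the unique tracial state) to $Z=\Gamma_S^*U\approx\Gamma_S^*W\in\alg{A}(\Lambda_f)$ collapses the right-hand side to a scalar, forcing $\norm{U-c\,\Gamma_S}<\varepsilon$ for some $|c|>1-\varepsilon$. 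Carrying this out for two finite regions $\Lambda_1\subset\Lambda_2$ across which $\gamma^k$ acquires at least one new bond, and subtracting, a short estimate shows that the intermediate string $\Gamma_{\mathrm{mid}}=\Gamma_{(\gamma^k\cap\Lambda_2)\setminus(\gamma^k\cap\Lambda_1)}$ lies within $2\varepsilon/(1-\varepsilon)$ of a scalar. But $\Gamma_{\mathrm{mid}}$ is a nontrivial Pauli monomial, so its spectrum contains two antipodal points and $\norm{\Gamma_{\mathrm{mid}}-\lambda I}\ge 1$ for every $\lambda\in\mathbb{C}$; taking $\varepsilon<1/3$ yields a contradiction, so no such $U$ exists. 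The same reasoning applies verbatim for $k=X,Y,Z$, the only features used being that the relevant string operators are unitaries whose traceless, non-scalar segments persist to infinity.
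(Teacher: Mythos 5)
Your proof is correct, and the bulk of it (existence of the limit on local observables because the tail of the string eventually commutes with a fixed local $A$, extension by continuity, $\rho^k\circ\rho^k=\id$ from the involutive character of the finite string conjugations, and localization from locality) follows the same route as the paper. Where you genuinely diverge is the outerness. The paper disposes of this in one line: the implementing unitaries $\Gamma_k^{\gamma_n}$ fail to be a Cauchy sequence (consecutive strings differ by a traceless Pauli factor, so they stay at distance $2$), and a cited general theorem on limits of inner automorphisms of quasi-local algebras then gives outerness; a footnote offers the alternative of invoking disjointness of the GNS representations of $\omega_0$ and $\omega_0\circ\rho^k$. You instead give a self-contained quantitative argument: any implementing unitary $U$ would satisfy $U=\Gamma_S Z$ with $Z$ in the relative commutant $\alg{A}(\Lambda_f^c)$, and the trace-preserving conditional expectation onto $\alg{A}(\Lambda_f^c)$ forces $U$ to be $\varepsilon$-close to a scalar multiple of $\Gamma_S$ for \emph{every} sufficiently large finite region; comparing two such regions pins a nontrivial Pauli monomial $\Gamma_{\mathrm{mid}}$ to within $2\varepsilon/(1-\varepsilon)$ of a scalar, contradicting $\norm{\Gamma_{\mathrm{mid}}-\lambda I}\ge 1$. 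This is essentially a direct proof, in this special case, of the theorem the paper cites, and it buys independence from the reference at the cost of some bookkeeping; it also uses specifically UHF structure (the identification of the relative commutant of $\alg{A}(\Lambda_f)$ with $\alg{A}(\Lambda_f^c)$ and the partial trace), which is available here but worth flagging as the one nontrivial input you are taking for granted.
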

\begin{proof}
In the proof we will omit the symbol $\gamma$ and write $\Gamma^n_k$. Suppose $A$ is an observable localized in a finite region $\Lambda_0$. Then one can find $n_0$ such that $(\gamma_n\setminus~\gamma_{n_0}) \cap \Lambda_0 = \emptyset$ for all $n > n_0$. In other words, new parts of the path all lie outside $\Lambda_0$. But then it follows that $\Ad \Gamma_k^n(A) = \Ad \Gamma_k^{n_0}(A)$ for all $n > n_0$, hence the limit in equation~\eqref{eq:automorphism} converges in norm for any local operator $A$.

	To define $\rho^k$ on $\alg{A}$, extend by continuity. Indeed, since each $\Gamma^k_n$ is a unitary operator, $\norm{\rho^k(A)} = \norm{A}$ for each local observable. The local observables are norm-dense in $\alg{A}$, so that  $\rho^k$ extends uniquely to $\alg{A}$. By continuity of the $*$-operation and joint continuity of multiplication (in the norm topology), $\rho^k$ is a $*$-endomorphism. The localization property immediately follows from locality: if $B \in \alg{A}(\Lambda^c)$, then it commutes with $\Gamma_k^n$ for each $n$.

	The endomorphism $\rho^k$ is in fact an automorphism. Indeed, because Pauli matrices square to the identity, $\rho^k \circ \rho^k$ is the identity. To see that the automorphisms are outer, it is enough to notice that the sequence $\Gamma_k^n$ is not a Cauchy sequence in $\alg{A}$, hence it does not converge to an element in $\alg{A}$. By Theorem 6.3 of~\cite{MR1642584}, it follows that the automorphisms are outer.\footnote{Alternatively, this follows because the GNS representation of $\omega_0 \circ \rho^k$ is disjoint from the GNS representation of $\omega_0$, see Theorem~\ref{thm:select}.}
\end{proof}
Note that the automorphism $\rho^k$ depends on the choice of path $\gamma^k$. If necessary, this path dependence will be emphasized by using the notation $\rho^k_\gamma$.

The automorphisms defined in Proposition~\ref{prop:auto} induce states by composing with the ground state. 
\begin{definition}
Let $x$ be a site and $\gamma$ a path of type $k=X,Y,Z$ starting at $x$ and extending to infinity. Define a state $\omega_k^x$ of $\alg{A}$ by $\omega_k^x(A) = \omega_0(\rho^k_\gamma(A))$.
\end{definition}
At first sight, this state appears to depend on the specific choice of path. However, this is not the case.
\begin{lemma}
For each $k=X,Y,Z$ and each site $x$ of the same type, the state $\omega_k^x$ only depends on $x$, but not on the path $\gamma$.
\end{lemma}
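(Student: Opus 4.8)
The plan is to reduce the comparison to local observables, then to Pauli monomials, where the value of $\omega_0 \circ \rho^k_\gamma$ can be computed by an elementary sign analysis. Since a state is determined by its values on the norm-dense subalgebra $\alg{A}_{loc}$ and $\omega_0$ is linear, it suffices to compare $\omega_0(\rho^k_\gamma(A))$ and $\omega_0(\rho^k_{\gamma'}(A))$ for $A$ a single monomial in the Pauli matrices, where $\gamma$ and $\gamma'$ are two paths of type $k$ issuing from the same site $x$. For such an $A$, supported in a finite region $\Lambda_0$, I would reuse the stabilization argument from the proof of Proposition~\ref{prop:auto}: there is an $n_0$ so that for all $n > n_0$ the newly added bonds of the paths avoid $\Lambda_0$, whence $\rho^k_\gamma(A) = \Ad\Gamma^{\gamma_n}_k(A) = \Gamma^{\gamma_n}_k A\, \Gamma^{\gamma_n}_k$ (each string operator being a self-adjoint unitary squaring to $I$), and similarly for $\gamma'$.

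Because $A$ is a Pauli monomial and $\Gamma^{\gamma_n}_k$ is a product of Pauli matrices, $A$ either commutes or anticommutes with $\Gamma^{\gamma_n}_k$, so $\Gamma^{\gamma_n}_k A\, \Gamma^{\gamma_n}_k = \epsilon^\gamma_n(A)\, A$ with $\epsilon^\gamma_n(A) \in \{+1,-1\}$, and therefore $\omega_0(\rho^k_\gamma(A)) = \epsilon^\gamma_n(A)\,\omega_0(A)$. The crux is to show that this sign does not distinguish $\gamma$ from $\gamma'$. If $\omega_0(A) = 0$ both sides vanish, so assume $\omega_0(A) \neq 0$; by the explicit description of the ground state this forces $A$ to be a product of star and plaquette operators, all supported inside $\Lambda_0$.

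The key combinatorial fact I would establish is that the truncated string operator $\Gamma^{\gamma_n}_k$ anticommutes with a given $A_s$ or $B_p$ precisely when that star, plaquette, or combined site sits at an \emph{endpoint} of the finite path $\gamma_n$: along the interior the path meets each star (resp.\ plaquette, resp.\ combined site in the ribbon case) in an even number of bonds and hence commutes, while at the two endpoints the overlap is odd. Consequently $\epsilon^\gamma_n(A)$ equals $(-1)$ raised to the number of star/plaquette generators of $A$ located at the endpoints $x$ and $y_n$ of $\gamma_n$.

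Finally I would let $n$ grow. Since $\gamma$ extends to infinity, its far endpoint $y_n$ eventually leaves any bounded region, so for $n$ large enough $y_n$ (and likewise $y'_n$ for $\gamma'$) lies outside $\Lambda_0$ and is therefore not the location of any star or plaquette factor of $A$. The only surviving contribution to the sign is the common starting site $x$, shared by both paths. Hence $\epsilon^\gamma_n(A) = \epsilon^{\gamma'}_n(A)$ and $\omega_0(\rho^k_\gamma(A)) = \omega_0(\rho^k_{\gamma'}(A))$ on every monomial, and by linearity and norm continuity on all of $\alg{A}$. I expect the main obstacle to be the endpoint (anti)commutation computation itself, and in particular the ribbon ($Y$) case, where the combined $\sigma^x$/$\sigma^z$ string must be checked against both star and plaquette operators simultaneously; the remainder is bookkeeping.
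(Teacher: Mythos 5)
Your proof is correct, but it takes a genuinely different route from the paper's. The paper argues in two stages: it first shows that two \emph{finite} paths sharing both endpoints $x$ and $y$ induce the same state, by computing the conjugated state on the maximal abelian algebra $\alg{A}_{XZ}$ (value $-1$ exactly on the stars, resp.\ plaquettes, based at the two endpoints, $+1$ elsewhere) and then invoking uniqueness of the extension from $\alg{A}_{XZ}$ to $\alg{A}$ — or, alternatively, by deforming one path into the other with plaquette operators, which leave $\omega_0$ invariant; it then treats two semi-infinite paths from $x$ by truncating both outside $\supp(A)$ and joining the truncation endpoints by a finite connecting path in $\supp(A)^c$, whose string operator commutes with $A$. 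You instead compute $\omega_0(\rho^k_\gamma(A))$ directly on Pauli monomials, using the explicit characterization of $\omega_0$ (nonzero only on products of star and plaquette operators) to reduce everything to the conjugation sign $\epsilon^\gamma_n(A)$, and then observe that this sign is governed by the endpoint parity rule, so the far endpoint becomes irrelevant once it escapes the generators of $A$. Both arguments rest on the same combinatorial core (interior stars/plaquettes meet a path in an even number of bonds, endpoints in an odd number), but yours bypasses both the maximal-abelian-algebra uniqueness step and the connecting-path construction, and as a byproduct gives the explicit formula $\omega^x_k(A)=\pm\,\omega_0(A)$ on monomials; the paper's intermediate finite-path statement, on the other hand, is reused later (e.g.\ in the construction of charge transporters). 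Two points to tighten: the claim that $y_n$ is eventually ``not the location of any star or plaquette factor of $A$'' should be phrased in terms of the (finite) set of generators appearing in a decomposition of $A$, all based in a bounded region that $y_n$ eventually leaves — the support of the product need not contain the support of each factor; and the ribbon case you flag does require checking that a combined site at an endpoint contributes through the vertex (star) and the plaquette separately, which the paper likewise leaves to the reader.
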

\begin{proof}
First consider the case $k=Z$, so that $x$ is a point on the lattice. To prove independence of the path, consider another point $y$ and let $\gamma^1$ and $\gamma^2$ be two paths from $x$ to $y$. Denote the corresponding string operators by $\Gamma_Z^1$ and $\Gamma_Z^2$. This allows to define two (a priori distinct) states 
\[
\omega^{x,y}_i(A) = \omega_0(\Gamma_Z^i A \Gamma_Z^i), \quad i=1,2.
\]
Note that the string operators commute with plaquette operators, hence clearly $\omega^{x,y}_i(B_p) = 1$ for each plaquette $p$. As for the star operators, note that each star has an even number (0,2 or 4) of edges in common with the paths $\gamma^i$, except at the endpoints $x$ and $y$, where there are an odd number of edges in common. Let $s$ be the star based at $x$. Suppose for the sake of example that it has one edge in common with the path $\gamma^1$. Then, using the commutation relations for Pauli matrices,
\[
	\omega_1^{x,y}(A_s) = \omega_0(\Gamma^1_Z A_s \Gamma^1_Z) = i^2 \omega_0(A_s) = -1.
\]
A similar calculation holds in the case of 3 common edges, or for a star $s$ containing the endpoint $y$. Summarizing, we find that $\omega_{1}^{x,y}$ and $\omega_2^{x,y}$ coincide on the abelian algebra $\alg{A}_{XZ}$, taking the value $1$ on all plaquette operators. On the star operators they take the value $-1$ if the star is based at either $x$ or $y$, and $1$ otherwise. A similar argument as given for the ground state now allows us to compute the value of the states on arbitrary elements of the local algebras, and it follows that both states coincide. 

There is in fact another way to see this. Let for example $\gamma$ be a finite path of type $Z$. Let $p$ be a plaquette such that $p \cap \gamma$ is non-empty. Then it is easy to see that $\Gamma_\gamma^Z B_p = \Gamma_{\gamma'}^Z$, where the path $\gamma'$ is obtained from $\gamma$ by deleting the bonds of $\gamma \cap p$ and adding the bonds $p \setminus \gamma$ to the path $\gamma$. Hence once can use the plaquette operators to deform one path into another, provided the endpoints are the same. Since
\[
\omega_0(\Gamma^{Z}_\gamma A \Gamma^{Z}_\gamma) = \omega_0(B_p \Gamma^{Z}_\gamma A \Gamma^{Z}_\gamma B_p)  = \omega_0(\Gamma^{Z}_{\gamma'} A \Gamma^{Z}_{\gamma'})
\]
it follows that the states coincide. A similar argument can be given for paths of type $X$.

Now consider the case where $\gamma^1$ and $\gamma^2$ are two paths starting at $x$ and extending to infinity. Let $A$ be a local observable, localized in some finite set $\Lambda \subset \bonds$. Then there is an $n_0$ such that the paths $\gamma^1_n$ and $\gamma^2_n$ do not return to $\Lambda$ for $n \geq n_0$. Consider a path $\gamma' \subset \Lambda^c$ from $\gamma^1_{n_0}$ to $\gamma^2_{n_0}$. By locality and the result above, we then have
\[
\begin{split}
\omega_0(\rho^Z_{\gamma^1}(A)) = \omega_0(\Gamma_Z^{\gamma^1_{n_0}} A \Gamma_Z^{\gamma^1_{n_0}}) = \omega_0(\Gamma_Z^{\gamma'} \Gamma_Z^{\gamma^1_{n_0}} A \Gamma_Z^{\gamma^1_{n_0}}\Gamma_Z^{\gamma'}) \\
	= \omega_0(\Gamma_Z^{\gamma^2_{n_0}} A \Gamma_Z^{\gamma^2_{n_0}}) = \omega_0(\rho^Z_{\gamma^2}(A)).
\end{split}
\]
By continuity this result extends to observables $A \in \alg{A}$, hence the state $\omega_Z^x$ is independent of the path.

The argument for the states $\omega_X^x$ and $\omega_Y^x$ is essentially the same. The difference is that one has to consider points $x,y$ in the dual lattices, i.e. plaquettes of the lattice, together with paths on the dual lattice. E.g., for $k=X$ one finds
\[
\omega_X^{x,y}(A_s) = 1,\quad \omega_X^{x,y}(B_p) = \begin{cases}
  -1 & x,y \in p \\
  1 & \textrm{otherwise}.
\end{cases}
\]
The argument is now the same as for $\omega^x_Z$.
\end{proof}

The state $\omega^x_k$ describes a single excitation. By the GNS construction, this leads to a corresponding representation $\pi_{\omega^x_k}$ of $\alg{A}$. The GNS triple coming from the ground state $\omega_0$ will be denoted by $(\pi_0, \mc{H}_0, \Omega)$. The remarkable feature is that representations corresponding to single excitations cannot be distinguished from the ground state representation when restricted to the complement of a cone.
\begin{theorem}
  \label{thm:select}
Let $\Lambda \subset \bonds$ be any cone. Then
\begin{equation}
  \label{eq:selection}
  \pi_0 \upharpoonright \alg{A}(\Lambda^c) \cong \pi_{\omega_k^x} \upharpoonright \alg{A}(\Lambda^c),
\end{equation}
for $k=X,Y,Z$ and any site $x$. In addition, $\pi_{\omega_k^x} \cong \pi_{\omega_l^y}$ if and only if $k = l$. This holds for $k=0,X,Y,Z$, where $\omega_0^x := \omega_0$.
\end{theorem}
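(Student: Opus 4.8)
The plan is to reduce both assertions to statements about the vacuum representation $\pi_0$. Since $\omega_k^x = \omega_0 \circ \rho^k_\gamma$ with $\rho^k_\gamma$ an automorphism, the vector $\Omega$ is cyclic for $\pi_0 \circ \rho^k_\gamma$ and induces the state $\omega_k^x$ on it; uniqueness of the GNS construction then provides a unitary $U_k$ with $U_k \pi_{\omega_k^x}(A) U_k^* = \pi_0(\rho^k_\gamma(A))$ for all $A \in \alg{A}$. Everything below is extracted from this single intertwining relation.

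First I would prove transportability, which gives the easy direction $k=l \Rightarrow \pi_{\omega_k^x} \cong \pi_{\omega_k^y}$. Choosing (by path independence) two type-$k$ paths $\gamma_1,\gamma_2$ from $x$ and $y$ that coincide outside a finite region, the connecting finite string $V = \Gamma_k^\delta$ is a \emph{local} self-adjoint unitary, and a direct computation with the defining limit \eqref{eq:automorphism} yields $\rho^k_{\gamma_1} = \Ad V \circ \rho^k_{\gamma_2}$. Composing with $\pi_0$ shows that $\pi_0 \circ \rho^k_{\gamma_1}$ and $\pi_0 \circ \rho^k_{\gamma_2}$ are intertwined by the unitary $\pi_0(V)$, whence $\pi_{\omega_k^x} \cong \pi_{\omega_k^y}$.

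For the selection criterion \eqref{eq:selection} I would choose the defining path $\gamma$ inside the given cone $\Lambda$, transporting $x$ to a site $x' \in \Lambda$ of the same type by the previous step if necessary. Then $\rho^k_\gamma$ is localized in $\Lambda$ by Proposition~\ref{prop:auto}, so $\rho^k_\gamma(A) = A$ for $A \in \alg{A}(\Lambda^c)$, and restricting the intertwining relation of the first paragraph to $\alg{A}(\Lambda^c)$ gives $U_k \pi_{\omega_k^x}(A) U_k^* = \pi_0(A)$ there, which is precisely \eqref{eq:selection}.

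The hard part is the converse, $\pi_{\omega_k^x} \cong \pi_{\omega_l^y} \Rightarrow k=l$, which I would obtain by detecting the superselection charge with loop operators sent to infinity. Let $W^X_R$ and $W^Z_R$ be the string operators of a dual-lattice and a lattice loop of size $R$ encircling the localization region; as observed after the definition of string operators, $W^X_R$ equals the product of the enclosed star operators and $W^Z_R$ the product of the enclosed plaquette operators. A repeated application of Lemma~\ref{lem:state} shows $\omega_0(B^* W^{X}_R C) = \omega_0(B^* C)$ for local $B,C$ once $R$ is large, so that $\wlim_R \pi_0(W^X_R) = I$ and likewise $\wlim_R \pi_0(W^Z_R) = I$. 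On the other hand, for $R$ large the string defining $\rho^l_\gamma$ crosses each loop a fixed number of times, giving exact operator identities $\rho^l_\gamma(W^X_R) = \epsilon^X_l W^X_R$ and $\rho^l_\gamma(W^Z_R) = \epsilon^Z_l W^Z_R$ with $\epsilon^{X}_l, \epsilon^Z_l \in \{\pm 1\}$, the sign being $-1$ exactly when the defining string carries a component of the type opposite to the loop and hence links it. Conjugating by $U_l$ then gives $\wlim_R \pi_{\omega_l^y}(W^X_R) = \epsilon^X_l\, I$ and $\wlim_R \pi_{\omega_l^y}(W^Z_R) = \epsilon^Z_l\, I$. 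The four pairs $(\epsilon^X_l, \epsilon^Z_l)$ are $(+,+),(-,+),(+,-),(-,-)$ for $l = 0, Z, X, Y$, hence pairwise distinct; since any unitary equivalence conjugates these weak limits into one another and they are scalars, equivalence forces the pairs to agree, i.e.\ $k = l$. The points requiring care are the existence of the weak limits (guaranteed by Lemma~\ref{lem:state} and the uniform boundedness of the $W_R$) and the geometric bookkeeping of how the defining string crosses the two loops, which fixes the signs $\epsilon^{X}_l, \epsilon^Z_l$.
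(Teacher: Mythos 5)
Your proposal is correct, and the first two parts (the selection criterion via localization of $\rho^k_\gamma$ in $\Lambda$ plus uniqueness of the GNS representation, and transportability via the finite connecting string operator) coincide with the paper's argument. Where you genuinely diverge is the inequivalence $\pi_{\omega_k^x} \cong \pi_{\omega_l^y} \Rightarrow k=l$. The paper first reduces unitary equivalence of these irreducible representations to quasi-equivalence of the corresponding factor states, then invokes Corollary 2.6.11 of Bratteli--Robinson (quasi-equivalent factor states must agree at infinity) and exhibits, for any finite $\widehat{\Lambda}$, a large loop operator outside $\widehat{\Lambda}$ on which $\omega_0$ and $\omega_Z^x$ differ by $2$; the remaining cases are only sketched. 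You instead use the same loop operators as charge detectors but package the argument as a weak-limit statement: $\wlim_R \pi_0(W_R^{X}) = \wlim_R \pi_0(W_R^{Z}) = I$ (which indeed follows since $W_R$ is a product of enclosed star/plaquette operators, so Lemma~\ref{lem:state} applies once the loop clears the supports of the test observables, and the $W_R$ are uniformly bounded), while $\rho^l_\gamma(W_R^{X,Z}) = \epsilon^{X,Z}_l W_R^{X,Z}$ by the odd-crossing parity of a semi-infinite string leaving the loop. Since conjugation by the fixed unitary implementing an equivalence is weakly continuous and the limits are scalars, the sign pairs must match. This buys you two things: it avoids citing the quasi-equivalence machinery for factor states, making the step more self-contained, and it separates all four sectors $0,X,Y,Z$ simultaneously since the four pairs $(\epsilon^X_l,\epsilon^Z_l)$ are distinct, rather than treating the cases one at a time. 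The only points to watch are the ones you already flag: existence of the weak limits and the parity bookkeeping for ribbons (type $Y$), both of which go through.
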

\begin{proof}
  Let $x$ be a site. Choose a path $\gamma$ (of type $k$) in $\Lambda$, starting at $x$ and going to infinity. Consider $\rho^k := \rho^k_\gamma$ as above. Then $\pi_0 \circ \rho^k$ is localized in $\Lambda$, in the sense that $\pi_0 \circ \rho^k(A) = \pi_0(A)$ for all $A \in \alg{A}(\Lambda^c)$. Moreover, it is a GNS representation for the state $\omega_k^x$, essentially by definition of $\omega_k^x$ (the Hilbert space is $\mc{H}_0$ and $\Omega$ the cyclic vector).\footnote{Note that $\omega_0$ and $\omega_x^k$ are automorphic states in the terminology of~\cite[Ch. 12]{MR1468230}. The statement is then an example of Proposition 12.3.3 of the same reference.} Hence by uniqueness of the GNS representation, $\pi_0 \circ \rho^k \cong \pi_{\omega^x_k}$. Together with localization this yields equation~\eqref{eq:selection}.
	
	Let $y$ be another site. Consider a path $\gamma'$ from $x$ to $y$, with corresponding string operator $\Gamma_k^{\gamma'}$. Note that $\Ad \Gamma_k^{\gamma'} \circ \rho^k$ is precisely the automorphism induced by the path from $y$ to infinity, obtained by concatenating $\gamma'$ with $\gamma$. From unitarity of $\Gamma_k^{\gamma'}$ it is easy to see that $\pi_{\omega_k^x} \cong \pi_{\omega_k^y}$, proving that the GNS representations of type $k$ are equivalent, independent of the starting site.
\begin{figure}
  \begin{center}
  \includegraphics{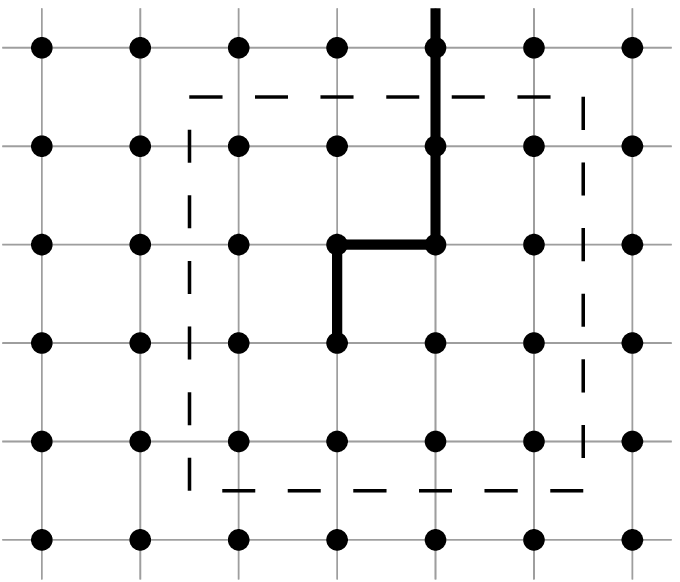}
\end{center}
\caption{Consider the state induced by thick path on the lattice. A path $\gamma$ on the dual lattice (dashed) defines a string operator $\Gamma_X^\gamma$. The state has value $-1$ on this operator.}
\label{fig:inequiv}
\end{figure}

To complete the proof, we show that the representations are globally inequivalent. Note that $\omega_0$ is a pure state, hence the GNS representation is irreducible. The GNS representations of the states $\omega_k$ can be obtained by composing $\pi_0$ with an automorphism of $\alg{A}$, hence they are also irreducible. But this implies that $\omega_0$ and $\omega_k$ are factor states. Moreover, since the representations are irreducible, unitary equivalence is equivalent to quasi-equivalence of the states~\cite[Prop. 10.3.7]{MR1468230}. Recall that in the situation at hand, two factor states $\omega_1$ and $\omega_2$ are quasi-equivalent if and only if for each $\varepsilon > 0$, there is a finite set of bonds $\widehat{\Lambda}$ such that for all finite sets $\widetilde{\Lambda} \subset \widehat{\Lambda}^c$ and $B \in \alg{A}(\widetilde{\Lambda})$, $|\omega_1(B)-\omega_2(B)| < \varepsilon \norm{B}$, by Corollary 2.6.11 of~\cite{MR887100}. We show that this inequality cannot hold.

Consider for the sake of example the case $\omega_0$ and $\omega_Z^x$, for some point $x$ on the lattice. Set $\varepsilon = 1$. Without loss of generality, we can assume that $\widehat{\Lambda}$ contains the star based at $x$. Since $\widehat{\Lambda}$ is finite, it is possible to choose a closed non-self-intersecting path $\gamma$ in the dual lattice, such that the set $\widehat{\Lambda}$ is contained in the region bounded by the path (see Figure~\ref{fig:inequiv}). Consider the string operator $\Gamma_X^\gamma$ corresponding to this path. Then clearly this operator is localized in a finite region in the complement of $\widehat{\Lambda}$. Recall that $\Gamma_X^\gamma$ is the product of star operators enclosed by the path $\gamma$, in particular the star based at $x$. That is, $\Gamma_X^\gamma = A_{\str(x)} A_{s_1} \cdots A_{s_n}$ for certain stars $s_1, \dots s_n$. But this implies
\[
|\omega_0(\Gamma^\gamma_X) - \omega_Z^x(\Gamma^\gamma_X)| = |1-\omega_Z^x(A_{\str(x)})| = 2 > \norm{\Gamma^\gamma_X}.
\]
The other cases are similar, if necessary using plaquettes instead of stars.
\end{proof}
\begin{remark}\label{rem:cone}
	The fact that $\Lambda$ is a cone is not essential at this point. What is important is that it should be possible to choose a path extending to infinity contained in $\Lambda$. In particular, the proof implies that it is not possible to sharpen the result to unitary equivalence when restricted to the complement of a \emph{finite} set. At one point in the analysis however, notably in the proof of Theorem~\ref{thm:type}, it is essential to be able to translate the support of  any local observable to a region completely inside $\Lambda$. If $\Lambda$ is a cone, this is always possible.
\end{remark}

In the language of algebraic quantum field theory, the representations $\pi_{\omega_k}$ are said to satisfy a \emph{selection criterion}. Usually one imposes such a selection criterion to select physically relevant representations. Here however, we start with physically reasonable constructions and arrive at the criterion. The criterion here can be interpreted as a lattice analogue of localization in spacelike cones, as considered in~\cite{MR660538}. An example of a model admitting such representations, albeit a model mainly of mathematical interest, is constructed in~\cite{BF-ICMP1981}. The interpretation is that the excitations cannot be distinguished from the ground state outside a cone region. It would be interesting to know if there are other irreducible representations of $\alg{A}$, not unitarily equivalent to the representations in Theorem~\ref{thm:select}, satisfying this criterion. One probably has to impose additional criteria to select physically relevant representations (cf. the condition on the existence of a mass gap in~\cite{MR660538}).

For the automorphisms considered here a similar property can be derived. In particular, the automorphisms are covariant with respect to the time evolution. Moreover the generator has positive spectrum bounded away from zero. Note that the algebra $\alg{A}$ (being UHF) is simple, hence $\pi_0$ is a faithful representation. To simplify notation, from now on we identify $\pi_0(A)$ with $A$ and often drop the symbol $\pi_0$, as already done in the proof of Proposition~\ref{prop:gstate}.
\begin{proposition}
	Let $\gamma$ be a path to infinity of type $k$. Then $\rho_\gamma$ is covariant for the action of $\alpha_t$. In fact, suppose $\gamma$ is of type $Z$. Then, for all $t \in \mathbb{R}$ and $A \in \alg{A}$,
\[
		\rho_\gamma(\alpha_t(A)) = e^{it (H_0+2 A_s)} \rho_\gamma(A) e^{-it(H_0+2 A_s)}
\]
with $\spec(H_0+ 2 A_s) \subset [2,\infty)$. Here $s$ is the starting point of $\gamma$. For the case $k=X$ one has to replace $A_s$ by $B_p$, where $p$ is the plaquette where the path starts. The case $k = Y$ has generator $H_0 + 2B_p + 2A_s$, with spectrum contained in $[4,\infty)$.
\end{proposition}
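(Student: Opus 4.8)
The plan is to establish covariance by reducing the integrated statement to an identity between the infinitesimal generators on the core $\alg{A}_{loc}$, and to obtain the spectral bounds from a joint spectral decomposition with respect to the conserved local operators appearing in the generator. I treat the type $Z$ case in detail; the cases $X$ and $Y$ are parallel, apart from the extra subtlety noted at the end.

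First I would record that $A_s$ is conserved: since $A_s$ commutes with every star and plaquette operator, $\delta(A_s)=0$, hence $\alpha_t(A_s)=A_s$, so $A_s$ commutes with $H_0$ and with $U_t=e^{itH_0}$. Consequently $e^{it(H_0+2A_s)}=U_t\,e^{2itA_s}$, so $V_t:=e^{it(H_0+2A_s)}$ differs from $U_t$ by the bounded inner cocycle $e^{2itA_s}\in\alg{A}$, and $\Ad V_t$ restricts to a point-norm continuous one-parameter group of automorphisms of $\alg{A}$. The covariance claim is then the equality of the two one-parameter automorphism groups $\rho^Z_\gamma\circ\alpha_t\circ(\rho^Z_\gamma)^{-1}$ and $\Ad V_t$. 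Since $\rho^Z_\gamma$ maps $\alg{A}_{loc}$ onto itself (for local $A$ the tail of the string cancels, so $\rho^Z_\gamma(A)$ is again localized in $\supp(A)$, and $(\rho^Z_\gamma)^{2}=\id$), the subalgebra $\alg{A}_{loc}$ is a common core for the generators of both groups ($\delta$ twisted by $\rho^Z_\gamma$ on one side, a bounded perturbation of $\delta$ on the other), so it suffices to check that the two generators agree there.

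The generator identity to verify is $\rho^Z_\gamma(\delta(A))=i[H_0+2A_s,\rho^Z_\gamma(A)]$ for $A\in\alg{A}_{loc}$. To prove it I would fix a finite $\Lambda$ containing $\supp(A)$ together with all stars and plaquettes meeting it, write $\delta(A)=i[H_\Lambda,A]$, and use that on the local operator $\delta(A)$ the automorphism $\rho^Z_\gamma$ acts as $\Ad\Gamma^{\gamma_n}_Z$ for $n$ large enough that the remaining tail avoids $\Lambda$. Then $\rho^Z_\gamma(\delta(A))=i[\Gamma^{\gamma_n}_Z H_\Lambda\Gamma^{\gamma_n}_Z,\rho^Z_\gamma(A)]$, and the key computation is $\Gamma^{\gamma_n}_Z H_\Lambda\Gamma^{\gamma_n}_Z=H_\Lambda+2A_s$: the $Z$-string commutes with every plaquette operator and with every star operator except the two at the endpoints of $\gamma_n$, and for large $n$ only the starting star $s$ lies in $\Lambda$. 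Finally, since $\rho^Z_\gamma(A)$ is local, $i[H_\Lambda,\rho^Z_\gamma(A)]=\delta(\rho^Z_\gamma(A))=i[H_0,\rho^Z_\gamma(A)]$, which gives the identity. (When $s\notin\Lambda$ the conjugation is trivial, but then $A_s$ commutes with $\rho^Z_\gamma(A)$, so the identity still holds.) The generator of $\Ad V_t$ on a local element is $\delta+2i[A_s,\cdot]=i[H_0+2A_s,\cdot]$, which matches, and uniqueness of a one-parameter group with prescribed generator on a core yields the covariance formula.

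For the spectrum I would use that $A_s$ commutes with $H_0$ and satisfies $A_s\Omega=\Omega$ (immediate from $\omega_0(A_s)=1$ via $\norm{A_s\Omega-\Omega}^2=2-2\omega_0(A_s)=0$). Decomposing $\mc{H}_0$ into the $\pm1$ eigenspaces of $A_s$, each $H_0$-invariant, the vacuum sits in the $+1$ space, so there $H_0+2A_s=H_0+2$ has spectrum in $\{2\}\cup[6,\infty)$, while the $-1$ space is orthogonal to $\Omega$, where $\spec(H_0)\subset[4,\infty)$, so $H_0-2$ has spectrum in $[2,\infty)$; hence $\spec(H_0+2A_s)\subset[2,\infty)$. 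The case $X$ is identical with $B_p$ in place of $A_s$. For $Y$ the generator is $H_0+2A_s+2B_p$ with $A_s,B_p$ commuting conserved operators fixing $\Omega$; decomposing into the four joint eigenspaces, the only delicate sector is $A_s=B_p=-1$, where the naive bound gives only $H_0-4\geq0$. The hard part will be to improve this to $H_0\geq8$ on that sector. Here I would revisit the energy computation from the proof of Proposition~\ref{prop:gstate}: a Pauli monomial anticommutes with an even number of star operators and with an even number of plaquette operators (each bond lies in exactly two stars and two plaquettes), so a monomial anticommuting with both $A_s$ and $B_p$ anticommutes with at least two stars and two plaquettes, i.e.\ has $n_i\geq4$ and energy $2n_i\geq8$. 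Feeding this into the diagonalization $\langle X\Omega,H_0 X\Omega\rangle=2\sum_k k\,\norm{\widetilde X_k\Omega}^2$ restricted to this sector gives $H_0\geq8$ there, whence $H_0+2A_s+2B_p\geq4$; the other three sectors give $\geq4$ directly. I expect this pairing/counting refinement for the $Y$-sector to be the main obstacle, the rest being a careful but routine bookkeeping of commutation relations.
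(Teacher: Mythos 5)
Your argument is correct, and for the spectral bounds it takes a genuinely different route from the paper. For covariance both proofs rest on the same computation $\rho_\gamma(H_\Lambda)=H_\Lambda+2A_s$ (only the star at the near endpoint of the string is flipped once the far endpoint has left $\Lambda$); the paper conjugates the finite-volume unitaries $e^{itH_\Lambda}$ and passes to the thermodynamic limit, whereas you match generators on the core $\alg{A}_{loc}$ and invoke uniqueness of the generated group --- a repackaging rather than a new idea. The spectral part is where you diverge: the paper re-expresses $\spec(H_0+2A_s)\subset[2,\infty)$ as the quadratic-form inequality $-i\omega_0(X^*\delta(X))+2\omega_0(X^*A_sX)-2\omega_0(X^*X)\geq0$ on $\alg{A}_{loc}$ and re-runs the monomial decomposition from the proof of Proposition~\ref{prop:gstate}, while you decompose $\mc{H}_0$ into eigenspaces of the conserved charge $A_s$ (resp.\ the joint eigenspaces of $A_s$ and $B_p$) and feed in the already-established gap $\spec(H_0)\subset\{0\}\cup[4,\infty)$ together with non-degeneracy of the ground state. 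Your route is cleaner and makes explicit a point the paper hides behind ``obvious modifications'': for type $Y$ the naive bound on the $A_s=B_p=-1$ sector only gives $H_0-4\geq0$, and one genuinely needs the refinement that a monomial anticommuting with both $A_s$ and $B_p$ must violate at least two stars \emph{and} two plaquettes (by the parity argument already in the paper's footnote), hence has energy at least $8$. That counting is exactly what the paper's quadratic-form computation delivers when carried out for the $Y$ generator, so the two proofs ultimately use the same combinatorial input, but your sector decomposition isolates it more transparently.
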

\begin{proof}
	We prove the result for paths of type $X$. The other cases are proved by making the obvious modifications. First note that for $A \in \alg{A}_{loc}$, $\alpha_t(A) = \lim_{\Lambda \to \mathbb{Z}^2} e^{i H_\Lambda t} A e^{-i H_\Lambda t}$ in norm.

By the same reasoning as in the proof of Lemma~\ref{lem:state}, one sees that $\rho_\gamma(A_s) = -A_s$. Hence if $\Lambda \supset \str(s)$, we have $\rho_\gamma(H_\Lambda) = H_\Lambda + 2 A_s$. By expanding the exponential into a power series, it is then clear that
\[
	\rho_\gamma(e^{i t H_\Lambda} A e^{-i t H_\Lambda}) = e^{it(H_\Lambda+ 2 A_s)} \rho_\gamma(A)  e^{-it(H_\Lambda+ 2 A_s)}.
\]
One then sees (remark in particular that $A_s$ commutes with all local Hamiltonians) that for all $A \in \alg{A}$ we have $\rho_\gamma(\alpha_t(A)) = U_t \rho_\gamma(A) U_t^*$, where $U_t$ is the unitary $U_t = \exp(i t (H_0+2 A_s))$.

It remains to show the spectrum condition. This can be done by similar methods as used in the proof of Proposition~\ref{prop:gstate}. The spectrum condition is equivalent to the inequality
\[
	-i \omega(X^*\delta(X)) + 2 \omega(X^* A_s X) - 2 \omega(X^*X) \geq 0
\]
for all $X \in \alg{A}_{loc}$. We then proceed as before: write $X = X_{XZ} + \sum_i X_i$ where $X_{XZ} \in \alg{A}_{XZ}$ and $X_i \notin \alg{A}_{XZ}$ monomials in the Pauli matrices. After substituting this into the inequality, all terms containing $X_{XZ}$ vanish. By the same reasoning as in the proof of Proposition~\ref{prop:gstate} one then sees that this inequality is indeed satisfied for all $X \in \alg{A}_{loc}$.
\end{proof}

The following corollary is immediate.
\begin{corollary}
	The states $\omega_{\gamma}^k$ are invariant with respect to $\alpha_t$.
\end{corollary}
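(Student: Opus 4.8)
The plan is to obtain invariance directly from the covariance relation proved in the preceding Proposition, without any new estimate. By definition $\omega_\gamma^k(\alpha_t(A)) = \omega_0(\rho_\gamma(\alpha_t(A)))$, and covariance rewrites this as $\omega_0(U_t \rho_\gamma(A) U_t^*)$, where for a path of type $Z$ one has $U_t = e^{it(H_0 + 2A_s)}$, with the obvious replacements ($B_p$, respectively $B_p$ and $A_s$) for types $X$ and $Y$. Passing to the GNS representation $(\pi_0, \mc{H}_0, \Omega)$ and using the identification of $\pi_0(A)$ with $A$, this is $\langle \Omega, U_t \rho_\gamma(A) U_t^* \Omega\rangle$, so the entire argument reduces to understanding how $U_t$ acts on the cyclic vector $\Omega$.

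First I would check that $\Omega$ is an eigenvector of $U_t$. By Proposition~\ref{prop:gstate} we have normalized $H_0 \Omega = 0$, so only the action of $A_s$ (resp.\ $B_p$) on $\Omega$ remains. Since $A_s$ is a self-adjoint unitary with $A_s^2 = I$ and $\omega_0(A_s) = 1$, equality in the Cauchy--Schwarz inequality (equivalently, Lemma~\ref{lem:state}) forces $A_s \Omega = \Omega$, and likewise $B_p \Omega = \Omega$. Because $A_s$ and $B_p$ commute with $H_0$, this yields $U_t \Omega = e^{2it}\Omega$ for types $Z$ and $X$, and $U_t \Omega = e^{4it}\Omega$ for type $Y$. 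The key structural point is thus that the extra energy shift carried by $U_t$ relative to the GNS implementation of $\alpha_t$ acts on $\Omega$ as a pure phase.

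With $U_t^*\Omega$ equal to $\Omega$ up to a phase, the expression $\langle\Omega, U_t \rho_\gamma(A) U_t^*\Omega\rangle$ becomes $\langle U_t^*\Omega, \rho_\gamma(A)\, U_t^*\Omega\rangle$, in which the phase and its conjugate cancel, leaving $\langle\Omega, \rho_\gamma(A)\Omega\rangle = \omega_0(\rho_\gamma(A)) = \omega_\gamma^k(A)$. This holds for all $A \in \alg{A}$ without any density argument, since each step is an identity of expectation values. The only point requiring justification is the identity $A_s\Omega = \Omega$ (resp.\ $B_p\Omega = \Omega$), and I do not expect any genuine obstacle: once covariance and the spectrum of the twisted generator are in hand, the phase cancellation is automatic, which is precisely why the corollary is immediate.
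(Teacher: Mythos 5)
Your argument is correct and is exactly the ``immediate'' deduction the paper intends: the paper gives no proof beyond citing the covariance proposition, and your computation --- $U_t^*\Omega$ is a phase times $\Omega$ because $H_0\Omega=0$ and $A_s\Omega=B_p\Omega=\Omega$, so the phases cancel in $\langle \Omega, U_t\rho_\gamma(A)U_t^*\Omega\rangle$ --- is the standard way to fill it in. The one identity you flag as needing justification, $A_s\Omega=B_p\Omega=\Omega$, is verified explicitly (by the same Cauchy--Schwarz/norm computation you indicate) at the start of the proof of Lemma~\ref{lem:intertwiner}.
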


\section{Fusion, statistics and braiding}
\label{sec:fusion}
The localized endomorphisms considered in the previous section can be endowed with a tensor product. In fact, it is possible to define a braiding in a canonical way. This braiding is related to the statistics of particles. In the DHR analysis, a crucial role in the construction is played by Haag duality in the vacuum sector. For dealing with cone localized endomorphisms, the appropriate formulation is the condition that for each cone $\Lambda$ the following equality holds:
\[
\pi_0(\alg{A}(\Lambda))'' = \pi_0(\alg{A}(\Lambda^c))'.
\]
Note that by locality, one always has $\pi_0(\alg{A}(\Lambda))'' \subset \pi_0(\alg{A}(\Lambda^c))'$. Currently no general conditions from which Haag duality follows are known, but note that there are some results for quantum spin chains, e.g.~\cite{MR2281418,MR2605849}.  At the moment we do not have a proof of Haag duality, but since the ground state is known explicitly, one might hope that a direct proof is possible.

Fortunately, in the present situation it is possible to do without Haag duality. To clarify this, first note that Theorem~\ref{thm:select} implies in particular that the localized automorphisms defined by paths extending to infinity are \emph{transportable}.
\begin{definition}
	Let $\Lambda$ be a cone and suppose that $\rho$ is an endomorphism of $\alg{A}$ localized in $\Lambda$. Then $\rho$ is called \emph{transportable}, if for any cone $\widehat{\Lambda}$ there is a unitary equivalent\footnote{We do not require that this unitary lives in $\alg{A}$. More precisely, we demand that $\pi_0 \circ \rho \cong \pi_0 \circ \widehat{\rho}$.} endomorphism $\widehat{\rho}$ localized in $\widehat{\Lambda}$.
\end{definition}
One of the applications of Haag duality is to get more control over the unitary setting up the equivalence. Specifically, one can show that the intertwiners are elements of the (weak closure) of cone algebras. Recall that an intertwiner $V$ from an endomorphism $\rho_1$ to $\rho_2$ is an operator such that $V \rho_1(A) = \rho_2(A) V$ for all $A \in \alg{A}$. A unitary intertwiner is also called a charge transportation operator (or simply charge transporter). In our model we will be able to prove, without invoking Haag duality, that the charge transporters are elements of the weak closure of cone algebras. We again identify $\pi_0(A)$ with $A$ in the proof.

\begin{lemma}
  \label{lem:intertwiner}
Let $\gamma^1$ (resp. $\gamma^2$) be a path of type $k$ starting at a site $x$ (resp. $y$) and extending to infinity. Then there is a unitary intertwiner $V$ from $\rho_{\gamma_1}^k$ to $\rho_{\gamma_2}^k$ such that $\Gamma_{\widehat{\gamma}}^k V \Omega = \Omega$ (where $\Omega$ is the GNS vector for $\omega_0$) for any path $\widehat{\gamma}$ from $x$ to $y$.

Moreover, if for each $n$ a path $\widetilde{\gamma}_n$ from the $n$-th site of $\gamma_1$ to the $n$-th site of $\gamma_2$ is chosen such that $\lim_{n \to \infty} \dist(\widetilde{\gamma}_n, x) = \infty$, then for $V_n = \Gamma_1^n \Gamma^k_{\widetilde{\gamma}_{n}}\Gamma_2^n$, where $\Gamma_i^n$ is the string operator corresponding to the path $\gamma^i_n$, we have
\begin{equation}
V = \wlim_{n \to \infty} V_n.
\end{equation}
In other words, $V_n$ is a sequence of operators converging weakly to $V$.
\end{lemma}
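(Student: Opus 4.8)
The plan is to construct $V$ directly as the weak limit of the explicit operators $V_n = \Gamma_1^n \Gamma^k_{\widetilde\gamma_n} \Gamma_2^n$, and then to verify in turn that this limit exists, that it is unitary, that it intertwines $\rho^k_{\gamma^1}$ and $\rho^k_{\gamma^2}$, and that it satisfies the normalization $\Gamma^k_{\widehat\gamma} V \Omega = \Omega$. Throughout I use the elementary facts noted before Proposition~\ref{prop:auto}: two string operators of the same type combine by taking the symmetric difference of their paths (same-type Pauli matrices commute and square to $I$), and a string operator along a closed loop is a product of plaquette operators (type $Z$) or star operators (type $X$), each of which fixes the GNS vector $\Omega$, since $\omega_0(B_p) = \omega_0(A_s) = 1$ forces $B_p \Omega = A_s \Omega = \Omega$ by equality in Cauchy--Schwarz. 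I treat $k = X$ and $k = Z$; the case $k = Y$ then follows by writing $\rho^Y$ as a composition of an $X$- and a $Z$-automorphism and composing the corresponding intertwiners.

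First I would pin down the action of $V_n$ on $\Omega$. Fix any finite path $\widehat\gamma$ from $x$ to $y$. The composite path underlying $V_n$, namely $\gamma^1_n$ followed by $\widetilde\gamma_n$ and the reverse of $\gamma^2_n$, also runs from $x$ to $y$, so its symmetric difference with $\widehat\gamma$ is a closed loop; hence $\Gamma^k_{\widehat\gamma} V_n$ is a product of stabilizer operators and fixes $\Omega$, giving $V_n \Omega = \Gamma^k_{\widehat\gamma} \Omega$ for every $n$. The same computation shows $\Gamma^k_{\widehat\gamma}\Omega$ is independent of the chosen path $\widehat\gamma$. Next, for a local $B$ supported in a finite set $\Lambda_0$, I split $V_n = V_n^{\mathrm{in}} V_n^{\mathrm{out}}$ into its parts inside and outside $\Lambda_0$; since $V_n^{\mathrm{out}} \in \alg{A}(\Lambda_0^c)$ commutes with $B$, one has $\Ad V_n(B) = \Ad V_n^{\mathrm{in}}(B)$. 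Because $\widetilde\gamma_n$ escapes every bounded region while $\gamma^1,\gamma^2$ meet $\Lambda_0$ only finitely often, $\supp(V_n) \cap \Lambda_0$ stabilizes for large $n$, so $\Ad V_n(B)$ stabilizes to a fixed local operator $B'$. Consequently $V_n B \Omega = \Ad V_n(B)\, V_n \Omega = B'\, \Gamma^k_{\widehat\gamma} \Omega$ converges in norm on the dense domain $\pi_0(\alg{A}_{loc})\Omega$. Uniform boundedness then yields an operator $V$ with $V_n \to V$ in the weak (indeed strong) operator topology, and $\norm{V B \Omega} = \lim \norm{V_n B \Omega} = \norm{B\Omega}$ shows $V$ is isometric. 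Applying the same argument to $V_n^* = \Gamma_2^n \Gamma^k_{\widetilde\gamma_n} \Gamma_1^n$ (the reverse composite) produces an isometric weak limit equal to $V^*$, so $V$ is unitary; and $n \to \infty$ in $V_n \Omega = \Gamma^k_{\widehat\gamma}\Omega$ gives $V\Omega = \Gamma^k_{\widehat\gamma}\Omega$, i.e. $\Gamma^k_{\widehat\gamma} V \Omega = \Omega$.

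The main point, and the step I expect to be the real work, is the intertwining property, which I would establish by showing that $V_n$ already intertwines exactly once $n$ is large. Fix a local $A$ in $\Lambda_0$ and choose $n$ so large that $\rho^k_{\gamma^1}(A) = \Ad \Gamma_1^n(A)$. Using that same-type strings commute and square to $I$, one finds $V_n \Gamma_1^n = \Gamma^k_{\widetilde\gamma_n} \Gamma_2^n$, whence $\Ad V_n(\rho^k_{\gamma^1}(A)) = \Ad(\Gamma^k_{\widetilde\gamma_n} \Gamma_2^n)(A)$. The path underlying $\Gamma^k_{\widetilde\gamma_n} \Gamma_2^n$ runs from $y$ along $\gamma^2$ and only detours through $\widetilde\gamma_n$ far away, so inside $\Lambda_0$ it coincides with $\gamma^2$; hence for $n$ large this equals $\Ad \Gamma_2^{m_0}(A) = \rho^k_{\gamma^2}(A)$. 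Thus $V_n \rho^k_{\gamma^1}(A) = \rho^k_{\gamma^2}(A) V_n$ for all sufficiently large $n$, and passing to the weak limit (right and left multiplication by a fixed bounded operator being weakly continuous) gives $V \rho^k_{\gamma^1}(A) = \rho^k_{\gamma^2}(A) V$, first for local $A$ and then, by norm density and continuity, for all $A \in \alg{A}$. The bookkeeping of how the composite path looks inside a fixed finite region for large $n$ is the one place that requires care; everything else reduces to the commutation relations of the Pauli matrices and the stabilizer property of $\Omega$.
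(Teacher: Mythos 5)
Your proof is correct, and it reorganizes the argument in a way that differs from the paper's. The paper first establishes \emph{existence} of $V$ abstractly: for paths with the same starting site it invokes uniqueness of the GNS representation of $\omega^x_k$ to produce a unitary intertwiner with $V\Omega=\Omega$, reduces the general case to this one by pre-composing with a finite string operator $\Gamma^k_{\widetilde\gamma}$, and only then identifies $V$ as the weak limit of the net $V_n$ by computing matrix elements against the dense set $\rho_1^k(\alg{A}_{loc})\Omega$. You instead build $V$ from scratch as the limit of the explicit net, verifying unitarity and the intertwining relation directly; the key computations (loop operators fix $\Omega$, $V_n$ intertwines exactly for large $n$, uniform boundedness plus convergence on a dense domain) are the same as the paper's, but your organization is more self-contained, treats arbitrary starting sites uniformly, and yields the slightly stronger conclusion that the convergence is \emph{strong} (indeed $V_nB\Omega$ is eventually constant), not merely weak. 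One small point you leave implicit: the lemma asserts that \emph{every} admissible choice of connecting paths $\{\widetilde\gamma_n\}$ yields the \emph{same} limit $V$. Your construction gives, for each choice, a unitary intertwiner satisfying $V\Omega=\Gamma^k_{\widehat\gamma}\Omega$; to see these all coincide you should add the one-line Schur argument (two unitary intertwiners between the irreducible representations $\pi_0\circ\rho_1^k$ and $\pi_0\circ\rho_2^k$ differ by a phase, which is fixed by their common, nonzero value on $\Omega$) --- this is exactly the uniqueness step the paper places at the start of its proof.
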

\begin{proof}
First note that $A_s \Omega = B_p \Omega = \Omega$ for all star and plaquette operators. Indeed,
\[
\norm{(A_s - I) \Omega}^2 = \omega_0( (A_s-I)^*(A_s-I) ) = 2-2 = 0,
\]
for all $A_s$. A similar calculation holds for the operators $B_p$. Note that this property can be interpreted as the ground state vector minimizing the value of each local Hamiltonian~\cite{MR1951039}.

First note that a unitary $V$ as in the statement is necessarily unique because any unitary intertwiner from $\rho_{\gamma_1}^k$ to $\rho_{\gamma_2}^k$ is a scalar multiple of $V$, by Schur's lemma and irreducibility of $\pi_0$. To show existence, first consider (for simplicity) the case where $\gamma^1$ and $\gamma^2$ start at the same site $x$. As remarked earlier in the proof of Theorem~\ref{thm:select}, $\Omega$ is a cyclic vector for $\rho_1^k$ and for $\rho_2^k$ (we will write $\rho_1^k$ instead of $\rho_{\gamma^1}^k$ in the proof). Moreover, the corresponding vector state is $\omega^x_k$. By uniqueness of the GNS construction, there is a unitary $V$ such that $V \rho_1^k(A) = \rho_2^k(A) V$ for all $A \in \alg{A}$, and $V\Omega = \Omega$. 

Choose paths $\widetilde{\gamma}_n$ as in the statement of the lemma. The path obtained by concatenating $\widetilde{\gamma}_n$ with the paths $\gamma^1_n$ and $\gamma^2_n$ can be seen as a loop based at $x$ that gets larger and larger as $n$ gets bigger. Now consider a sequence $V_n$ of unitaries defined by $V_n = \Gamma_1^n \Gamma_{2}^n \Gamma_{\widetilde{\gamma}_n}^k$ where $\Gamma_i^n$ is defined in the statement of the Lemma. Note that $V_n$ is a product of star and plaquette operators, since it is the path operator of a closed loop. Hence, $V_n \Omega = \Omega$ by the observation above. Suppose $B \in \alg{A}_{loc}$. Let $N$ be such that $\widetilde{\gamma}_n \cap \supp(B) = \emptyset$ for all $n \geq N$. Then from locality, one can easily verify that $V_n \rho_1^k(B) = \rho^k_2(B) V_n$ for all $n \geq N$, in other words,
\[
	\lim_{n \to \infty} \langle \rho_1^k(A) \Omega, V_n \rho_1^k(B) \Omega \rangle = \lim_{n \to \infty} \langle \rho_1^k(A) \Omega, \rho_2^k(B) V_n\Omega \rangle =  \langle \Omega, \rho_1^k(A)^* \rho^k_2(B) \Omega \rangle,
\]
for all $A,B \in \alg{A}_{loc}$. On the other hand, for each $A,B \in \alg{A}_{loc}$,
\[
	\langle \rho_1^k(A) \Omega, V \rho_1^k(B) \Omega \rangle = \langle \Omega, \rho_1^k(A)^* \rho_2^k(B) \Omega \rangle,
\]
since $V \Omega = \Omega$. The sequence $V_n$ is uniformly bounded and because $\rho_1^k(\alg{A}_{loc}) \Omega$ is dense in $\mc{H}_0$, since $\rho_1^k$ is an automorphism, it follows that $V_n \to V$ weakly. Seeing that any path $\widehat{\gamma}$  from $x$ to $x$ is a loop, it is clear that $\Gamma_{\widehat{\gamma}}^k V \Omega = \Omega$. 

As for the general case, suppose $\gamma^1$ starts at the site $x$ and $\gamma^2$ starts at the site $y$. Choose a path $\widetilde{\gamma}$ from $x$ to $y$. Then $\widehat{\rho} := \Ad \Gamma^k_{\widetilde{\gamma}} \circ \rho_1^k$ is defined by a path starting at $y$. By the argument above, there is a unitary $\widehat{V}$ intertwining $\widehat{\rho}$ and $\rho_2^k$ such that $\widehat{V} \Omega = \Omega$. Set $V = \Gamma_{\widetilde{\gamma}}^k \widehat{V}$. It follows that $V$ is an intertwiner from $\rho_1^k$ to $\rho_2^k$ that satisfies $\Gamma_{\overline{\gamma}}^k V \Omega = \Omega$ for all paths $\overline{\gamma}$ from $x$ to $y$, because $\Gamma_{\widetilde{\gamma}}^k \Gamma_{\overline{\gamma}}^k$ is the path operator of a loop. The claim on the converging net follows from the construction.
\end{proof}

A pleasant consequence of the above proof is that an explicit sequence converging to the intertwiners is given, which makes it possible to do explicit calculations. A direct consequence of the Lemma is that we have some control over the algebras containing the unitary intertwiners, a point where usually Haag duality is used.
\begin{theorem}
  \label{thm:intertwiner}
  Suppose $\Lambda_1$ and $\Lambda_2$ are two cones such that there is another cone $\Lambda \supset \Lambda_1 \cup \Lambda_2$. For $k=X,Y,Z$, consider $\rho_i^k \cong \pi_{\omega_k}$ localized in $\Lambda_i$ for $i=1,2$, defined by paths $\gamma^i$ extending to infinity. Let $W$ be a unitary such that $W \rho_1^k(A) = \rho_2^k(A) W$ for all $A \in \alg{A}$. Then $W \in \alg{A}(\Lambda)''$.
\end{theorem}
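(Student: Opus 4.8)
The plan is to reduce the statement to the explicit weak-limit description of intertwiners furnished by Lemma~\ref{lem:intertwiner}. First I would observe that since $\rho_1^k$ and $\rho_2^k$ are automorphisms and $\pi_0$ is irreducible, both $\pi_0 \circ \rho_1^k$ and $\pi_0 \circ \rho_2^k$ are irreducible. By Schur's lemma the space of intertwiners between them is at most one-dimensional; this is precisely the uniqueness already recorded in the proof of Lemma~\ref{lem:intertwiner}. Consequently any unitary intertwiner $W$ satisfying $W \rho_1^k(A) = \rho_2^k(A) W$ is a scalar multiple $W = \lambda V$ of the canonical intertwiner $V$ constructed there, with $|\lambda| = 1$. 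Since $\lambda I$ lies in every von Neumann algebra containing the identity, it is enough to prove $V \in \alg{A}(\Lambda)''$.

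Next I would invoke the approximating sequence $V_n = \Gamma_1^n \Gamma^k_{\widetilde{\gamma}_n} \Gamma_2^n$ with $V = \wlim_{n \to \infty} V_n$ from Lemma~\ref{lem:intertwiner}. The idea is to arrange all the string operators defining $V_n$ to be localized in $\Lambda$. Since the defining paths may be chosen with $\gamma^1 \subset \Lambda_1 \subset \Lambda$ and $\gamma^2 \subset \Lambda_2 \subset \Lambda$, the operators $\Gamma_1^n$ and $\Gamma_2^n$ are automatically supported in $\Lambda$. It then remains to select the connecting paths $\widetilde{\gamma}_n$, joining the $n$-th site of $\gamma^1$ to the $n$-th site of $\gamma^2$, so that they too lie in $\Lambda$ while still satisfying $\dist(\widetilde{\gamma}_n, x) \to \infty$. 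This is exactly where the hypothesis that $\Lambda$ is a single cone with $\Lambda \supset \Lambda_1 \cup \Lambda_2$ enters: a cone is a convex region, so any two lattice sites of $\Lambda$ can be joined by a lattice path inside $\Lambda$, and since the endpoints recede to infinity along the two boundary paths, these connecting paths can simultaneously be kept far from the apex $x$.

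With such a choice, each $V_n$ is a string operator supported in $\Lambda$, so $V_n \in \pi_0(\alg{A}(\Lambda)) \subset \alg{A}(\Lambda)''$. Because $\alg{A}(\Lambda)''$ is a von Neumann algebra, hence weakly closed, and $V_n \to V$ weakly by Lemma~\ref{lem:intertwiner}, I conclude $V \in \alg{A}(\Lambda)''$, and therefore $W = \lambda V \in \alg{A}(\Lambda)''$.

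The hard part, and the only genuinely non-formal step, is the routing of the connecting paths: one must keep them entirely within $\Lambda$ (using convexity of the cone) while forcing $\dist(\widetilde{\gamma}_n, x) \to \infty$, so that the weak-convergence hypothesis of Lemma~\ref{lem:intertwiner} remains available for this particular choice. Everything else is the combination of Schur's lemma with the weak closedness of $\alg{A}(\Lambda)''$, so the bulk of the work is already packaged in the lemma and the cone geometry does the rest.
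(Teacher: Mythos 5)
Your proposal is correct and follows the paper's own argument: Schur's lemma reduces $W$ to a scalar multiple of the canonical intertwiner $V$ of Lemma~\ref{lem:intertwiner}, and the approximating string operators $V_n$ are routed inside $\Lambda$ so that weak closedness of $\alg{A}(\Lambda)''$ gives the conclusion. The paper merely states that ``the geometric situation makes it clear'' that the $V_n$ can be chosen in $\alg{A}(\Lambda)$; your explicit discussion of routing the connecting paths $\widetilde{\gamma}_n$ within the cone while letting them recede to infinity is exactly the content of that remark.
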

\begin{proof}
  By Schur's lemma, $W$ is a multiple of the intertwiner $V$ in the previous lemma. The geometric situation makes it clear that a net $V_n$ as in the lemma can be chosen to be a net in $\alg{A}(\Lambda)$. This net converges weakly to $V$, by the previous Lemma.
\end{proof}
\begin{remark}
	Again it is not essential that $\Lambda$ as in the theorem is a cone. It is enough to be able to chose paths $\widetilde{\gamma}_n$ in as in Lemma~\ref{lem:intertwiner} that lie inside $\Lambda$. But note that the smaller $\Lambda$ is, the more control one has over the algebra where the intertwiners live in.
\end{remark}

\begin{proposition}
  The representations $\rho^k$ are covariant with respect to the action $\tau_x$ of translations. That is, for each $x \in \mathbb{Z}^2$ there is a unitary $W(x)$ such that $\rho^k(\tau_x(A)) = W(x) \rho^k(A) W(x)^*$ for all $A \in \alg{A}$ and the map $x \mapsto W(x)$ is a group homomorphism.
\end{proposition}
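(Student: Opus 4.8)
The plan is to implement the translation covariance by composing the translation implementer in the vacuum sector with a charge transporter of the type produced in Lemma~\ref{lem:intertwiner}. Throughout I identify $A$ with $\pi_0(A)$ and write $\rho^k = \rho^k_\gamma$ for a fixed path $\gamma$ of type $k$ starting at a site $s$.

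First I would record the elementary intertwining relation between the translations and the string automorphisms. For a local $A$ one has $\rho^k_\gamma(A) = \Ad \Gamma^{\gamma_n}_k(A)$ for $n$ large, and since $\tau_x$ carries the string operator $\Gamma^{\gamma_n}_k$ to the string operator $\Gamma^{(\gamma+x)_n}_k$ of the translated path, it follows that $\tau_x\circ\rho^k_\gamma = \rho^k_{\gamma+x}\circ\tau_x$; equivalently $\rho^k_\gamma\circ\tau_x = \tau_x\circ\rho^k_{\gamma-x}$. Next, because the ground state $\omega_0$ is translation invariant (Proposition~\ref{prop:gstate}), the translations are unitarily implemented in the GNS representation: there is a genuine unitary representation $x\mapsto U(x)$ of $\mathbb{Z}^2$ on $\mc{H}_0$ with $U(x)\Omega = \Omega$ and $\pi_0\circ\tau_x = \Ad U(x)\circ\pi_0$ (the homomorphism property and the normalization are forced by $U(x)\Omega=\Omega$ together with irreducibility of $\pi_0$). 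Finally, by Lemma~\ref{lem:intertwiner} applied to the pair of paths $\gamma-x$ and $\gamma$ there is a canonical unitary charge transporter $V_x$ with $V_x\rho^k_{\gamma-x}(A) = \rho^k_\gamma(A)V_x$, normalized by $\Gamma^k_{\widehat\gamma}V_x\Omega = \Omega$ for a finite path $\widehat\gamma$ from $s-x$ to $s$.

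Combining these three ingredients gives the covariance. Indeed, $\pi_0\circ\rho^k_\gamma\circ\tau_x = \Ad U(x)\circ\pi_0\circ\rho^k_{\gamma-x} = \Ad(U(x)V_x^*)\circ\pi_0\circ\rho^k_\gamma$, so that $W(x) := U(x)V_x^*$ satisfies $\rho^k(\tau_x(A)) = W(x)\rho^k(A)W(x)^*$ for all $A\in\alg{A}$. This already settles the existence of implementing unitaries; the remaining, and genuinely delicate, point is that $x\mapsto W(x)$ can be taken to be a strict group homomorphism.

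For the homomorphism property I would argue as follows. Since $\pi_0$ is irreducible and $\rho^k$ is an automorphism, $\pi_0\circ\rho^k$ is irreducible, so the unitary $W(x+y)^*W(x)W(y)$, which commutes with $\rho^k(\alg{A})$, is a scalar $c(x,y)\in U(1)$; thus $x\mapsto W(x)$ is a priori only a projective representation with $2$-cocycle $c$. The main obstacle is that $H^2(\mathbb{Z}^2,U(1))$ is nontrivial, so triviality of $c$ is not formal and must be verified by hand. I expect to do this by evaluating the relevant unitaries on the cyclic vector $\Omega$: using that $U(y)^*V_xU(y)$ intertwines $\rho^k_{\gamma-x-y}\to\rho^k_{\gamma-y}$, one checks that $V_y\,U(y)^*V_xU(y)$ and $V_{x+y}$ are both charge transporters from $\rho^k_{\gamma-(x+y)}$ to $\rho^k_\gamma$, hence proportional; the proportionality constant is then pinned to $1$ by the normalization $\Gamma^k_{\widehat\gamma}V\Omega = \Omega$ of Lemma~\ref{lem:intertwiner}, the invariance $U(x)\Omega=\Omega$, and the fact that a string operator along a closed loop is a product of star and plaquette operators and therefore fixes $\Omega$. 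Tracking how the finite connecting strings $\widehat\gamma$ compose into loops under translation is the bookkeeping heart of the argument, and is where all the care is needed.
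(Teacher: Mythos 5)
Your proposal is correct and follows essentially the same route as the paper: write $W(x)$ as the product of the vacuum translation unitary $U(x)$ with the canonical charge transporter of Lemma~\ref{lem:intertwiner} between $\rho^k_{\gamma}$ and $\rho^k_{\gamma-x}$, observe that irreducibility only gives a $U(1)$-valued $2$-cocycle, and then kill the cocycle via the identity $U(y)^*V_xU(y)=V_{x+y}V_y^*$ (in your adjoint convention, $V_{x+y}=V_y\,\tau_{-y}(V_x)$), pinned down by the normalization on $\Omega$ and the fact that loop operators fix $\Omega$. The paper verifies this last identity by the same bookkeeping, taking the weak limit of the translated approximating net $\tau_{-y}(V_x^n)$ and invoking the uniqueness clause of Lemma~\ref{lem:intertwiner}, so your sketch matches its level of detail as well as its strategy.
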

\begin{proof}
	Let $\gamma$ denote the string (starting at the site $x_0$) defining $\rho^k$. For $x \in \mathbb{Z}^2$, consider the translated string $\widehat{\gamma} = \gamma-x$. This defines an automorphism $\widehat{\rho}^k$. In fact, $\widehat{\rho}^k = \tau_{-x} \circ \rho^k \circ \tau_x$. Then by Lemma~\ref{lem:intertwiner} there is a unitary intertwiner $V_x$ from $\rho^k$ to $\widehat{\rho}^k$. We choose $V_x$ such that the condition in Lemma~\ref{lem:intertwiner} is satisfied. 
  
Write $U(x)$ for the unitaries that implement the translations in the GNS representation of $\omega_0$. Define $W(x) = U(x) V_x$. It then follows that $\rho^k(\tau_x(A)) = W(x) \rho^k(A) W(x)^*$ for all $A \in \alg{A}_{loc}$, and hence by continuity for all $A \in \alg{A}$. It remains to show that $W(x)$ is a representation of $\mathbb{Z}^2$. By irreducibility of $\rho^k$ it follows that $W(x+y) = \lambda(x,y) W(x) W(y)$ with $\lambda$ a 2-cocycle of $\mathbb{Z}^2$ taking values in the unit circle. The claim is that $\lambda$ is in fact trivial.
  
This would follow from the equation $U(y)^* V_x U(y) = V_{x+y} V_y^*$ for all $x,y \in \mathbb{Z}^2$. Note that the operator on the right hand side is an intertwiner from $\rho_{\gamma-y}^k$ to $\rho_{\gamma-(x+y)}^k$ satisfying the condition in Lemma~\ref{lem:intertwiner}. This equation can be verified by noting that $V_{x+y}$ and $V_y$ commute with path operators (this should be clear from the construction of a converging net) and by the following observation: a path operator $\Gamma_{\widehat{\gamma}}$ (where $\widehat{\gamma}$ is a path from $x_0-y$ to $x_0-(x+y)$) can be written as $\Gamma_{\gamma_1} \Gamma_{\gamma_2}^*$ with $\gamma_1$ a path from $x_0$ to $x_0-(x+y)$ and $\gamma_2$ a path from $x_0$ to $x_0-y$. Let $V^n_x$ be a sequence as in Lemma~\ref{lem:intertwiner} converging weakly to $V_x$. Then for the translated sequence $\tau_{-y}(V^n_x)$
\[
\wlim_{n \to \infty} \tau_{-y}(V^n_x) = V_{x+y} V_y^*,
\]
by the same Lemma. The result follows since the map $A \mapsto \tau_{-y}(A) = U(y)^* A U(y)$ is weakly continuous, hence the left hand side is equal to $U(y)^* V_x U(y)$.
\end{proof}
It is possible to define a tensor product of localized endomorphisms. If $\rho_1$ and $\rho_2$ are localized in cones $\Lambda_1$ and $\Lambda_2$, the basic idea is to define an endomorphism $\rho_1 \otimes \rho_2$ by $(\rho_1 \otimes \rho_2)(A) = \rho_1(\rho_2(A))$. If $\Lambda \supset \Lambda_1 \cup \Lambda_2$ is a cone, it follows that $\rho_1 \otimes \rho_2$ is localized in $\Lambda$. In order to get a categorical tensor structure, one would then like to define a tensor product for \emph{intertwiners}. If $T_i$, $i=1,2$ are intertwiners from $\rho_i$ to $\sigma_i$ (and $T_i \in \alg{A}$), the reader will have no difficulty showing that $T_1 \otimes T_2 := T_1 \rho_1(T_2)$ is an intertwiner from $\rho_1 \otimes \rho_2$ to $\sigma_1 \otimes \sigma_2$. In the terminology of category theory, this would turn the category of cone localized automorphisms with intertwiners as morphisms into a strict tensor category. The trivial endomorphism $\iota$ is the tensor unit. Note that the unit operator $I$ of $\alg{A}$ can be regarded as an intertwiner from $\rho$ to itself for any endomorphism $\rho$. To indicate this, we sometimes write $I_\rho$. The distinction is important in the definition of the tensor product of intertwiners.

There is, however, one problem with this definition: the intertwiners are elements of the algebra $\alg{A}(\Lambda)''$ rather than of $\alg{A}(\Lambda)$ (recall that we identified $\pi_0(\alg{A})$ with $\alg{A}$). There is no reason why they should be contained in the quasi-local algebra $\alg{A}$, because this algebra is not weakly closed in general. Since the localized endomorphisms are (a priori) only defined on $\alg{A}$, the above definition therefore does not make sense. 

A possible solution is to introduce an \emph{auxiliary algebra} that contains the intertwiners~\cite{MR660538}. Choose an arbitrary cone $\Lambda_a$, which will be fixed from now on. The cone can be interpreted as a ``forbidden'' direction, not unlike the technique of puncturing the circle. Introduce a partial ordering on $\mathbb{Z}^2$ by defining
\[
	x \leq y \Leftrightarrow (\Lambda_a +y) \subset (\Lambda_a+x) \Leftrightarrow (\Lambda_a+x)^c \subset (\Lambda_a+y)^c.
\]
Now $(\mathbb{Z}^2, \leq)$ is a directed set (each pair of points has an upper bound with respect to $\leq$), hence it is possible to take the ($C^*$)-inductive limit
\begin{equation}
	\alg{A}^{\Lambda_a} = \overline{\bigcup_{x \in \mathbb{Z}^2} \alg{A}((\Lambda_a+x)^c)''}^{\norm{\cdot}}.
  \label{eq:auxalg}
\end{equation}
Note that $\alg{A}^{\Lambda_a+x} = \alg{A}^{\Lambda_a}$ for all $x \in \mathbb{Z}^2$. Clearly, $\alg{A} \subset \alg{A}^{\Lambda_a}$. Moreover, if $\Lambda$ is a cone such that $\Lambda \subset (\Lambda_a+x)^c$ for some $x$, then $\alg{A}(\Lambda)'' \subset \alg{A}^{\Lambda_a}$. An important point\footnote{\label{ftn:endo}In the case of algebraic quantum field theory, the main point is to obtain \emph{endomorphisms} of the auxiliary algebra from \emph{representations} of the quasi-local algebra. In the present model, however, we already have automorphisms of $\alg{A}$.} is that the automorphisms we consider can be extended to $\alg{A}^{\Lambda_a}$.

\begin{proposition}
	Let $\rho$ be an automorphism defined by a path extending to infinity. Then $\rho$ has a unique extension $\rho^{\Lambda_a}$ to $\alg{A}^{\Lambda_a}$ that is weakly continuous on $\alg{A}( (\Lambda_a+x)^c)''$ for any $x \in \mathbb{Z}^2$. Moreover, $\rho^{\Lambda_a}(\alg{A}^{\Lambda_a}) \subset \alg{A}^{\Lambda_a}$; in other words, it is an endomorphism of the auxiliary algebra.
\end{proposition}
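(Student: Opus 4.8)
The plan is to build the extension locally, on each von Neumann algebra $N_x := \alg{A}((\Lambda_a+x)^c)''$, and then glue the pieces along the directed set $(\mathbb{Z}^2,\leq)$. First I would show that for each $x$ the restriction $\rho\upharpoonright\alg{A}((\Lambda_a+x)^c)$ admits a unique \emph{normal} extension $\rho_x\colon N_x\to\alg{A}^{\Lambda_a}$. Then I would observe that for $x\leq y$ one has $N_x\subset N_y$ and, by uniqueness, $\rho_y\upharpoonright N_x=\rho_x$, so the $\rho_x$ fit together to a single $*$-homomorphism on the $*$-algebra $\bigcup_x N_x$; since a $*$-homomorphism is norm-contractive, it extends by continuity to the norm completion, which by definition \eqref{eq:auxalg} is $\alg{A}^{\Lambda_a}$. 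Uniqueness of $\rho^{\Lambda_a}$ is then automatic: two candidates that are normal on each $N_x$ agree on the $\sigma$-weakly dense subalgebra $\alg{A}((\Lambda_a+x)^c)$, hence on all of $N_x$, hence on $\alg{A}^{\Lambda_a}$ by norm-density.

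The crux is therefore the construction and normality of $\rho_x$ on the single von Neumann algebra $N_x$. Here the naive guess---extending by the norm limit $\lim_n \Ad\Gamma^n$ of the defining string operators---fails, because that net does not converge in the weak operator topology on all of $N_x$; in fact $\Gamma^n\to 0$ weakly, since the vectors $\Gamma^n\Omega$ are mutually orthogonal (for $n\neq m$, $\langle\Gamma^n\Omega,\Gamma^m\Omega\rangle=\omega_0$ of an open string, which vanishes). Instead I would use transportability. By Theorem~\ref{thm:select} the sector of $\rho$ can be realized by an automorphism $\rho_1$ localized in a cone $\Lambda_1\subset\Lambda_a+x$ contained in the reference cone, together with a unitary intertwiner $V$ with $V\rho(A)=\rho_1(A)V$ for all $A\in\alg{A}$ (Lemma~\ref{lem:intertwiner}); the path-independence of the states $\omega^x_k$ is what makes this transport available without changing the sector. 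Since $\rho_1$ is localized in $\Lambda_1\subset\Lambda_a+x$, it acts as the identity on $\alg{A}(\Lambda_1^c)\supset\alg{A}((\Lambda_a+x)^c)$, so on the latter algebra $\rho=\Ad V^*$. As $\Ad V^*$ is $\sigma$-weakly continuous on all of $B(\mc{H}_0)$, the map $\rho_x:=\Ad V^*\upharpoonright N_x$ is a normal extension of $\rho\upharpoonright\alg{A}((\Lambda_a+x)^c)$.

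It remains to see that $\rho_x$ lands inside $\alg{A}^{\Lambda_a}$. For this I would not use $V$ (whose localization cone meets the reference direction), but argue directly from the localization of $\rho$. Writing $\Lambda_\rho$ for the cone in which $\rho$ is localized---and taking, as is implicit throughout, the defining path of $\rho$ so that $\Lambda_\rho$ avoids the reference direction, i.e.\ $\Lambda_\rho\subset(\Lambda_a+y)^c$ for some $y\geq x$---one has for every finitely supported $B\in\alg{A}((\Lambda_a+x)^c)$ that $\supp\rho(B)\subset\Lambda_\rho\cup\supp B\subset(\Lambda_a+y)^c$, so $\rho(\alg{A}((\Lambda_a+x)^c))\subset\alg{A}((\Lambda_a+y)^c)$. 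Because $\rho_x$ is normal and $\alg{A}((\Lambda_a+x)^c)$ is $\sigma$-weakly dense in $N_x$, this gives $\rho_x(N_x)=\overline{\rho(\alg{A}((\Lambda_a+x)^c))}^{\,w}\subset\alg{A}((\Lambda_a+y)^c)''=N_y\subset\alg{A}^{\Lambda_a}$, which simultaneously yields the endomorphism property $\rho^{\Lambda_a}(\alg{A}^{\Lambda_a})\subset\alg{A}^{\Lambda_a}$.

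The hard part will be precisely the weak continuity on the cone-complement von Neumann algebras $N_x$---that is, justifying that $\rho$ can be realized as the inner perturbation $\Ad V^*$ there. This hinges on combining two inputs carefully: transporting the charge into a cone contained in the reference direction (so that $\rho$ becomes trivial on $\alg{A}((\Lambda_a+x)^c)$ up to the unitary $V$, via Lemma~\ref{lem:intertwiner} and Theorem~\ref{thm:intertwiner}), and the geometric bookkeeping relating $\Lambda_\rho$ to the receding family $\{\Lambda_a+x\}_x$ that keeps the image inside the auxiliary algebra. I would also use path-independence to check that the transports, and hence the intertwiners $V$, can be chosen compatibly across different $x$, which is what makes the gluing in the first paragraph legitimate.
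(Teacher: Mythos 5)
Your proposal is essentially the paper's own argument for the central step: transport the charge into a cone contained in $\Lambda_a+x$, observe that the transported automorphism acts trivially on $\alg{A}((\Lambda_a+x)^c)$, so that $\rho=\Ad V^*$ there, and take the (automatically unique) normal extension $\Ad V^*$ on $\alg{A}((\Lambda_a+x)^c)''$; the gluing over the directed set and the uniqueness argument are exactly what the paper leaves implicit in ``this procedure determines $\rho^{\Lambda_a}$ on all of $\alg{A}^{\Lambda_a}$.'' The one place you diverge is the containment $\rho^{\Lambda_a}(\alg{A}^{\Lambda_a})\subset\alg{A}^{\Lambda_a}$: you bound $\supp\rho(B)\subset\Lambda_\rho\cup\supp B$ and then need the extra hypothesis that the localization cone $\Lambda_\rho$ is asymptotically disjoint from the reference direction, i.e.\ $\Lambda_\rho\subset(\Lambda_a+y)^c$ for some $y$ --- a hypothesis not present in the statement of the proposition. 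The paper avoids this by using the model-specific fact that conjugation by a Pauli monomial preserves supports exactly, so $\rho(\alg{A}(\Lambda_f))\subset\alg{A}(\Lambda_f)$ for every \emph{finite} $\Lambda_f$, hence $\rho(\alg{A}((\Lambda_a+x)^c))\subset\alg{A}((\Lambda_a+x)^c)$ and, by normality, $\rho^{\Lambda_a}(N_x)\subset N_x$ with no assumption on where the defining path lives. Your route is the one that generalizes (it is the Buchholz--Fredenhagen argument, modulo Haag duality), while the paper's is sharper here because the automorphisms are explicitly inner limits of local unitaries; either suffices for the intended applications, where the relevant cones are in any case taken asymptotically disjoint from $\Lambda_a$. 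Your side remark that $\Gamma^n\to 0$ weakly (so the naive norm-limit extension cannot work) is correct and a worthwhile observation, though not needed for the proof.
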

\begin{proof}
	The proof is essentially the same as that of Lemma 4.1 of~\cite{MR660538}, except at points where duality is used. First, let $A \in \alg{A}( (\Lambda_a+x)^c)$. Since $\rho$ is localizable, there is a unitary $V$ such that $\rho(A) = V A V^*$ (choose a unitary equivalent endomorphism localized in $\Lambda_a+x$). This implies that $\rho$ is weakly continuous on $\alg{A}( (\Lambda_a+x)^c )$ and the unique weakly continuous extension can be given by $\rho^{\Lambda_a}(B) = V B V^*$ for $B \in \alg{A}( (\Lambda_a+x)^c)''$. This procedure determines $\rho^{\Lambda_a}$ on all of $\alg{A}^{\Lambda_a}$.

	To show that $\rho^{\Lambda_a}$ maps $\alg{A}^{\Lambda_a}$ into itself, first note that $\rho(\alg{A}(\Lambda)) \subset \alg{A}(\Lambda)$ for every finite set $\Lambda \subset \bonds$. Hence, by weak continuity,
\[
\rho^{\Lambda_a}(\alg{A}( (\Lambda_a+x)^c)'') = \rho(\alg{A}( (\Lambda_a+x)^c))'' \subset \alg{A}( (\Lambda_a+x)^c)'',
\]
which proves the claim.
\end{proof}
\begin{remark}
  In the proof of Buchholz and Fredenhagen, Haag duality is used to show that the extensions map the auxiliary algebra into itself (see also Footnote~\ref{ftn:endo}). The point is that using Haag duality it is possible to show that for representations localized in a cone $\Lambda$ one has $\rho(\alg{A}(\Lambda)) \subset \alg{A}(\Lambda)''$. Since we have an explicit description of the representations, we can directly prove the stronger statement $\rho(\alg{A}(\Lambda)) \subset \alg{A}(\Lambda)$ for the automorphisms considered in our model. However, the intertwiners are typically \emph{not} elements of $\alg{A}(\Lambda)$. 
\end{remark}

We now redefine the tensor product as $\rho_1 \otimes \rho_2 = \rho_1^{\Lambda_a} \circ \rho_2$. For the automorphisms that we have considered so far, this definition reduces to the old one. However, to define the tensor product of intertwiners, this definition is necessary. If $T$ is an intertwiner from $\rho_1$ to $\rho_2$ and $T \in \alg{A}(\Lambda)''$ for some cone $\Lambda$ asymptotically disjoint from $\Lambda_a$, then $S \otimes T := S \rho_1^{\Lambda_a}(T)$ is a well-defined intertwiner from $\rho_1 \otimes \rho_2$ to $\rho_1' \otimes \rho_2'$.

The tensor product gives rise to \emph{fusion rules}. A fusion rule gives a decomposition of the tensor product of two irreducible representations into a direct sum of irreducible representations. In Kitaev's model the rules are particularly simple. As remarked before, for each $k=X,Y,Z$, $\rho^k \otimes \rho^k = \iota$, where $\iota$ is the trivial endomorphism of $\alg{A}$. Furthermore, essentially by definition, $\rho^X \otimes \rho^Z \cong \rho^Y$. This determines the fusion rules for unitarily equivalent representations as well: unitaries setting up the equivalence can be defined using the tensor product.

Using the tensor product, in this case a \emph{braiding} can then be defined, similarly as in the DHR analysis~\cite{MR0297259}. This is a unitary operator $\varepsilon_{\rho_1, \rho_2}$ intertwining $\rho_1 \otimes \rho_2$ and $\rho_2 \otimes \rho_1$. First, consider two disjoint cones $\Lambda_1$ and $\Lambda_2$ that are both contained in $(\Lambda_a+x)^c$ for some $x$. We say that $\Lambda_1 < \Lambda_2$ if we can rotate $\Lambda_1$ counter-clockwise around the apex of the cone until it has non-empty intersection with $\Lambda_a+x$, such that at any intermediate angle it is disjoint from $\Lambda_2$. Note that for two disjoint cones either $\Lambda_1 < \Lambda_2$ or $\Lambda_2 < \Lambda_1$.

Now let $\rho_1, \rho_2$ be two localized automorphisms, as considered above, such that $\rho_1$ is localized in a cone $\Lambda_1$ and $\rho_2$ in $\Lambda_2$. Moreover, we demand that there is a cone $\Lambda \supset \Lambda_1 \cup \Lambda_2$. Note that $\rho_1 \otimes \rho_2$ is localized in $\Lambda$. Choose a cone $\widehat{\Lambda}_2$ such that $\widehat{\Lambda}_2 < \Lambda_1$. Then there is a unitary $V$ such that $V \rho_2(-) V^*$ is localized in $\widehat{\Lambda}_2$. This unitary can be chosen in $\alg{A}^{\Lambda_a}$~\cite{Naaijkens}. It then follows that $\varepsilon_{\rho_1, \rho_2} := (V \otimes I_{\rho_1})^*(I_{\rho_1} \otimes V) = V^* \rho_1^{\Lambda_a}(V)$ is an intertwiner from $\rho_1 \otimes \rho_2$ to $\rho_2 \otimes \rho_1$.

With this definition, one can prove the following result by adapting the proof in the DHR analysis (see e.g.~\cite{halvapp}) in a suitable way.
\begin{lemma}
  The braiding $\varepsilon_{\rho,\sigma}$ only depends on the condition $\widehat{\Lambda}_2 < \Lambda_1$, not on the specific choices made. Moreover, it satisfies the \emph{braid equations}
  \begin{equation}
	\begin{split}
	  \varepsilon_{\rho, \sigma \otimes \tau} &=  (I_\sigma \otimes \varepsilon_{\rho,\tau})(\varepsilon_{\rho,\sigma} \otimes I_\tau) \\
		\varepsilon_{\rho \otimes \sigma, \tau} &= (\varepsilon_{\rho,\tau} \otimes I_\sigma)(I_\rho \otimes \varepsilon_{\sigma,\tau}).
	\end{split}
	  \label{eq:braid}
  \end{equation}
  Furthermore, $\varepsilon_{\rho,\sigma}$ is natural in $\rho$ and $\sigma$: if $T$ is an intertwiner from $\rho$ to $\rho'$, then $\varepsilon_{\rho',\sigma} (T \otimes I) = (I \otimes T) \varepsilon_{\rho,\sigma}$, and similarly for $\sigma$.
\end{lemma}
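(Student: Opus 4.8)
The plan is to follow the DHR/Buchholz--Fredenhagen strategy, reducing the three assertions to two structural facts about the extended endomorphisms on the auxiliary algebra $\alg{A}^{\Lambda_a}$: (i) the \emph{locality} property that if $\mu$ is localized in a cone $\Lambda_\mu$ and $T \in \alg{A}(\Lambda_T)''$ with $\Lambda_T$ disjoint from $\Lambda_\mu$ (both asymptotically disjoint from $\Lambda_a$), then $\mu^{\Lambda_a}(T)=T$; and (ii) that operators localized in disjoint cones commute, which carries over to the weak closures $\alg{A}(\Lambda)''$. Fact (i) holds because $\mu(A)=A$ for $A\in\alg{A}(\Lambda_\mu^c)$, extended by weak continuity to $\alg{A}(\Lambda_\mu^c)''\supset\alg{A}(\Lambda_T)''$; fact (ii) is locality together with weak continuity. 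Throughout, $\varepsilon_{\rho,\sigma}=V^*\rho^{\Lambda_a}(V)$ is a unitary intertwiner $\rho\otimes\sigma\to\sigma\otimes\rho$, where $V$ transports $\sigma$ into a cone $\widehat{\Lambda}_\sigma<\Lambda_\rho$, and I will repeatedly use that $\rho^{\Lambda_a}$ is a weakly continuous homomorphism with $\rho^{\Lambda_a}\circ\sigma^{\Lambda_a}=(\rho\otimes\sigma)^{\Lambda_a}$ by uniqueness of the extension.

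First I would prove independence of the choices. Given two transporters $V,V'$ sending $\sigma$ into cones $\widehat{\Lambda}_\sigma,\widehat{\Lambda}_\sigma'$, both $<\Lambda_\rho$, set $T=V'V^*$. This $T$ intertwines two endomorphisms localized to the left of $\Lambda_\rho$, so by Theorem~\ref{thm:intertwiner} it lies in $\alg{A}(\widehat{\Lambda})''$ for a cone $\widehat{\Lambda}<\Lambda_\rho$ disjoint from $\Lambda_\rho$. Fact (i) gives $\rho^{\Lambda_a}(T)=T$, and the computation
\[
(V')^*\rho^{\Lambda_a}(V') = (V')^*\rho^{\Lambda_a}(T)\rho^{\Lambda_a}(V) = (V')^*T\,\rho^{\Lambda_a}(V) = V^*\rho^{\Lambda_a}(V)
\]
shows the two braidings agree. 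Independence of the localization cone of $\rho$ is handled identically, transporting $\rho$ and using that its transporter lies to the right.

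Next come the braid equations and naturality, where I exploit independence to pick convenient transporters. For the first braid equation I would transport $\sigma$ and $\tau$ separately to the left of $\Lambda_\rho$ by $V_\sigma,V_\tau$ and note that $V_\sigma\otimes V_\tau=V_\sigma\,\sigma^{\Lambda_a}(V_\tau)$ transports $\sigma\otimes\tau$ there as well; inserting it into the definition gives $\varepsilon_{\rho,\sigma\otimes\tau}=\sigma^{\Lambda_a}(V_\tau^*)\,\varepsilon_{\rho,\sigma}\,(\rho\otimes\sigma)^{\Lambda_a}(V_\tau)$, while unwinding the right-hand side yields $\sigma^{\Lambda_a}(V_\tau^*)\,(\sigma\otimes\rho)^{\Lambda_a}(V_\tau)\,\varepsilon_{\rho,\sigma}$. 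These coincide precisely because $\varepsilon_{\rho,\sigma}$ intertwines $\rho\otimes\sigma$ and $\sigma\otimes\rho$, a relation extended to $\alg{A}^{\Lambda_a}$ by weak continuity; the second equation follows symmetrically. For naturality, given $T\colon\rho\to\rho'$ one has $T\otimes I_\sigma=T$ and $I_\sigma\otimes T=\sigma^{\Lambda_a}(T)$; choosing a single transporter $V$ for $\sigma$ relative to a common cone containing the supports of $\rho,\rho'$, the relation $(\rho')^{\Lambda_a}(V)T=T\rho^{\Lambda_a}(V)$ turns $\varepsilon_{\rho',\sigma}T$ into $V^*T\rho^{\Lambda_a}(V)$, and then $\sigma^{\Lambda_a}(T)=T$ (fact (i)) together with $[V,T]=0$ (fact (ii)) rewrites this as $\sigma^{\Lambda_a}(T)\varepsilon_{\rho,\sigma}=(I_\sigma\otimes T)\varepsilon_{\rho,\sigma}$. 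Naturality in $\sigma$ is analogous.

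I expect the main obstacle to be geometric bookkeeping rather than algebra: at each step one must arrange the cones so that the relevant transporters genuinely lie in cone algebras disjoint from the supports where facts (i) and (ii) are invoked, while keeping everything asymptotically disjoint from the forbidden direction $\Lambda_a$ so that all operators remain in $\alg{A}^{\Lambda_a}$. Verifying that such compatible configurations always exist, and that the intertwining identities valid a priori only on $\alg{A}$ extend to $\alg{A}^{\Lambda_a}$ by weak continuity, is the delicate part; once this is secured, the algebraic manipulations above are forced.
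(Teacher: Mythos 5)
The paper never writes out a proof of this lemma---it simply defers to ``adapting the proof in the DHR analysis''---and your proposal is essentially that standard Buchholz--Fredenhagen/DHR argument, so the route you take is the intended one. The independence-of-choices step and both braid equations are correct as you set them up: the key identities $V_\sigma\otimes V_\tau=V_\sigma\,\sigma^{\Lambda_a}(V_\tau)$, $\rho^{\Lambda_a}\circ\sigma^{\Lambda_a}=(\rho\otimes\sigma)^{\Lambda_a}$, and the weakly continuous extension of the intertwining relation of $\varepsilon_{\rho,\sigma}$ to cone von Neumann algebras are exactly what is needed, and the cone bookkeeping you flag (when $\widehat{\Lambda}_\sigma$ and $\widehat{\Lambda}_\sigma'$ cannot be enclosed in a single cone that is still $<\Lambda_\rho$, interpolate through a third choice) is the standard repair.

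The one genuine flaw is in the naturality step. You justify the final rewriting by the two claims $\sigma^{\Lambda_a}(T)=T$ (your fact (i)) and $[V,T]=0$ (your fact (ii)). Both require $\supp T$ to be disjoint from the localization cone of $\sigma$, respectively from the cone algebra containing $V$ (which covers $\Lambda_\sigma\cup\widehat{\Lambda}_\sigma$). But the braiding is defined---and is used in Section~6---for $\rho,\rho',\sigma$ all localized in the \emph{same} cone $\Lambda$; only the auxiliary cone satisfies $\widehat{\Lambda}_\sigma<\Lambda_\rho$. Concretely, if $\sigma=\rho^X$ and $T$ is a weak limit of type-$Z$ string operators inside $\Lambda$, then $\sigma^{\Lambda_a}(T)=-T$ is possible, so fact (i) fails there, and likewise $V\in\alg{A}(\widetilde{\Lambda})''$ with $\widetilde{\Lambda}\supset\Lambda_\sigma\cup\widehat{\Lambda}_\sigma$ need not commute with $T$. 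The two false steps happen to cancel: the identity you actually need is $V^*T=\sigma^{\Lambda_a}(T)V^*$, and the correct derivation uses the \emph{transported} morphism $\widehat{\sigma}=\Ad V\circ\sigma$, which \emph{is} localized in $\widehat{\Lambda}_\sigma$, disjoint from $\supp T$ (choose $\widehat{\Lambda}_\sigma<\Lambda_{\rho'}$ as well, which the independence part permits). Then
\[
\sigma^{\Lambda_a}(T)=V^*\,\widehat{\sigma}^{\Lambda_a}(T)\,V=V^*TV,
\]
so $\sigma^{\Lambda_a}(T)V^*=V^*T$, which substituted into $\varepsilon_{\rho',\sigma}T=V^*T\rho^{\Lambda_a}(V)$ gives $(I_\sigma\otimes T)\varepsilon_{\rho,\sigma}$ as required. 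With that substitution your argument is complete.
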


In Lemma~\ref{lem:intertwiner}, a net converging to the charge transporters was explicitly constructed. This makes it possible to calculate the braiding operators exactly. In the subscript of the braiding, we will sometimes write $X,Y$ or $Z$ instead of $\rho^X, \rho^Y$ and $\rho^Z$. 
\begin{theorem}
  \label{thm:statistics}
Let $\rho_1, \rho_2$ be automorphisms defined by strings extending to infinity in some cone $\Lambda$. Suppose that each automorphism is of type X or type Z. The braid operators in each of the possible cases are then given by $\varepsilon_{X,X} = \varepsilon_{Z,Z} = I$ and $\varepsilon_{X,Z} = \pm I$. If $\varepsilon_{X,Z} =I$, then $\varepsilon_{Z,X} = -I$ and vice versa. 
\end{theorem}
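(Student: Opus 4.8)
The plan is to compute each braiding operator explicitly, using the net of charge transporters furnished by Lemma~\ref{lem:intertwiner}. First I would observe that in every case $\rho_1 \otimes \rho_2$ is irreducible — it is $\iota$ when the types agree and $\cong \rho^Y$ when they differ — and that $\rho_1 \otimes \rho_2 = \rho_2 \otimes \rho_1$ holds \emph{literally} as automorphisms, since $\Ad$ is insensitive to the scalar phase by which $\Gamma^{k_1}\Gamma^{k_2}$ and $\Gamma^{k_2}\Gamma^{k_1}$ differ. Hence $\varepsilon_{\rho_1,\rho_2}$ is a self-intertwiner of an irreducible automorphism and, by Schur's lemma, a genuine scalar; it suffices to determine one phase in each case. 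Writing $\varepsilon_{\rho_1,\rho_2} = V^*\rho_1^{\Lambda_a}(V)$, where $V$ transports $\rho_2$ from $\Lambda_2$ to $\widehat\Lambda_2 < \Lambda_1$, I would use Lemma~\ref{lem:intertwiner} to write $V = \wlim_n V_n$ with each $V_n$ a type-$k_2$ string operator. Weak continuity of $\rho_1^{\Lambda_a}$ gives $\rho_1^{\Lambda_a}(V) = \wlim_n \rho_1(V_n)$, and since $\rho_1 = \Ad\Gamma^{k_1}$ stabilises on each finite $V_n$ (as in the proof of Proposition~\ref{prop:auto}), one has $\rho_1(V_n) = (-1)^{c_n} V_n$, where $c_n$ counts the bonds shared by the type-$k_1$ string of $\rho_1$ and the path of $V_n$ on which the two Pauli matrices anticommute. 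Therefore $\varepsilon_{\rho_1,\rho_2} = \wlim_n (-1)^{c_n} V^* V_n = (-1)^{c} I$, with $c = \lim_n c_n \bmod 2$ the parity of the algebraic crossing number of the two strings.

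With this reduction the individual cases become bookkeeping in Pauli (anti)commutation. For $k_1 = k_2$ (the cases $XX$ and $ZZ$) both strings carry the same Pauli matrix, which commutes with itself on every shared bond, so $c_n = 0$ for all $n$ and $\varepsilon_{X,X} = \varepsilon_{Z,Z} = I$. For $k_1 \neq k_2$ the relevant matrices are $\sigma^x$ and $\sigma^z$, which anticommute, so each shared bond contributes $-1$; the parity $c$ is then the number, mod $2$, of times the type-$Z$ lattice path of the transporter crosses the type-$X$ dual-lattice string (or vice versa). The condition $\widehat\Lambda_2 < \Lambda_1$ fixes whether $\Lambda_2$ and $\widehat\Lambda_2$ lie on the same side of $\gamma_1$, and hence fixes this parity; the resulting sign depends on orientation conventions implicit in $<$ and in the lattice, which is precisely why only $\varepsilon_{X,Z} = \pm I$ can be asserted.

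To pin down the relation between $\varepsilon_{X,Z}$ and $\varepsilon_{Z,X}$ I would compute the monodromy $\varepsilon_{Z,X}\,\varepsilon_{X,Z}$. Composing the two half-transports into a single transporter $U$ of a \emph{full} loop of $\rho_2$ around the apex of $\rho_1$, the same reduction gives monodromy $= U^*\rho_1^{\Lambda_a}(U) = (-1)^{c'} I$, where $c'$ is the crossing parity of $\gamma_1$ with a closed type-$k_2$ loop encircling the endpoint of $\gamma_1$. Since $\gamma_1$ runs from that enclosed endpoint out to infinity, it crosses the loop an odd number of times, and for $\{k_1,k_2\} = \{X,Z\}$ each crossing contributes $-1$; hence $\varepsilon_{Z,X}\,\varepsilon_{X,Z} = -I$. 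As each factor is $\pm I$, this forces $\varepsilon_{Z,X} = -\varepsilon_{X,Z}$, the asserted ``vice versa'' statement. The main obstacle throughout is the sign and parity bookkeeping — in particular, verifying that $c_n \bmod 2$ is eventually constant and independent of the arbitrary choices (the connecting paths $\widetilde\gamma_n$, the auxiliary cone $\Lambda_a$, and the representative transporters), so that the crossing parity is a genuine invariant of the configuration. This is exactly where the general DHR argument would invoke Haag duality; here it is replaced by the explicit transporters of Lemma~\ref{lem:intertwiner} together with the identification of closed string operators with products of star and plaquette operators.
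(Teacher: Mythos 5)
Your proposal is correct and follows essentially the same route as the paper: write $\varepsilon_{\rho_1,\rho_2}=V^*\rho_1^{\Lambda_a}(V)$, use the explicit weakly convergent net $V_n$ of string operators from Lemma~\ref{lem:intertwiner} together with weak continuity of $\rho_1^{\Lambda_a}$, and reduce everything to the (eventually constant) parity of anticommuting crossings between the two strings, which vanishes for equal types and gives $\pm I$ for $X$ against $Z$. The only variation is in the final claim, where you compute the monodromy $\varepsilon_{Z,X}\varepsilon_{X,Z}=-I$ via a full-loop transporter crossing $\gamma_1$ an odd number of times, whereas the paper directly observes that exchanging the roles of $\rho_1$ and $\rho_2$ flips the crossing parity --- the same geometric fact in two equivalent guises.
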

\begin{proof}
  Consider a cone $\widehat{\Lambda}$ disjoint from $\Lambda$, such that $\widehat{\Lambda} < \Lambda$ and such that there is a cone $\widetilde{\Lambda} \supset \Lambda \cup \widehat{\Lambda}$. There is a path $\widehat{\gamma}_2$ in $\widetilde{\Lambda}$ such that the corresponding automorphism $\widehat{\rho}_2$ is unitarily equivalent to $\rho_2$ and localized in $\widehat{\Lambda}$. The corresponding unitary charge transporter $V$ is then contained in $\alg{A}(\widehat{\Lambda})''$. By definition we then have $\varepsilon_{\rho_1, \rho_2} = V^* \rho_1^{\Lambda_a}(V)$.

\begin{figure}
  \begin{center}
	\includegraphics{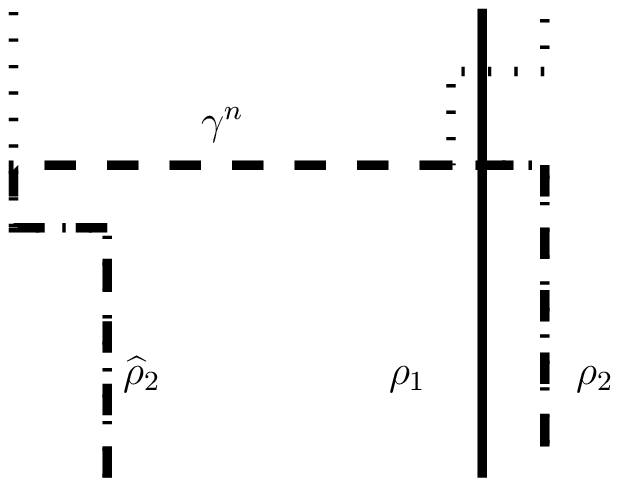}
\end{center}
\caption{The path $\gamma_n$ (dashed line) crosses the defining path of $\rho_1$ from the right. The dotted lines represent the defining paths of $\rho_2$ and $\widehat{\rho}_2$.}
\label{fig:braiding}
\end{figure}

This can be calculated using weak continuity of $\rho_1^{\Lambda_a}$ and the explicit construction of Lemma~\ref{lem:intertwiner} of a net converging to $V$. Indeed, let $V_n \to V$ be this net. Note that each $V_n$ is a string operator of the same type as $\rho_2$. In particular, if $\rho_1$ is of the same type as $\rho_2$, then $\rho_1(V_n) = V_n$ for all $n$ and hence $\rho^{\Lambda_a}_1(V) = V$. It follows that $\varepsilon_{X,X} = \varepsilon_{Z,Z} = I$.

  The situation where $\rho_1$ is of type X and $\rho_2$ is of type Z (or vice versa) is a bit more complicated. Recall that for the definition of the net $V_n$, for each $n$ a path $\gamma^n$ is chosen, such that the distance to the starting points of the paths $\gamma_1$ and $\gamma_2$ goes to infinity. The operator $V_n$ is then the string operator corresponding to the string formed by the first $n$ bonds of $\gamma_2$ and $\widehat{\gamma}_2$, together with $\gamma^n$. Note that, if $n$ is big enough, this string crosses $\gamma_1$ either an even number of times, or an odd number, independent of $n$. This property depends on whether the first crossing is from the ``left'' or from the ``right'' (see Figure~\ref{fig:braiding}), or if there is no crossing at all.
  
By anti-commutation of the Pauli matrices, it follows that if the number of crossings is even, $\rho_1(V) = V$, whereas if it is odd then $\rho_1(V) = -V$. Hence, $\varepsilon_{X,Z} = \pm I$. If the role of $\rho_1$ and $\rho_2$ is reversed, an odd number of crossings becomes an even number. This observation proves the last claim.
\end{proof}

Since $\rho^Y = \rho^X \otimes \rho^Z$, the braid equations allow to compute the braiding with excitations of type $Y$. The braiding with the trivial automorphism is always trivial. This completely determines the braiding for all irreducible representations we consider. 

We note that the sign of, for example, $\varepsilon_{X,Z}$ depends on the relative localization of both strings. Indeed, suppose we have two automorphisms $\rho_1$ and $\rho_2$, defined by strings $\gamma_1$ of type $X$ and $\gamma_2$ of type $Z$, extending to infinity and localized in $\Lambda_1$ resp. $\Lambda_2$. Suppose moreover that $\Lambda_2 < \Lambda_1$. It then follows that $\varepsilon_{\rho_1, \rho_2} = I$, since the paths in the proof, going from $\gamma_2$ to $\widehat{\gamma}_2$, do not cross $\gamma_1$. On the other hand, if $\Lambda_1 < \Lambda_2$ it follows that $\varepsilon_{\rho_1,\rho_2} = -I$. Note that this coincides with the situation in algebraic quantum field theory in low dimensions~\cite[Sect. 2.2]{MR1199171}.

The final piece of structure is that of \emph{conjugation}. A conjugate can be interpreted as an anti-charge. Formally, a conjugate for an endomorphism $\rho$ is a triple $(\overline{\rho}, R, \overline{R})$ such that $R$ intertwines $\iota$ and $\overline{\rho} \otimes \rho$ and $\overline{R}$ intertwines $\iota$ and $\rho \otimes \overline{\rho}$~\cite{MR1444286}. Here $\iota$ is the trivial endomorphism. The intertwiners $R, \overline{R}$ should satisfy
\[
\overline{R}^* \rho(R) = I, \quad R^* \overline{\rho}(\overline{R}) = I.
\]
A conjugate for an irreducible endomorphism $\rho$ is called \emph{normalized} if $R^*R = \overline{R}^* \overline{R}$ and \emph{standard} if $R^* \overline{\rho}(T) R = \overline{R}^* T \overline{R}$ for every intertwiner $T$ from $\rho$ to itself. If a conjugate exists, one can always find a standard conjugate.

Note that $\rho^k \otimes \rho^k = \iota$ for $k=X,Y,Z$. It follows that in our model the automorphisms we consider have conjugates. These are particularly simple: $\overline{\rho}^k = \rho^k$ and one can choose the unit operators for the intertwiners $R$ and $\overline{R}$. This is trivially a standard conjugate.

With the help of the braiding and conjugates one can define a \emph{twist}. Let $\rho$ be a cone localized endomorphism and $(\overline{\rho}, R, \overline{R})$ be a standard conjugate. The twist $\Theta_\rho \in \End(\rho)$ is then defined by
\[
\Theta_\rho = (\overline{R}^* \otimes \id_\rho) \circ (\id_{\overline{\rho}} \otimes \varepsilon_{\rho,\rho}) \circ (\overline{R} \otimes \id_\rho).
\]
Note that if $\rho$ is irreducible, $\Theta_\rho = \omega_\rho I$ for some phase factor. The (equivalence class of) $\rho$ is called \emph{bosonic} if $\omega_\rho = 1$ and \emph{fermionic} if $\omega_\rho = -1$. Since the conjugates of $\rho^k$, $k=X,Y,Z$ are particularly simple, the following corollary immediately follows from Theorem~\ref{thm:statistics}.
\begin{corollary}
The excitations $X$ and $Z$ are bosonic and $Y$ is fermionic.
\end{corollary}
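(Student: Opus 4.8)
The plan is to reduce the twist $\Theta_\rho$ to the self-braiding $\varepsilon_{\rho,\rho}$, which is already controlled by Theorem~\ref{thm:statistics} and the braid equations~\eqref{eq:braid}, and then to read off the sign $\omega_\rho$. Recall that for each $k=X,Y,Z$ the chosen conjugate is the standard one with $\overline{\rho}^k=\rho^k$ and $R=\overline{R}=I$, and that the fusion rule gives $\rho^k\otimes\rho^k=\iota$. First I would substitute this standard conjugate into the definition of $\Theta_\rho$. Using $\rho^k\circ\rho^k=\iota$, all three factors of the composite become intertwiners from $\rho^k$ to $\rho^k$ (for instance the source of $\overline{R}\otimes\id_\rho$ is $\iota\otimes\rho^k=\rho^k$ and its target is $(\rho^k\otimes\overline{\rho}^k)\otimes\rho^k=\iota\otimes\rho^k=\rho^k$), so the composition is well defined. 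Evaluating the tensor products of intertwiners via $S\otimes T=S\,\rho_1^{\Lambda_a}(T)$, the two outer factors collapse to the unit operator and one is left with $\Theta_{\rho^k}=\overline{\rho}^k(\varepsilon_{\rho^k,\rho^k})$. Since $\rho^k\otimes\rho^k=\iota$ is irreducible, the self-braiding $\varepsilon_{\rho^k,\rho^k}$ is a self-intertwiner of $\iota$, hence a scalar $\omega_{\rho^k}I$; as an automorphism fixes scalars, this yields $\Theta_{\rho^k}=\varepsilon_{\rho^k,\rho^k}=\omega_{\rho^k}I$. In other words, the statistics phase is read directly off the self-braiding.

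For $k=X$ and $k=Z$ this finishes the argument immediately: Theorem~\ref{thm:statistics} gives $\varepsilon_{X,X}=\varepsilon_{Z,Z}=I$, so $\omega_{\rho^X}=\omega_{\rho^Z}=1$ and both excitations are bosonic.

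The remaining case is $k=Y$, where Theorem~\ref{thm:statistics} does not apply directly. Here I would use the identification $\rho^Y=\rho^X\otimes\rho^Z$ (the defining $Y$-string is a product of an $X$-string and a $Z$-string; in any case $\omega_\rho$ depends only on the unitary equivalence class) together with the braid equations~\eqref{eq:braid}. The key bookkeeping fact is that a scalar intertwiner passes unchanged through a tensor product with an identity intertwiner, because $\sigma^{\Lambda_a}$ fixes scalars and $I_\tau$ is the unit operator; consequently the braid equations reduce to a multiplication of scalars. Applying the first equation twice gives $\varepsilon_{X,Y}=(I_X\otimes\varepsilon_{X,Z})(\varepsilon_{X,X}\otimes I_Z)=\varepsilon_{X,Z}$ and $\varepsilon_{Z,Y}=(I_X\otimes\varepsilon_{Z,Z})(\varepsilon_{Z,X}\otimes I_Z)=\varepsilon_{Z,X}$, and then the second equation yields
\[
\varepsilon_{Y,Y}=\varepsilon_{X\otimes Z,Y}=(\varepsilon_{X,Y}\otimes I_Z)(I_X\otimes\varepsilon_{Z,Y})=\varepsilon_{X,Z}\,\varepsilon_{Z,X}.
\]
By the last assertion of Theorem~\ref{thm:statistics}, exactly one of $\varepsilon_{X,Z},\varepsilon_{Z,X}$ equals $I$ and the other $-I$, so their product is $-I$. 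Hence $\varepsilon_{Y,Y}=-I$, giving $\omega_{\rho^Y}=-1$: the excitation $Y$ is fermionic.

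The step I expect to be the most delicate is the first one, namely verifying that the twist formula genuinely collapses to the self-braiding. This requires tracking the source and target endomorphisms of each intertwiner through the strict associativity of $\otimes=\circ$, checking that the extensions $\rho^{\Lambda_a}$ act on the correct factor, and confirming that $\rho^k\circ\rho^k=\iota$ makes the three composite intertwiners composable as maps $\rho^k\to\rho^k$. Once this collapse is established, the scalar-multiplication reduction of the braid equations is routine and the sign in the $Y$ case is forced entirely by Theorem~\ref{thm:statistics}.
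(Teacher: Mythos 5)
Your proposal is correct and fills in exactly the computation the paper leaves implicit: with the trivial standard conjugate $(\rho^k, I, I)$ the twist collapses to the self-braiding $\varepsilon_{\rho^k,\rho^k}$, the $X$ and $Z$ cases are then read off Theorem~\ref{thm:statistics} directly, and the $Y$ case follows from the braid equations via $\varepsilon_{Y,Y}=\varepsilon_{X,Z}\,\varepsilon_{Z,X}=-I$. This is the same approach as the paper, which states only that the result is immediate from the simplicity of the conjugates together with Theorem~\ref{thm:statistics}.
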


\section{Cone algebras}
\label{sec:cones}
Let $\Lambda$ be a cone. In this section we consider the von Neumann algebras associated to the observables localized in this cone. More precisely, define $\mc{R}_\Lambda := \pi_0(\alg{A}(\Lambda))''$ and $\mc{R}_{\Lambda^c} := \pi_0(\alg{A}(\Lambda^c))''$. The main result in this section is that $\mc{R}_\Lambda$ is an infinite factor.

\begin{lemma}
  \label{lem:generate}
  With the notation above, $\mc{R}_\Lambda \vee \mc{R}_{\Lambda^c} = \alg{B}(\mc{H}_0)$.
\end{lemma}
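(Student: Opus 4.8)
The plan is to show that the von Neumann algebra generated by the cone algebra $\mc{R}_\Lambda$ and its complement $\mc{R}_{\Lambda^c}$ is all of $\alg{B}(\mc{H}_0)$ by exhibiting it as the commutant of the center of the ground state representation, and then invoking that $\pi_0$ is irreducible. Concretely, recall that $\mc{R}_\Lambda \vee \mc{R}_{\Lambda^c} = (\mc{R}_\Lambda' \cap \mc{R}_{\Lambda^c}')'$, so it suffices to prove that the intersection $\mc{R}_\Lambda' \cap \mc{R}_{\Lambda^c}'$ of the two commutants is trivial, i.e.\ equals $\mathbb{C} I$. Any operator $T$ in this intersection commutes with both $\pi_0(\alg{A}(\Lambda))$ and $\pi_0(\alg{A}(\Lambda^c))$.

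The key observation is that $\alg{A}(\Lambda)$ and $\alg{A}(\Lambda^c)$ together generate the whole quasi-local algebra: every local observable is supported on a finite set of bonds, and a finite set splits into the part inside $\Lambda$ and the part inside $\Lambda^c$, so that any monomial in the Pauli matrices factors as a product of an element of $\alg{A}(\Lambda)$ and an element of $\alg{A}(\Lambda^c)$. Hence the $*$-algebra generated by $\alg{A}(\Lambda) \cup \alg{A}(\Lambda^c)$ contains $\alg{A}_{loc}$ and is therefore norm-dense in $\alg{A}$. First I would use this to argue that $\mc{R}_\Lambda \vee \mc{R}_{\Lambda^c} \supseteq \pi_0(\alg{A})''$. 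Since $\pi_0$ is the GNS representation of the pure state $\omega_0$, it is irreducible, so $\pi_0(\alg{A})'' = \alg{B}(\mc{H}_0)$ by the double commutant theorem together with irreducibility (equivalently $\pi_0(\alg{A})' = \mathbb{C} I$). This immediately forces $\mc{R}_\Lambda \vee \mc{R}_{\Lambda^c} = \alg{B}(\mc{H}_0)$.

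In slightly more detail, I would observe that $\pi_0(\alg{A}(\Lambda)) \subseteq \mc{R}_\Lambda$ and $\pi_0(\alg{A}(\Lambda^c)) \subseteq \mc{R}_{\Lambda^c}$, so $\mc{R}_\Lambda \vee \mc{R}_{\Lambda^c}$ contains the von Neumann algebra generated by $\pi_0(\alg{A}(\Lambda)) \cup \pi_0(\alg{A}(\Lambda^c))$. By the density remark above and strong (or weak) continuity of taking the generated von Neumann algebra, this contains $\pi_0(\alg{A}_{loc})'' = \pi_0(\alg{A})''$. Purity of $\omega_0$ (established in Proposition~\ref{prop:gstate}) gives irreducibility of $\pi_0$, whence $\pi_0(\alg{A})'' = \alg{B}(\mc{H}_0)$, and the claim follows since $\mc{R}_\Lambda \vee \mc{R}_{\Lambda^c} \subseteq \alg{B}(\mc{H}_0)$ trivially.

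I do not expect a serious obstacle here: the statement is essentially the assertion that the bipartition of the bonds into $\Lambda$ and $\Lambda^c$ yields a tensor-factorization at the level of local observables, combined with irreducibility of the ground state representation. The one point deserving a little care is the interchange between norm-density of the generating $*$-algebra in $\alg{A}$ and weak-operator density of the generated von Neumann algebra; since $\pi_0$ is norm-continuous and the von Neumann algebra is weakly (hence norm-) closed, norm-density of $\pi_0(\alg{A}_{loc})$ inside $\pi_0(\alg{A})$ suffices. Thus the only genuine input beyond elementary algebra is the purity of $\omega_0$, which is already available.
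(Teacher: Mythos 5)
Your proof is correct and follows essentially the same route as the paper: both arguments observe that the bonds split into those in $\Lambda$ and those in $\Lambda^c$, so that $\mc{R}_\Lambda \vee \mc{R}_{\Lambda^c}$ contains $\pi_0(\alg{A})''$, which equals $\alg{B}(\mc{H}_0)$ by irreducibility of $\pi_0$ (purity of $\omega_0$). The paper's version is just a terser statement of the same argument, writing $\mc{R}_\Lambda$ as $\bigvee_{b \in \Lambda} \pi_0(\alg{A}(\{b\}))$.
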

\begin{proof}
  Note that for each set $\Lambda \subset \bonds$ one has $\mc{R}_\Lambda = \bigvee_{b \in \Lambda} \pi_0(\alg{A}(\{b\}))$. It follows that $\alg{B}(\mc{H}_0) = \pi_0(\alg{A})'' = \mc{R}_\Lambda \vee \mc{R}_{\Lambda^c}$.
\end{proof}

More can be said about the cone algebras. In fact, they are infinite factors. In other words, $\mc{R}_\Lambda$ is a factor of Type I$_\infty$, Type II$_\infty$ or Type III. The basic idea of the proof, which is adapted from~\cite[Proposition 5.3]{MR2281418}, is to assume that $\mc{R}_\Lambda$ admits a tracial state. It then follows that $\omega_0$ is tracial, which is a contradiction. In fact, Type I$_\infty$ can be ruled out as well.
\begin{theorem}
	\label{thm:type}
  $\mc{R}_\Lambda$ is a factor of Type $II_\infty$ or Type III.
\end{theorem}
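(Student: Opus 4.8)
The plan is to argue by contradiction and exploit the structure established in the preceding sections. First I would show that $\mc{R}_\Lambda$ is a factor. By Lemma~\ref{lem:generate} we have $\mc{R}_\Lambda \vee \mc{R}_{\Lambda^c} = \alg{B}(\mc{H}_0)$, so the center $\mc{R}_\Lambda \cap \mc{R}_\Lambda'$ is contained in $\mc{R}_\Lambda' = (\mc{R}_\Lambda)'$. By locality $\mc{R}_{\Lambda^c} \subset \mc{R}_\Lambda'$, so any central element commutes with both $\mc{R}_\Lambda$ and $\mc{R}_{\Lambda^c}$, hence with all of $\alg{B}(\mc{H}_0)$, and is therefore a scalar by irreducibility of $\pi_0$. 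Thus $\mc{R}_\Lambda$ is a factor, and the only remaining task is to rule out the finite types (I$_n$ and II$_1$) as well as Type I$_\infty$.

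The heart of the argument is ruling out the existence of a normal tracial state on $\mc{R}_\Lambda$, following the strategy of~\cite[Proposition 5.3]{MR2281418}. Suppose for contradiction that $\mc{R}_\Lambda$ is finite, so it carries a faithful normal tracial state $\tau$. The key idea is to use transportability together with the translation covariance established above: I would transport the trace property into a statement about the ground state $\omega_0$ and derive a contradiction with the fact that $\omega_0$ is \emph{not} tracial. Concretely, one uses Remark~\ref{rem:cone}: since $\Lambda$ is a cone, the support of any fixed local observable can be translated by some $\tau_y$ so that it lies entirely inside $\Lambda$. Given two local observables $A, B$, one translates them into $\mc{R}_\Lambda$ and compares $\tau(\pi_0(\tau_y(A)) \pi_0(\tau_y(B)))$ with $\tau(\pi_0(\tau_y(B)) \pi_0(\tau_y(A)))$; the trace forces these to be equal. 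The plan is to relate $\tau$ restricted to translated observables back to $\omega_0$ — for instance by combining the translation covariance of $\omega_0$ (Proposition~\ref{prop:gstate}) with an argument that the vector state $\omega_0$ agrees asymptotically with any normal state as the observables are pushed to infinity inside the cone. This should yield that $\omega_0(AB) = \omega_0(BA)$ for all local $A,B$, i.e. $\omega_0$ is tracial.

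To close the contradiction I would exhibit explicit local observables on which $\omega_0$ fails to be tracial. This is easy from the explicit description of $\omega_0$: take a single bond $j$ and the Pauli matrices $\sigma_j^x, \sigma_j^z$. Then $\sigma_j^x \sigma_j^z = -i\sigma_j^y$ and $\sigma_j^z \sigma_j^x = i \sigma_j^y$, while $\omega_0(\sigma_j^y) \neq 0$ can be arranged by choosing $j$ so that $\sigma_j^y$ is (up to sign) a product of star and plaquette operators, or more simply one checks directly on a pair of neighbouring bonds that some commutator has nonzero $\omega_0$-expectation. Either way $\omega_0(\sigma_j^x \sigma_j^z) \neq \omega_0(\sigma_j^z \sigma_j^x)$, contradicting traciality. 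Finally, to exclude Type I$_\infty$ I would use a similar homogeneity argument: a Type I$_\infty$ factor has minimal projections, and transportability (moving a putative minimal projection deep into the cone and comparing with its translates, which all have the same $\omega_0$-expectation) is incompatible with the divisibility of the UHF structure; alternatively one invokes that the net $\{\mc{R}_\Lambda\}$ has no minimal projections because each $\mc{R}_\Lambda$ contains a copy of $\bigvee_{b \in \Lambda}\pi_0(\alg{A}(\{b\}))$, an infinite tensor product of $M_2(\mathbb{C})$ factors, which is the hyperfinite II$_1$ (hence typeless-but-properly-infinite once the cone is infinite) structure forcing type II$_\infty$ or III.

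The main obstacle I anticipate is the second step: making rigorous the passage from ``$\tau$ is a trace on $\mc{R}_\Lambda$'' to ``$\omega_0$ is tracial.'' The delicate point is that $\tau$ is an abstract normal trace on the cone algebra with no a priori relation to the vector state $\omega_0$, so one must use the asymptotic translation argument carefully — showing that for observables pushed far into the cone, any normal state (in particular the trace, suitably normalized) becomes indistinguishable from $\omega_0$ on the relevant commutators. Controlling this comparison, and in particular justifying that the trace relation survives the limit, is where the real work lies; everything else is either formal (the factor property) or a direct computation with the explicit ground state.
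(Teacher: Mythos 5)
Your opening step (the factor property via Lemma~\ref{lem:generate} and locality) is correct and identical to the paper's. Your strategy for excluding the finite types is also the paper's strategy (following \cite[Prop.~5.3]{MR2281418}), but the step you yourself flag as ``where the real work lies'' is precisely the step the paper supplies and you do not. The resolution is: $\widetilde{\psi}=\psi\circ\pi_0$ and $\omega_0\upharpoonright\alg{A}(\Lambda)$ are both normal states of the factor $\mc{R}_\Lambda$, hence quasi-equivalent factor states of $\alg{A}(\Lambda)$ (Props.~10.3.12(i) and 10.3.14 of~\cite{MR1468230}), and quasi-equivalent factor states of a quasi-local algebra are ``equal at infinity'' (Cor.~2.6.11 of~\cite{MR887100}): for every $\varepsilon>0$ there is a finite $\widehat{\Lambda}$ with $|\omega_0(A)-\widetilde{\psi}(A)|<\varepsilon\norm{A}$ for all $A\in\alg{A}(\Lambda\setminus\widehat{\Lambda})$. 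With that, your translation argument closes; without it, the proof is incomplete. Separately, your explicit witness for non-traciality fails: $\sigma_j^y$ is never ($\pm$) a product of star and plaquette operators (the support of any nontrivial such product contains more than one bond), so $\omega_0(\sigma_j^y)=0$ and both $\omega_0(\sigma_j^x\sigma_j^z)$ and $\omega_0(\sigma_j^z\sigma_j^x)$ vanish. A correct witness: write $A_s=\sigma_1^x\sigma_2^x\sigma_3^x\sigma_4^x$ and take $A=\sigma_1^x\sigma_2^x\sigma_1^z$, $B=\sigma_1^z\sigma_3^x\sigma_4^x$; then $AB=A_s$ and $BA=-A_s$, so $\omega_0(AB)=1\neq-1=\omega_0(BA)$. (Or simply: a tracial $\omega_0$ would be the unique trace of the UHF algebra, which vanishes on $A_s$.)

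Your exclusion of Type I$_\infty$ is not a proof. Containing an infinite tensor product of $M_2(\mathbb{C})$'s does not force Type II$_\infty$ or III: the weak closure of such a tensor product depends on the state, and in the GNS representation of a pure product state it is all of $\alg{B}(\mc{H})$, i.e.\ Type I$_\infty$ --- indeed $\pi_0(\alg{A})''=\alg{B}(\mc{H}_0)$ is itself the weak closure of such a tensor product. Transporting a putative minimal projection likewise leads nowhere definite. The paper's argument is different in kind: $\mc{R}_\Lambda$ is of Type I if and only if $\omega_0$ is quasi-equivalent to the product state $\omega_{0,\Lambda}\otimes\omega_{0,\Lambda^c}$ (adapting \cite[Prop.~2.2]{MR1828987}), and this is ruled out because for any finite $\widehat{\Lambda}$ there is a star $s\subset\widehat{\Lambda}^c$ straddling $\Lambda$ and $\Lambda^c$, on which $\omega_0(A_s)=1$ while the product state gives $0$; hence the two states are not equal at infinity and cannot be quasi-equivalent. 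Some argument of this type is genuinely needed to finish the theorem.
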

\begin{proof}
  To show that $\mc{R}_\Lambda$ is a factor, we argue as in~\cite{MR2281418}. The center is $\mc{Z}(\mc{R}_\Lambda) = \mc{R}_\Lambda \cap \mc{R}_\Lambda'$. By taking commutants, $\mc{Z}(\mc{R}_\Lambda)' = \mc{R}_\Lambda \vee \mc{R}_\Lambda'$. Note that $\mc{R}_{\Lambda^c} \subset \mc{R}_\Lambda'$, hence by Lemma~\ref{lem:generate}, $\mc{Z}(\mc{R}_\Lambda)' = \alg{B}(\mc{H}_0)$.

  Assume that $\mc{R}_\Lambda$ is a finite factor. Then there exists a unique tracial state $\psi$ on $\mc{R}_\Lambda$. This induces a tracial state $\widetilde{\psi} = \psi \circ \pi_0$ on $\alg{A}(\Lambda)$. By Propositions 10.3.12(i) and 10.3.14 of~\cite{MR1468230}, it follows that the state $\widetilde{\psi}$ is factorial and quasi-equivalent to the restriction of $\omega_0$ to $\alg{A}(\Lambda)$.
  
  Let $\varepsilon > 0$. By Corollary 2.6.11 of~\cite{MR887100}, there is a finite set $\widehat{\Lambda} \subset \Lambda$ such that $|\omega_0(A) - \widetilde{\psi}(A)| < \varepsilon \norm{A}$ for all $A \in \alg{A}(\Lambda \setminus \widehat{\Lambda})$. Now, let $k > 0$ be an integer. Consider local observables $A,B$ with localization region contained in $B(0,k)$ (that is, all bonds that can be connected to the origin of $\mathbb{Z}^2$ with a path of length at most $k$) and norm 1. Since $\Lambda$ is a cone and $\widehat{\Lambda}$ is finite, there is an $x \in \mathbb{Z}^2$, such that $\tau_{x}(AB)$ is localized in $\Lambda \setminus \widehat{\Lambda}$. By translation invariance,
 \[
 |\omega_0(AB) - \widetilde{\psi}(\tau_{x}(AB))| = |\omega_0(\tau_x(AB)) - \widetilde{\psi}(\tau_x(AB))| < \varepsilon,
\]
and similarly for $BA$. Hence since $\widetilde{\psi}$ is a trace,
\[
|\omega_0(AB) - \omega_0(BA)| = |\omega_0(AB) - \widetilde{\psi}(\tau_{x}(AB)) - \omega_0(BA) +  \widetilde{\psi}(\tau_{x}(BA)) | < 2 \varepsilon.
\]
Because $k$ and $\varepsilon$ were arbitrary, $\omega_0(AB) = \omega_0(BA)$ for all $A,B \in \alg{A}_{loc}$, which is absurd.

To see that the Type I case can be ruled out, note that $\mc{R}_\Lambda$ is of Type I if and only if $\omega_0$ is quasi-equivalent to $\omega_{0,\Lambda} \otimes \omega_{0,\Lambda^c}$. This can be seen by adapting the proof of~\cite[Prop. 2.2]{MR1828987}. Let $\widehat{\Lambda} \subset \bonds$ be any finite set. Then one can always find a star $s$ in $\widehat{\Lambda}^c$ such that the intersection with both $\Lambda$ and $\Lambda^c$ is not empty. But for this star $s$, one has $\omega_0(A_s) = 1$. On the other hand, $(\omega_{0,\Lambda} \otimes \omega_{0,\Lambda_c})(A_s) = 0$, essentially because $\Lambda \cap s$ is not a star any more. This implies that the states $\omega$ and $\omega_{0,\Lambda} \otimes \omega_{0,\Lambda^c}$ are not equal at infinity. It follows that $\omega_0$ cannot be quasi-equivalent to $\omega_{0,\Lambda} \otimes \omega_{0,\Lambda^c}$.
\end{proof}

We single out a useful consequence of this result.
\begin{corollary}
  \label{cor:isometry}
  Let $\Lambda$ be a cone. Then $\mc{R}_\Lambda$ contains isometries $V_1, V_2$ such that $V_i^* V_j = \delta_{i,j} I$ and $V_1 V_1^* + V_2 V_2^* = I$.
\end{corollary}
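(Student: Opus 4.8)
The plan is to read off the existence of such isometries directly from the structural fact, established in Theorem~\ref{thm:type}, that $\mc{R}_\Lambda$ is an infinite factor. The relations $V_i^* V_j = \delta_{ij} I$ and $V_1 V_1^* + V_2 V_2^* = I$ say precisely that the range projections $E_i := V_i V_i^*$ are mutually orthogonal, sum to $I$, and each satisfies $E_i \sim I$ (the partial isometry $V_i$ implements the equivalence, since $V_i^* V_i = I$). So the entire statement reduces to the assertion that the identity of $\mc{R}_\Lambda$ can be \emph{halved}: there exist orthogonal projections $E_1, E_2 \in \mc{R}_\Lambda$ with $E_1 + E_2 = I$ and $E_1 \sim E_2 \sim I$.

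First I would note that, since $\mc{R}_\Lambda$ is of Type $II_\infty$ or Type III, the identity $I$ is an infinite projection: in a Type $II_\infty$ factor the identity is infinite by definition, and in a Type III factor every nonzero projection is infinite. Because $\mc{R}_\Lambda$ is a factor, its center is trivial, so an infinite projection is automatically \emph{properly} infinite (for every central projection $Z \in \{0,I\}$, the projection $IZ$ is either $0$ or infinite).

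Next I would invoke the standard halving lemma for properly infinite projections (see, e.g., \cite{MR1468230}): a properly infinite projection $E$ in a von Neumann algebra admits orthogonal subprojections $E_1, E_2$ with $E_1 + E_2 = E$ and $E_1 \sim E_2 \sim E$. Applying this with $E = I$ produces orthogonal $E_1, E_2 \in \mc{R}_\Lambda$ with $E_1 + E_2 = I$ and $E_i \sim I$. For each $i$ I then choose a partial isometry $V_i \in \mc{R}_\Lambda$ implementing $I \sim E_i$, that is $V_i^* V_i = I$ and $V_i V_i^* = E_i$.

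It remains only to check the stated relations, which is routine. Each $V_i$ is an isometry since $V_i^* V_i = I$, and $V_1 V_1^* + V_2 V_2^* = E_1 + E_2 = I$. Using $V_i = E_i V_i$ and $V_i^* = V_i^* E_i$ together with $E_1 E_2 = 0$ gives $V_1^* V_2 = V_1^* E_1 E_2 V_2 = 0$, and similarly $V_2^* V_1 = 0$, so that $V_i^* V_j = \delta_{ij} I$. I do not expect any genuine obstacle: the only real content is the reduction to the halving lemma, and the single point to be careful about is that the conclusion ``infinite factor'' of Theorem~\ref{thm:type} is exactly what guarantees that $I$ is properly infinite, so that the halving lemma applies.
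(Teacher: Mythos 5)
Your proposal is correct and follows essentially the same route as the paper: the paper also deduces from Theorem~\ref{thm:type} that $\mc{R}_\Lambda$ is an infinite factor and then invokes the halving result (citing Takesaki, Prop.~V.1.36) to obtain a projection $P$ with $P \sim (I-P) \sim I$, realizing the equivalences by isometries $V_1, V_2$ exactly as you do. Your write-up merely makes explicit the intermediate observations (infinite implies properly infinite in a factor, and the verification of the Cuntz-type relations) that the paper leaves to the reader.
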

\begin{proof}
	By~\cite[Prop. V.1.36]{MR1873025}, there is a projection $P$ such that $P \sim (I-P) \sim I$, where $\sim$ denotes Murray-von Neumann equivalence with respect to $\mc{R}_\Lambda$. Hence, there are isometries $V_1, V_2$ such that $V_1 V_1^* = P$ and $V_2 V_2^* = (I-P)$. These isometries suffice.
\end{proof}

Although we have no proof for Haag duality for cones, we would like to point out an interesting consequence of this duality. For two cones $\Lambda_1 \subset \Lambda_2$, write $\Lambda_1 \ll \Lambda_2$ if any star or plaquette in $\Lambda_1 \cup \Lambda_2^c$ is either contained in $\Lambda_1$ or in $\Lambda_2^c$.
\begin{definition}
We say that $\omega_0$ satisfies the \emph{distal split property for cones} if for any pair of cones $\Lambda_1 \ll \Lambda_2$ there is a Type I factor $\mc{N}$ such that $\mc{R}_{\Lambda_1} \subset \mc{N} \subset \mc{R}_{\Lambda_2}$.
\end{definition}

With the assumption of Haag duality we can then prove the following theorem.
\begin{theorem}
Suppose that $\pi_0$ satisfies Haag duality for cones. Then $\omega_0$ has the distal split property for cones.
\end{theorem}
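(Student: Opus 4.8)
The plan is to produce the intermediate Type I factor from a product state, following the standard theory of split inclusions of Doplicher and Longo. The role of Haag duality is to turn the abstract commutant into a concrete cone algebra: since $\mc{R}_{\Lambda_2}' = \mc{R}_{\Lambda_2^c}$, finding a Type I factor $\mc{N}$ with $\mc{R}_{\Lambda_1}\subset\mc{N}\subset\mc{R}_{\Lambda_2}$ becomes a question about the commuting pair $\mc{R}_{\Lambda_1}$ and $\mc{R}_{\Lambda_2^c}$, which are generated by observables on the disjoint bond sets $\Lambda_1$ and $\Lambda_2^c$. I would invoke the criterion that, for a standard inclusion, the split property follows once one exhibits a state factorizing across the inclusion, i.e. a normal product state $\eta$ on $\mc{R}_{\Lambda_1}\vee\mc{R}_{\Lambda_2^c}$ with $\eta(AB)=\eta(A)\eta(B)$ for $A\in\mc{R}_{\Lambda_1}$ and $B\in\mc{R}_{\Lambda_2^c}$.

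The main point is that the ground state vector $\Omega$ itself already induces such a product state, and here the distal condition $\Lambda_1\ll\Lambda_2$ together with the explicit stabilizer form of $\omega_0$ does the work. Recall that $\omega_0$ vanishes on every Pauli monomial except those equal to a product of star and plaquette operators, where it equals $1$. Given a Pauli monomial $m$ supported in $\Lambda_1\cup\Lambda_2^c$, write $m=m_1m_2$ with $m_1$ supported in $\Lambda_1$ and $m_2$ in $\Lambda_2^c$. I would prove the combinatorial \emph{factorization lemma}: $m$ is a product of stars and plaquettes if and only if both $m_1$ and $m_2$ are. The non-trivial direction is that a \emph{finite} stabilizer product whose support avoids the buffer $\Delta:=\Lambda_2\setminus\Lambda_1$ must split into a piece supported in $\Lambda_1$ and a piece supported in $\Lambda_2^c$. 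Treating the $X$-part and the $Z$-part separately (they carry independent support data), this reduces to a statement about vertex sets: a finite set $S$ of stars whose boundary avoids $\Delta$ cannot connect $\Lambda_1$ to $\Lambda_2^c$, because $\ll$ forbids a single star from bridging the buffer and finiteness of $S$ forbids a connected block of stars from engulfing the entire, infinite, wall of vertices separating $\Lambda_1$ from $\Lambda_2^c$; hence $S$ splits along the buffer, and the plaquette case is dual. Granting the lemma, $\omega_0(m)=\omega_0(m_1)\omega_0(m_2)$ on monomials, and by linearity and norm density $\omega_0$ restricts to the product state $\omega_0{\upharpoonright}\alg{A}(\Lambda_1)\otimes\omega_0{\upharpoonright}\alg{A}(\Lambda_2^c)$ on $\alg{A}(\Lambda_1\cup\Lambda_2^c)$.

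Since $\omega_\Omega=\langle\Omega,\,\cdot\,\Omega\rangle$ is a vector state it is automatically normal on $\mc{R}_{\Lambda_1}\vee\mc{R}_{\Lambda_2^c}$, and weak continuity promotes the factorization from the norm-dense local algebras to the von Neumann algebras; thus $\Omega$ is a normal product vector across the inclusion. To feed this into the Doplicher--Longo criterion I must also check that the inclusion $\mc{R}_{\Lambda_1}\subset\mc{R}_{\Lambda_2}$ is \emph{standard}, i.e. that $\Omega$ is cyclic and separating for $\mc{R}_{\Lambda_1}$, for $\mc{R}_{\Lambda_2}$, and for $\mc{R}_{\Lambda_1}'\cap\mc{R}_{\Lambda_2}$. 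Here Haag duality is used again: being separating for $\mc{R}_{\Lambda_2}$ is cyclicity for $\mc{R}_{\Lambda_2}'=\mc{R}_{\Lambda_2^c}$, and factoriality of the cone algebras (Theorem~\ref{thm:type}) rules out obvious degeneracies. With standardness in hand, the cited criterion yields the Type I factor $\mc{N}$ between $\mc{R}_{\Lambda_1}$ and $\mc{R}_{\Lambda_2}$, which is exactly the distal split property.

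I expect the main obstacle to be twofold. The combinatorial factorization lemma is the real model-specific content and must be argued with some care, since the subtlety is precisely that stars may reach into the buffer $\Delta$ provided their neighbours cancel that support; the argument that a finite such configuration cannot nevertheless tunnel from $\Lambda_1$ to $\Lambda_2^c$ is where the hypothesis $\Lambda_1\ll\Lambda_2$ is indispensable and must be used honestly. The second, more technical, hurdle is verifying the Reeh--Schlieder-type cyclicity needed for standardness; if a direct proof of cyclicity of cone algebras for $\Omega$ is unavailable, I would either establish it from the explicit ground state or reformulate the split criterion so that only the product-vector property, together with Haag duality, is required.
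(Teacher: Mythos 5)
Your proposal has the same skeleton as the paper's proof: the model-specific content is the observation that $\omega_0$ restricts to a product state across $\alg{A}(\Lambda_1)$ and $\alg{A}(\Lambda_2^c)$, and Haag duality enters only to identify $\mc{R}_{\Lambda_2^c}$ with $\mc{R}_{\Lambda_2}'$. Your combinatorial factorization lemma is exactly the content of the paper's one-line assertion $\omega_0(AB)=\omega_0(A)\omega_0(B)$, which the paper does not spell out, and your account of where $\Lambda_1\ll\Lambda_2$ is used is the honest one: if a finite set $S$ of stars has a product whose support avoids the buffer $\Delta=\Lambda_2\setminus\Lambda_1$, then for every bond of $\Delta$ the set $S$ must contain both or neither of the two stars meeting that bond, so $S$ would have to contain an infinite connected family of stars were it to contain any star touching $\Delta$; hence $S$ splits along the buffer, each remaining star lies wholly in $\Lambda_1$ or in $\Lambda_2^c$ by the $\ll$ condition, and dually for plaquettes. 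Working this out is a genuine addition to what is written in the paper.

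Where you diverge --- and where your proposal as written has a gap --- is the abstract criterion into which you feed the product state. The Doplicher--Longo standard split inclusion machinery requires $\Omega$ to be cyclic and separating for $\mc{R}_{\Lambda_1}$, for $\mc{R}_{\Lambda_2}$ and for the relative commutant; as you yourself note, this is a Reeh--Schlieder-type statement for cone algebras that is neither available in the paper nor proved by you. The fallback you mention in your last sentence is precisely what the paper does: from the normal product state, Takesaki's theorem \cite{MR0100798} gives that $\mc{R}_{\Lambda_1}\vee\mc{R}_{\Lambda_2^c}$ is naturally isomorphic to $\mc{R}_{\Lambda_1}\otimes\mc{R}_{\Lambda_2^c}$, and then Theorem 1 and Corollary 1 of \cite{MR703083} convert the resulting natural isomorphism $\mc{R}_{\Lambda_1}\vee\mc{R}_{\Lambda_2}'\simeq\mc{R}_{\Lambda_1}\otimes\mc{R}_{\Lambda_2}'$ directly into an interpolating Type I factor, using only factoriality of the cone algebras (Theorem~\ref{thm:type}) --- no standardness and no cyclic vector are needed. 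Replace your citation of the standard-split criterion by this one and your argument closes, at which point it coincides with the paper's.
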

\begin{proof}
  Let $\Lambda_1 \ll \Lambda_2$ be two cones. Note that it is enough to prove that $\mc{R}_{\Lambda_1} \vee \mc{R}_{\Lambda_2}' \simeq \mc{R}_{\Lambda_1} \otimes \mc{R}_{\Lambda_2}'$, where $\simeq$ denotes that the natural map $A \otimes B' \mapsto AB'$ ($A \in \mc{R}_{\Lambda_1}, B' \in \mc{R}_{\Lambda_2}'$) extends to a normal isomorphism. Indeed, if this is the case, the result follows from Theorem 1 and Corollary 1 of~\cite{MR703083}, since $\mc{R}_{\Lambda_1}$ and $\mc{R}_{\Lambda_2}$ are factors.

  Note that $\omega_0(AB) = \omega_0(A)\omega_0(B)$ if $A \in \alg{A}(\Lambda_1), B \in \alg{A}(\Lambda_2^c)$. Since $\omega_0$ is normal, this result is also valid for $A \in \mc{R}_{\Lambda_1}$ and $B \in \mc{R}_{\Lambda_2^c}$. A result of Takesaki~\cite{MR0100798} then implies that $\mc{R}_{\Lambda_1 \cup \Lambda_2^c} = \mc{R}_{\Lambda_1} \vee \mc{R}_{\Lambda_2^c} \simeq \mc{R}_{\Lambda_1} \otimes \mc{R}_{\Lambda_2^c}$. By Haag duality, $\mc{R}_{\Lambda_2^c} = \mc{R}_{\Lambda_2}'$, which concludes the proof.
\end{proof}
Note that without Haag duality only the existence of a Type I factor $\mc{R}_{\Lambda_1} \subset \mc{N} \subset \pi_0(\alg{A}(\Lambda_2^c))'$ can be concluded. The condition that $\Lambda_1 \ll \Lambda_2$ is needed precisely to avoid the situation at the end of the proof of Theorem~\ref{thm:type}.

\section{Equivalence with $\Rep_f \qd{\mathbb{Z}_2}$}
\label{sec:equivalence}
If $G$ is a finite group, one can form the \emph{quantum double} $\qd{G}$ of the group. The quantum double is a quasi-triangular Hopf algebra (see e.g.~\cite{MR1321145} for an introduction). It is well-known that $\Rep_f \qd{G}$, the category of finite dimensional $\qd{G}$-modules, is a \emph{modular tensor category}~\cite{MR1797619}. In this section we will introduce the category $\Delta(\Lambda)$ of stringlike localized representations and show that it is equivalent to $\Rep_f \qd{\mathbb{Z}_2}$ (as braided tensor $C^*$-categories). This implies that for all practical purposes, the excitations are described by the representation theory of $\qdz$. 

\begin{lemma}
  Let $\rho_1, \rho_2$ be two transportable endomorphisms of $\alg{A}$, localized in a cone $\Lambda$. Then one can define a localized and transportable \emph{direct sum} $\rho_1 \oplus \rho_2$.
\end{lemma}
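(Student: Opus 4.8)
The plan is to construct the direct sum $\rho_1 \oplus \rho_2$ concretely using a pair of isometries furnished by the cone algebra $\mc{R}_\Lambda$, following the standard technique for building direct sums in a $C^*$-category of endomorphisms. By Corollary~\ref{cor:isometry}, since $\Lambda$ is a cone, $\mc{R}_\Lambda = \pi_0(\alg{A}(\Lambda))''$ contains isometries $V_1, V_2$ with $V_i^* V_j = \delta_{i,j} I$ and $V_1 V_1^* + V_2 V_2^* = I$. These behave exactly like the intertwiners one would normally obtain by Cuntz-algebra-type arguments, and they are localized in (the weak closure of) $\Lambda$, which is essential for the localization of the result.

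First I would define the candidate direct sum. Working inside the auxiliary algebra $\alg{A}^{\Lambda_a}$ (choosing $\Lambda_a$ asymptotically disjoint from $\Lambda$, so that $\mc{R}_\Lambda \subset \alg{A}^{\Lambda_a}$ and the $\rho_i$ extend to endomorphisms $\rho_i^{\Lambda_a}$), set
\[
(\rho_1 \oplus \rho_2)(A) = V_1 \rho_1^{\Lambda_a}(A) V_1^* + V_2 \rho_2^{\Lambda_a}(A) V_2^*, \qquad A \in \alg{A}.
\]
I would then verify the routine algebraic points: using $V_i^* V_j = \delta_{ij} I$ one checks multiplicativity, $*$-preservation, and unitality ($(\rho_1\oplus\rho_2)(I) = V_1 V_1^* + V_2 V_2^* = I$), so this is indeed a $*$-endomorphism. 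The isometries $V_i$ serve as the coisometric intertwiners exhibiting $\rho_i$ as subobjects: one has $V_i^*(\rho_1\oplus\rho_2)(A) = \rho_i(A) V_i^*$, so $V_i$ intertwines $\rho_i$ with $\rho_1\oplus\rho_2$, realizing the defining universal property of a direct sum in the category.

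Next I would check localization. Since $V_1, V_2 \in \mc{R}_\Lambda$ and each $\rho_i$ is localized in $\Lambda$, for any $A \in \alg{A}(\Lambda^c)$ we have $\rho_i^{\Lambda_a}(A) = A$, and $A$ commutes with $V_i, V_i^*$ by locality (as $\mc{R}_\Lambda \subset \pi_0(\alg{A}(\Lambda^c))'$). Hence $(\rho_1\oplus\rho_2)(A) = (V_1 V_1^* + V_2 V_2^*)A = A$, so $\rho_1 \oplus \rho_2$ is localized in $\Lambda$. For transportability, I would invoke transportability of $\rho_1$ and $\rho_2$ individually: given a target cone $\widehat{\Lambda}$, transport each $\rho_i$ to a unitarily equivalent $\widehat{\rho}_i$ localized in $\widehat{\Lambda}$ via charge transporters $W_i$, pick isometries in $\mc{R}_{\widehat{\Lambda}}$, and form the corresponding direct sum there; the product of $W_i$ with the isometries assembles into a unitary implementing $\pi_0\circ(\rho_1\oplus\rho_2) \cong \pi_0\circ(\widehat{\rho}_1\oplus\widehat{\rho}_2)$.

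The main obstacle I anticipate is a bookkeeping subtlety rather than a conceptual one: one must ensure that $\rho_1 \oplus \rho_2$ genuinely maps $\alg{A}$ back into $\alg{A}$ (not merely into $\alg{A}^{\Lambda_a}$), since the $V_i$ live in the weak closure $\mc{R}_\Lambda$ and a priori $V_i \rho_i^{\Lambda_a}(A) V_i^*$ need not lie in the quasi-local algebra. The cleanest way around this, which I would adopt, is to observe that for $A$ localized in a finite region one can transport each $\rho_i$ to be localized in a cone disjoint from the fixed support, pick the $V_i$ in the corresponding local cone algebra far from the support of $A$, and thereby keep everything at the quasi-local level before extending by continuity; alternatively one simply defines $\rho_1 \oplus \rho_2$ as an endomorphism of the auxiliary algebra $\alg{A}^{\Lambda_a}$, which is the setting in which the tensor category is formulated anyway. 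Either route resolves the issue, and the remaining verifications are the standard $C^*$-categorical checks.
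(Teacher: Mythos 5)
Your proposal is correct and follows essentially the same route as the paper: the direct sum is defined by $\rho(A) = V_1\rho_1(A)V_1^* + V_2\rho_2(A)V_2^*$ with the isometries of Corollary~\ref{cor:isometry}, localization follows from $V_i \in \mc{R}_\Lambda$ commuting with $\alg{A}(\Lambda^c)$, and transportability is obtained by assembling the unitary $W = W_1U_1V_1^* + W_2U_2V_2^*$ from charge transporters and isometries in the target cone. The subtlety you flag at the end --- that $\rho_1\oplus\rho_2$ need only be an endomorphism of the auxiliary algebra $\alg{A}^{\Lambda_a}$ rather than of $\alg{A}$ --- is exactly how the paper disposes of the issue (in a footnote), so nothing is missing.
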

\begin{proof}
	Let $V_1, V_2 \in \mc{R}_\Lambda$ be isometries as in Corollary~\ref{cor:isometry}. Define $\rho(A) := V_1 \rho_1(A) V_1^* + V_2 \rho_2(A) V_2^*$, for all $A \in \alg{A}$. It follows that $\rho$ is a $*$-representation\footnote{Note that $\rho$ is not necessarily an endomorphism of $\alg{A}$ any more, but rather of $\alg{A}^{\Lambda_a}$. This is however only a minor technicality and is not essential for what follows.} of $\alg{A}$. Since $V_i \in \mc{R}_\Lambda$ and $\mc{R}_{\Lambda^c} \subset \mc{R}_\Lambda'$, it follows that $\rho(A) = A$ for $A \in \alg{A}(\Lambda^c)$, hence $\rho$ is localized in $\Lambda$. 
  To show transportability, let $\widehat{\Lambda}$ be another cone. Pick isometries $W_1, W_2 \in \mc{R}_{\widehat{\Lambda}}$ as in Corollary~\ref{cor:isometry}. Since $\rho_1$ and $\rho_2$ are transportable, there are unitary operators $U_i$ such that $U_i \rho_i(-) U_i^*$ is localized in $\widehat{\Lambda}$. Define $W = W_1 U_1 V_1^* + W_2 U_2 V_2^*$. Then $W W^* = W^* W = I$ and $W \rho(-) W^*$ is localized in $\widehat{\Lambda}$, hence $\rho$ is transportable. This $\rho$, which is unique up to unitary equivalence, will be denoted by $\rho_1 \oplus \rho_2$.  
\end{proof}

We will now introduce the category $\Delta(\Lambda)$. For technical reasons it is convenient to consider only representations localized in a fixed cone $\Lambda$, since in that case clearly all intertwiners are in the algebra $\alg{A}^{\Lambda_a}$. Proceeding in this way, there is no problem in defining the tensor product. It should be noted that the resulting category does not depend on the specific choice of cone $\Lambda$ (see \cite[Prop. 2.11]{Naaijkens} for a proof and for alternative approaches).

The irreducible objects of the category $\Delta(\Lambda)$ are precisely the automorphisms localized in the cone $\Lambda$ that are given by paths extending to infinity. The morphisms are intertwiners from one endomorphism to another. By the Lemma above, finite direct sums can be constructed, turning $\Delta(\Lambda)$ into a category with direct sums. In fact, by construction, each object can be decomposed into irreducibles. It is clear from the construction that the direct sums can be extended to endomorphisms of the auxiliary algebra. Hence the tensor product defined in Section~\ref{sec:fusion} can be defined for all objects. Similarly, a braiding for direct sums can be constructed from Theorem~\ref{thm:statistics}. Conjugates for direct sums can be constructed from conjugates for the irreducible components. Summarizing, freely using terminology from~\cite{halvapp,mmappendix}, we have the following result:
\begin{theorem}
The category $\Delta(\Lambda)$ is a braided tensor $C^*$-category.
\end{theorem}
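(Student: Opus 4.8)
The plan is to verify, one structure at a time, that the data assembled piecemeal in Sections~\ref{sec:fusion} and~\ref{sec:equivalence} satisfies the axioms of a braided tensor $C^*$-category, since essentially all the ingredients are already in place and what remains is mostly bookkeeping. First I would fix the ambient setting: all objects of $\Delta(\Lambda)$ are transportable representations of $\alg{A}$ localized in the fixed cone $\Lambda$, extended to endomorphisms of the auxiliary algebra $\alg{A}^{\Lambda_a}$, and all intertwiners lie in $\alg{A}^{\Lambda_a}$. Because $\Lambda$ is fixed, any two objects are localized in a common cone, so the tensor product $\rho \otimes \sigma = \rho^{\Lambda_a} \circ \sigma$ and the tensor product of intertwiners $S \otimes T = S\,\rho_1^{\Lambda_a}(T)$ from Section~\ref{sec:fusion} are everywhere defined; I would check the bifunctoriality and the interchange law $(S_1 \otimes T_1)(S_2 \otimes T_2) = (S_1 S_2)\otimes(T_1 T_2)$ directly from these formulas, using that $\rho^{\Lambda_a}$ is a homomorphism. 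Strict associativity $\rho\otimes(\sigma\otimes\tau)=(\rho\otimes\sigma)\otimes\tau$ and strict unit laws with tensor unit $\iota$ follow from associativity of composition of endomorphisms, so no nontrivial associator or unitor coherence needs to be verified.

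Next I would address the $C^*$-structure. The morphism spaces $\Hom(\rho,\sigma)$ inherit an involution $T \mapsto T^*$ from $\alg{A}^{\Lambda_a}$, and one must check that $T^*$ is again an intertwiner (from $\sigma$ to $\rho$), that composition and the $*$-operation make each $\End(\rho)$ a $C^*$-algebra, and that the norm is the one inherited from $\alg{A}^{\Lambda_a}$; positivity of $T^*T$ is immediate. I would confirm the category is additive with subobjects: the direct sums constructed in the preceding lemma, using the isometries $V_1,V_2 \in \mc{R}_\Lambda$ from Corollary~\ref{cor:isometry} with $V_i^*V_j=\delta_{ij}I$ and $V_1V_1^*+V_2V_2^*=I$, give biproducts, and the existence of these isometries together with the fact that every object decomposes into the irreducible automorphisms $\rho^X,\rho^Y,\rho^Z,\iota$ gives splitting of idempotents. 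For the tensor $C^*$ compatibility I would check $(S\otimes T)^* = S^* \otimes T^*$, which again reduces to the identity $\rho^{\Lambda_a}(T)^* = \rho^{\Lambda_a}(T^*)$ valid because $\rho^{\Lambda_a}$ is a $*$-endomorphism.

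It then remains to assemble the braiding into a genuine braided structure on all objects, not merely the irreducibles. The unitaries $\varepsilon_{\rho,\sigma}$ were defined in Section~\ref{sec:fusion} and the preceding lemma already records that they are natural in both arguments and satisfy the two braid (hexagon) equations~\eqref{eq:braid} for the automorphisms; the remaining task is to extend naturality and the hexagons to direct sums. I would do this by intertwining with the defining isometries: given $\rho = \rho_1 \oplus \rho_2$ realized through $V_1,V_2$, naturality of $\varepsilon$ applied to the intertwiners $V_i: \rho_i \to \rho$ forces $\varepsilon_{\rho,\sigma}$ to be determined by the $\varepsilon_{\rho_i,\sigma}$, and the braid equations for sums then follow from those for the summands by a diagram chase using bifunctoriality. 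Finally I would record that conjugates exist for every object, since the irreducibles are self-conjugate with $\overline{\rho}^k=\rho^k$ and trivial $R,\overline R$, and conjugates of direct sums are built from conjugates of the components as noted at the end of Section~\ref{sec:fusion}.

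The genuinely delicate point, and the one I would flag as the main obstacle, is not any single axiom but the consistent handling of the \emph{domain} on which everything lives: the braiding and the tensor product are defined via the extensions $\rho^{\Lambda_a}$ to the auxiliary algebra, and a direct sum $\rho_1\oplus\rho_2$ is only a representation (an endomorphism of $\alg{A}^{\Lambda_a}$), not an endomorphism of $\alg{A}$, as the footnote in the direct-sum lemma warns. One must therefore check that $\varepsilon$, naturality, and the hexagons make sense for these more general objects --- in particular that $\rho^{\Lambda_a}$ remains weakly continuous and multiplicative on the intertwiners produced by the direct-sum construction, so that $S \otimes T$ and the braid equations are well-defined for arbitrary objects. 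Once it is verified that all intertwiners of interest sit inside $\alg{A}^{\Lambda_a}$ (which holds because $\Lambda$ is fixed) and that the extensions behave functorially on them, the remaining verifications are the routine algebraic identities sketched above, and the theorem follows.
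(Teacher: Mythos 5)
Your proposal is correct and follows essentially the same route as the paper, which itself offers no formal proof beyond the preceding paragraph assembling the same ingredients: direct sums via the isometries of Corollary~\ref{cor:isometry}, the tensor product and braiding from Section~\ref{sec:fusion} extended to direct sums, and conjugates built componentwise. You are in fact more explicit than the paper about the axioms to be checked (interchange law, $*$-structure, naturality of $\varepsilon$ on sums) and about the domain issue with the auxiliary algebra, which the paper only acknowledges in a footnote.
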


The category obtained in this way is actually equivalent (as a braided tensor $C^*$-category) to the representation category of $\qd{\mathbb{Z}_2}$ over the field $k = \mathbb{C}$. For the structure of $\Rep_f \qd{\mathbb{Z}_2}$ as a braided tensor $C^*$-category we refer to Ref.~\cite{MR1234107}. A highbrow way of seeing this is to appeal to the classification results of modular tensor categories~\cite{MR2544735}. It is however possible to give an explicit construction of the equivalence. Note that equivalence as \emph{braided} categories is in general stronger than equivalence as tensor categories. Indeed, there are non-isomorphic groups whose representation categories are equivalent as tensor categories but not as \emph{braided} tensor categories~\cite{MR1810480}. On the other hand, every \emph{symmetric} tensor category (satisfying certain additional properties) is the representation category of a compact group (determined up to isomorphism)~\cite{MR1010160}.
\begin{theorem}
	There is a braided equivalence of tensor $C^*$-categories $\Delta(\Lambda) \to \Rep_f \qd{\mathbb{Z}_2}$.
\end{theorem}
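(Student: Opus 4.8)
The plan is to exhibit an explicit functor $F\colon \Delta(\Lambda) \to \Rep_f \qd{\mathbb{Z}_2}$ by matching the four irreducible objects $\iota, \rho^X, \rho^Z, \rho^Y$ with the four irreducible $\qd{\mathbb{Z}_2}$-modules, and to check that $F$ respects all the structure (tensor, braiding, conjugates, $*$-operation) on these generators. Since both categories are semisimple with exactly four simple objects, a functor that is a bijection on isomorphism classes of irreducibles, fully faithful on the irreducibles, and compatible with direct sums will automatically be an equivalence of $C^*$-categories; the content is entirely in matching the \emph{braided monoidal} structure. I would first recall the explicit description of $\Rep_f \qd{\mathbb{Z}_2}$ from~\cite{MR1234107}: it has four one-dimensional simples, say $1, e, m, \epsilon$, with fusion rules $e\otimes e = m \otimes m = \epsilon \otimes \epsilon = 1$, $e \otimes m = \epsilon$, and the characteristic braiding in which $e, m$ are bosons, $\epsilon$ is a fermion, and the mutual braiding of $e$ and $m$ carries the nontrivial sign.

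\textbf{Key steps.} First I would define $F$ on objects by $\iota \mapsto 1$, $\rho^X \mapsto e$, $\rho^Z \mapsto m$, $\rho^Y \mapsto \epsilon$, and extend to direct sums using the universal property of the direct sums constructed in the preceding lemma. On morphisms, since each irreducible $\rho^k$ is an automorphism with $\End(\rho^k) = \mathbb{C} I$ (by Schur's lemma and irreducibility of $\pi_0$), $F$ is forced to be the canonical identification on Hom-spaces between irreducibles, and is extended to Hom-spaces between direct sums by linearity; full faithfulness is then immediate from semisimplicity. Second, I would verify that $F$ is monoidal: the fusion rules $\rho^k \otimes \rho^k \cong \iota$ and $\rho^X \otimes \rho^Z \cong \rho^Y$ established in Section~\ref{sec:fusion} match those of $\Rep_f \qd{\mathbb{Z}_2}$ on the nose, so the required associativity and unit constraints $F(\rho)\otimes F(\sigma) \to F(\rho \otimes \sigma)$ can be chosen as the identities (both categories being strict on the generators), and naturality is checked on the irreducibles. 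Third, and most importantly, I would verify that $F$ is \emph{braided}: by Theorem~\ref{thm:statistics} we have $\varepsilon_{X,X} = \varepsilon_{Z,Z} = I$, $\varepsilon_{X,Z} = -\varepsilon_{Z,X} = \pm I$, and the associated twists $\omega_{\rho^X} = \omega_{\rho^Z} = 1$, $\omega_{\rho^Y} = -1$ from the final corollary. These numerical invariants — the full braiding matrix and the collection of twists — coincide exactly with those of $\Rep_f \qd{\mathbb{Z}_2}$, so the braiding hexagons~\eqref{eq:braid} transport correctly under $F$. Finally I would note that $F$ sends the conjugates $(\rho^k, I, I)$ to the (self-dual) duality data in $\Rep_f \qd{\mathbb{Z}_2}$ and preserves the $*$-structure, since $F$ acts as the identity on scalar Hom-spaces.

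\textbf{Main obstacle.} The genuinely delicate point is the braiding, not the fusion. Equivalence of \emph{tensor} categories would follow already from the fusion rules, but as the text emphasizes, braided equivalence is strictly stronger, and one must pin down the relative sign in $\varepsilon_{X,Z}$ against the convention fixed for $\Rep_f \qd{\mathbb{Z}_2}$. The subtlety is that $\varepsilon_{X,Z} = \pm I$ with the sign depending on the cyclic ordering $\widehat{\Lambda}_2 < \Lambda_1$ of cones relative to the auxiliary (forbidden) direction $\Lambda_a$; I would have to check that \emph{both} conventions yield the same modular category, i.e. that the choice of orientation of $\Lambda_a$ corresponds to choosing the braiding $\varepsilon$ versus its reverse $\varepsilon^{-1}$, and that $\qd{\mathbb{Z}_2}$ is isomorphic to its own reverse as a braided category (which it is, via the $e \leftrightarrow m$ symmetry). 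Packaging this so that $F$ is manifestly braided, rather than braided-up-to-reversal, is where the care is needed; alternatively one can invoke the uniqueness classification of this particular modular tensor category~\cite{MR2544735} to avoid tracking the sign by hand, but I would prefer the explicit construction so that the equivalence is given concretely.
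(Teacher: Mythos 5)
Your proposal is correct and follows essentially the same route as the paper: an explicit functor matching the four one-dimensional simples, forced on Hom-spaces by Schur's lemma, with identity monoidal constraints, and with the real content being the verification that the braiding data (the matrix of $\varepsilon_{k,l}$ signs and the twists) agrees with that of $\Rep_f\qd{\mathbb{Z}_2}$. The only minor difference is in handling the sign ambiguity in $\varepsilon_{X,Z}$: the paper simply fixes representatives $\rho^X,\rho^Z$ with $\varepsilon_{X,Z}=-I$ so as to match the $R$-matrix convention directly, whereas you propose the slightly more roundabout (but equally valid) route of checking that both conventions yield the same braided category via the $e\leftrightarrow m$ symmetry.
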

\begin{proof}
  Since $\mathbb{Z}_2$ is abelian, the irreducible representations of $\qd{\mathbb{Z}_2}$ are labeled by the elements $e,f$ of $\mathbb{Z}_2$ and $\chi_e, \chi_\sigma$ of the dual group $\widehat{\mathbb{Z}_2}$~\cite{MR1797619,MR1128130}. Here $\chi_e$ and $\chi_\sigma$ denote the trivial and the sign character of $\mathbb{Z}_2$ respectively. Write $V_{g,\chi}$ for the irreducible $\qdz$-module induced by an element $g$ and character $\chi$. We obtain the following list of all irreducible modules of $\qdz$:
\[
\Pi_0 = V_{e, \chi_e} , \quad \Pi_X = V_{f, \chi_e}, \quad \Pi_Y = V_{f, \chi_\sigma}, \quad \Pi_Z = V_{e, \chi_\sigma}.
\]
Recall that using the coproduct of $\qdz$ the tensor product $\Pi_i \otimes \Pi_j$ can be made into a left $\qdz$-module. The tensor product has the same fusion rules as $\Delta(\Lambda)$, e.g. $\Pi_X \otimes \Pi_Y \cong \Pi_Z$ and $\Pi_k \otimes \Pi_0 \cong \Pi_0 \otimes \Pi_k \cong \Pi_k$.

On the side of $\Delta(\Lambda)$, choose paths of type $X,Z$ such that the corresponding automorphisms $\rho^X, \rho^Z$ satisfy $\varepsilon_{X,Z} = -I$. Define $\rho^Y = \rho^X \otimes \rho^Z$, and $\rho^0 = \iota$, the trivial endomorphism. Note that each irreducible representation in $\Delta(\Lambda)$ is unitarily equivalent to one of the $\rho^k$. This suggests to define  a functor $F: \Rep_f \qdz \to \Delta(\Lambda)$ as follows: for irreducible modules, the most natural choice is to set $F(\Pi_k) = \rho^k$ for $k=0,X,Y,Z$. The irreducible modules have dimension one, hence the $\qdz$-linear maps between the irreducible modules are just the scalars. In order for $F$ to be a linear functor, there is essentially only one choice of $F(T)$ for a morphism $T$. Note that $F$ is full and faithful on the Hom-sets of irreducible objects. By construction every irreducible object of $\Delta(\Lambda)$ is isomorphic to an object in the image of $F$.

In fact, $F$ is a braided monoidal functor. By our particular choice of $\rho^X$, $\rho^Y$ and $\rho^Z$, one can choose the natural transformations $F(V \otimes W) \to F(V) \otimes F(W)$, needed for the definition of a monoidal functor, to be identities. To see that $F$ is indeed a \emph{braided} functor, recall that for $\pi_1,\pi_2 \in \Rep_f \qdz$, the braiding $c_{\pi_1,\pi_2}$ is the linear map intertwining $\pi_1 \otimes \pi_2$ and $\pi_2 \otimes \pi_1$ defined by $c_{\pi_1, \pi_2} = \sigma \circ (\pi_1\otimes \pi_2)(R) $. Here $\sigma$ is the canonical flip and $R$ is a universal $R$-matrix for $\qdz$. It is then straightforward to verify that for irreducible modules, $F$ sends the braiding of $\Rep_f \qdz$ to that of $\Delta(\Lambda)$. For example, $c_{\Pi_X, \Pi_Z} = -1$ (where we omit the isomorphism of the underlying vector spaces).

The extension of the functor to direct sums is left to the reader, as is the verification that $F$ preserves all the relevant structures of a braided tensor $C^*$-category. Since the irreducible objects of both categories are in 1-1 correspondence, and the functor $F$ preserves direct sums and braidings, $F$ sets up an equivalence of braided tensor $C^*$-categories. Note, for example, that $F$ is full, faithful and essentially surjective. Indeed, it is tedious but relatively straightforward to define an inverse functor setting up the equivalence.
\end{proof}

\emph{Acknowledgments:} This research is funded by the Netherlands Organization for Scientific Research (NWO) grant no. 613.000.608. I would like to thank M. Fannes for a discussion on the construction of the ground state, P. Fendley for the idea that single excitations can be obtained by moving one excitation of a pair to infinity and M. M\"uger and N.P. Landsman for helpful discussions and a critical reading of the manuscript. Professors D. Buchholz and K. Fredenhagen gave useful references at the 27$^{th}$ LQP Workshop in Leipzig, where this work was presented. An anonymous referee pointed out a gap in the proof of the spectral gap in an earlier version, as well as a suggestion on how to fix it.

\bibliographystyle{abbrv}
\bibliography{refs}

\end{document}